\documentclass[11pt]{article}

\usepackage{amsmath,amssymb,amsthm,mathtools}
\usepackage[T1]{fontenc}
\usepackage[expansion=false]{microtype}
\usepackage{graphicx}
\usepackage{epstopdf}
\usepackage{tikz}
\usetikzlibrary{decorations.pathmorphing,calc}
\usepackage{cite}
\usepackage{caption}
\usepackage{subcaption}
\usepackage[a4paper,margin=1.05in]{geometry}
\usepackage{enumitem}
\usepackage{bm}
\usepackage{authblk}
\usepackage[colorlinks=true,linkcolor=blue!55!black,citecolor=blue!55!black,urlcolor=blue!55!black]{hyperref}

\usepackage{abstract}

\theoremstyle{plain}
\newtheorem{theorem}{Theorem}[section]
\newtheorem{lemma}[theorem]{Lemma}
\newtheorem{proposition}[theorem]{Proposition}
\newtheorem{corollary}[theorem]{Corollary}

\theoremstyle{definition}

\newtheorem{remark}[theorem]{Remark}

\newcommand{\R}{\mathbb{R}}
\newcommand{\Z}{\mathbb{Z}}
\newcommand{\N}{\mathbb{N}}

\newcommand{\eps}{\varepsilon}
\newcommand{\dd}{\mathrm{d}}
\newcommand{\Hadamard}{\circ}
\newcommand{\Sp}{S_{+}}
\newcommand{\Sm}{S_{-}}
\newcommand{\norm}[1]{\left\lVert #1 \right\rVert}
\newcommand{\abs}[1]{\left\lvert #1 \right\rvert}
\newcommand{\ip}[2]{\left( #1,\, #2 \right)}

\newcommand{\Hper}{H^{2}_{\mathrm{per}}}

\DeclareMathOperator{\Ker}{Ker}
\DeclareMathOperator{\Coker}{Coker}
\DeclareMathOperator{\spanop}{span}
\newcommand{\notapprox}{\not\approx}
\newcommand{\ord}[1]{{<#1>}}

\newcommand{\fourfreq}{(\omega^{(a)})^2,\dots,(\omega^{(d)})^2}
\newcommand{\optband}{(2k_2,2k_1+2k_2)}
\newcommand{\bothbands}{(0,2k_1)\cup(2k_2,2k_1+2k_2)}

\begin{document}

\title{Generalization of a localized-state formation mechanism\\
in finite lattices with interaction nonlinearity}

\author[1]{Huajie Song\thanks{Email: \texttt{songhj@hust.edu.cn} }}
\author[1]{Haitao Xu\thanks{Corresponding author. Email: \texttt{hxumath@hust.edu.cn} }}
\affil[1]{School of Mathematics and Statistics, Huazhong University of Science and Technology, Wuhan 430074, China}
\date{}

\maketitle

\begin{abstract}
We study how time-periodic, spatially localized states are born from the linear spectrum of a
\emph{finite} lattice as the nonlinearity is switched on. In earlier work we treated this question
for a diatomic chain with on-site nonlinearity and developed a framework that continues a
near-edge linear mode in amplitude and controls the resulting perturbation series uniformly in
the chain length. The present paper shows that the same framework applies to the more difficult
case of Fermi--Pasta--Ulam--Tsingou (FPUT) interaction nonlinearity. The key is a structural
relation between the FPUT and on-site nonlinearities, which allows the estimates obtained in the
on-site setting to be transferred to the FPUT setting. As before, the analysis yields a
quantitative radius of convergence, $\eps=\Theta(1/\sqrt{n})$ for a chain of length $2n$, below
which the near-edge mode stays extended and above which the orbit localizes and its frequency
leaves the band. The diatomic chain is used only as a test case; both the formation mechanism and
the method are model-independent and are expected to extend to other short-range nonlinearities
and to higher dimensions.

\medskip
\noindent\textbf{Keywords:} finite-size lattices; localized-state formation; radius of
convergence; Lyapunov--Schmidt reduction; small divisors; intrinsic localized modes;
Fermi--Pasta--Ulam--Tsingou interactions; diatomic chains.
\end{abstract}

\section{Introduction}\label{sec:intro}

A central question in nonlinear lattice dynamics is the mechanism by which spatially localized
states form. In most existing work the starting point is already a localized object: discrete
breathers are continued from the anti-continuum limit \cite{MacKayAubry,FlachGorbach}, spatial
dynamics and variational methods construct exponentially localized states directly
\cite{JamesNoble2004,Vainchtein2022}, and topological edge states are continued into the nonlinear
regime \cite{HasanKane,SSH,WangBertoldi,SmirnovaNonlinear,ChaunsaliStrong}. In all of these the
localized character is present from the outset.

The mechanism we study is different: we begin from an \emph{extended} linear mode near a band
edge and ask how the nonlinearity turns it into a localized state. As the amplitude grows the
mode's frequency drifts, and once it leaves the band the state is no longer extended. The object
of interest is this transition and the amplitude at which it occurs. This question is meaningful
only for a finite system. On an infinite lattice the bands are continuous, there is no isolated
near-edge frequency to continue, and the question cannot even be posed; much existing work also
relies on idealizations such as the infinite or periodic lattice and the continuum approximation
\cite{HadadKhanikaev,LeykamChong}. In a finite system the near-edge frequency is isolated and sits
at a positive distance from the band edge, so continuing it is a well-posed small-divisor problem,
and the natural quantity is the radius of convergence of the amplitude expansion: how far the
extended mode can be continued before it ceases to be extended.

Our interest is in this mechanism rather than in any one model. To make it concrete we work with a
finite one-dimensional diatomic chain, which we regard as a test case and not as a restriction; we
expect the mechanism and the method to apply to other short-range nonlinearities and to higher
dimensions. In earlier work \cite{SongXuArxiv}, building on the linear spectral analysis of
\cite{SongXuJPA}, we treated the chain with on-site (substrate) nonlinearity, branching a
time-periodic orbit off a near-edge mode by a Lyapunov--Schmidt reduction and the implicit
function theorem \cite{Kielhofer,Kuznetsov} and controlling the amplitude expansion uniformly in
the chain length $2n$. The analysis is perturbative: it controls the expansion up to its radius of
convergence, which is itself the quantity of interest and already captures the onset of
localization, rather than following the branch deep into the gap. A central conclusion was that
the radius of convergence must be estimated at the same order as the spectral gaps that control
it; the two scales coincide, both being $\Theta(1/\sqrt n)$.

The present paper shows that this framework is not tied to the on-site nonlinearity. We replace
the on-site cubic by an FPUT cubic \cite{FPUT}, which acts on the relative displacement of
neighbouring particles, that is, on the strain of each spring; this interaction nonlinearity is
the basic model of anharmonic forces in mechanical, acoustic, granular, and metamaterial lattices.
It is also the harder of the two cases, because the FPUT force couples each particle to its
neighbours through differences, so the quantities to be estimated are products of \emph{differences}
of eigenmodes rather than of single eigenmodes. That the framework nonetheless applies rests on a
structural relation between the two nonlinearities, made precise in Section~\ref{sec:middle}:
the inner products that enter the perturbation analysis agree for the two problems up to a small,
controllable error, so the cancellation estimates built for the on-site problem can be reused. The
same relation shows that a single boundary configuration supports two coexisting localized
families, one concentrated in the middle of the chain and one at the boundary, selected by an
arithmetic condition on the indices. We anchor the continuation at the lower optical edge for
definiteness; the construction needs only an isolated near-edge frequency at a positive
$\Theta(1/n^2)$ distance from a band edge, which holds at every edge of the finite chain.

The paper is organized as follows. Section~\ref{sec:prelim} fixes the model and the linear
spectral facts and introduces the objects $\Sp,\Sm,M$ used for the FPUT nonlinearity.
Section~\ref{sec:continuation} carries out the Lyapunov--Schmidt reduction and sets up the
perturbation hierarchy. Sections~\ref{sec:middle} and~\ref{sec:boundary} establish the
middle-localized and edge-localized families. Section~\ref{sec:conclusion} states our conclusions,
and the technical proofs are collected in the appendices.

\section{The model and preliminaries}\label{sec:prelim}

\subsection{The finite FPUT-type diatomic chain}
We consider a chain of $2n$ identical unit masses indexed $j=1,\dots,2n$, connected by
alternating interior springs of stiffness $k_1$ and $k_2$, and clamped at both ends by boundary
springs of stiffness $k_{3,1}$ (left) and $k_{3,2}$ (right) to fixed walls $q_0=q_{2n+1}=0$.
Each spring carries, in addition to its linear (harmonic) force, a cubic restoring force that
depends on the relative displacement of its two endpoints. The equations of motion are
\begin{equation}\label{eq:fput}
\begin{aligned}
\ddot q_{1}&=k_{3,1}(0-q_1)+k_1(q_2-q_1)-(0-q_1)^3-(q_2-q_1)^3,\\
\ddot q_{2m}&=k_1(q_{2m-1}-q_{2m})+k_2(q_{2m+1}-q_{2m})-(q_{2m-1}-q_{2m})^3-(q_{2m+1}-q_{2m})^3,\\
\ddot q_{2m+1}&=k_2(q_{2m}-q_{2m+1})+k_1(q_{2m+2}-q_{2m+1})-(q_{2m}-q_{2m+1})^3-(q_{2m+2}-q_{2m+1})^3,\\
\ddot q_{2n}&=k_1(q_{2n-1}-q_{2n})+k_{3,2}(0-q_{2n})-(q_{2n-1}-q_{2n})^3-(0-q_{2n})^3,
\end{aligned}
\end{equation}
for $1\le m\le n-1$ (see Figure~\ref{fig:schematic}). In vector form,
\begin{equation}\label{eq:fput-matrix}
\ddot q = Lq+N(q),\qquad q=(q_1,\dots,q_{2n})^{\top},
\end{equation}
where the symmetric tridiagonal matrix $L$ is the linear dynamical operator of the clamped
diatomic chain and $N(q)$ collects the cubic interparticle (FPUT) forces. The conserved energy is
\begin{equation}\label{eq:energy}
\begin{aligned}
E={}&\sum_{j=1}^{2n}\tfrac12\dot q_j^{2}
+\sum_{j=1}^{n}\tfrac{k_1}{2}(q_{2j-1}-q_{2j})^{2}
+\sum_{j=1}^{n-1}\tfrac{k_2}{2}(q_{2j+1}-q_{2j})^{2}
+\tfrac{k_{3,1}}{2}q_1^{2}+\tfrac{k_{3,2}}{2}q_{2n}^{2}\\
&-\sum_{j=1}^{n}\tfrac14(q_{2j-1}-q_{2j})^{4}
-\sum_{j=1}^{n-1}\tfrac14(q_{2j+1}-q_{2j})^{4}
-\tfrac14 q_1^{4}-\tfrac14 q_{2n}^{4}.
\end{aligned}
\end{equation}

\begin{figure}[t]
\centering
\includegraphics[width=1\linewidth]{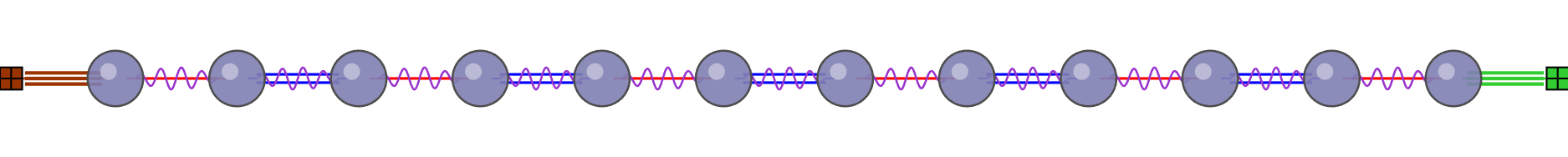}
\caption{Schematic of the finite FPUT-type diatomic chain \eqref{eq:fput} of length $2n$. The
masses are identical; the straight single and double lines denote the linear interior springs
$k_1,k_2$, the springs at the two ends denote the clamped boundary springs $k_{3,1},k_{3,2}$ to
fixed walls, and the curved springs denote the cubic FPUT nonlinearity, which acts on the
relative displacement of each pair of connected sites.}
\label{fig:schematic}
\end{figure}

We seek time-periodic solutions. Setting $\tau=\omega t$ and $Q(\tau)=q(\tau/\omega)=q(t)$, the
chain \eqref{eq:fput} becomes
\begin{equation}\label{eq:fput-rescaled}
\omega^2\,\frac{\dd^2}{\dd\tau^2}Q(\tau)=LQ(\tau)+N(Q(\tau)),
\qquad Q\in\Hper(\R/2\pi\Z\times\R^{2n}).
\end{equation}
Throughout we use the asymptotic notation and conventions of \cite{SongXuJPA,SongXuArxiv}: in
particular $A\lesssim B$ means $A\le CB$ and $A=\Theta(B)$ means $c_1B\le A\le c_2 B$, for
positive constants independent of the chain length $n$. We assume $k_1<k_2$ and work in the
\emph{long-chain regime}: $n$ exceeds a threshold $n_0$ determined by the model parameters, so
that the decay factors of the edge states are negligible (in the numerical examples $n\ge50$
suffices).

\subsection{Linear spectral facts}\label{sec:linear}
Setting the nonlinearity to zero in \eqref{eq:fput-matrix} leaves the linear chain
$\ddot q=Lq$. Looking for normal modes $q(t)=u\,e^{i\omega t}$ turns this into the eigenvalue
problem
\begin{equation}\label{eq:eigenproblem}
Lu^{(j)}=-(\omega^{(j)})^2u^{(j)},\qquad 1\le j\le 2n,
\end{equation}
where $-(\omega^{(j)})^2$ are the eigenvalues of the symmetric matrix $L$ and $u^{(j)}$ the
corresponding eigenvectors, which we take real and in the standard form fixed below. The spectral
theory of $L$ in the long-chain limit was developed in \cite{SongXuJPA} (for the matrix-analytic
background see \cite{HornJohnson,Parlett}); we recall only what is needed. The squared
frequencies $(\omega^{(j)})^2$ fill two bands,
\[
\text{acoustic band }(0,2k_1),\qquad\text{optical band }(2k_2,2k_1+2k_2),
\]
separated by the gap $(2k_1,2k_2)$, and in addition there are at most two isolated frequencies
lying outside the bands, which correspond to localized (edge) states \cite{SongXuJPA}. We carry
out the continuation from a frequency just above the lower edge $2k_2$ of the optical band; as
explained in Section~\ref{sec:intro}, this is a representative choice, and the same construction
applies at any band edge. Following the convention of \cite{SongXuJPA}, we order the optical
near-edge frequencies upward from the lower optical edge,
\begin{equation}\label{eq:ordering}
2k_2<(\omega^{(1)})^2<(\omega^{(2)})^2<\cdots<(\omega^{(n_1)})^2<2k_1+2k_2 ,
\end{equation}
and the acoustic frequencies downward from the upper acoustic edge,
\begin{equation}\label{eq:ordering-ac}
0<(\omega^{(2n)})^2<(\omega^{(2n-1)})^2<\cdots<(\omega^{(2n-n_2+1)})^2<2k_1,
\end{equation}
so that $n_1+n_2$ together with the (at most two) isolated frequencies accounts for all $2n$
modes. The near-edge optical frequencies in \eqref{eq:ordering} are the ones used in the
continuation; the acoustic frequencies \eqref{eq:ordering-ac} and the isolated frequencies also
enter the estimates below.

\smallskip\noindent\textbf{Eigenmode parametrization.}
For frequencies near the lower optical edge, the eigenvector $u^{(k)}$ admits the explicit
trigonometric form (cf.\ \cite{SongXuJPA})
\begin{equation}\label{eq:eigform}
\begin{pmatrix}u^{(k)}_{2j-1}\\[1mm]u^{(k)}_{2j}\end{pmatrix}
=(-1)^{j-1}
\begin{pmatrix}\sin\!\big(\Delta\alpha^{(k)}+\Delta\beta^{(k)}-(j-1)\Delta\theta^{(k)}\big)\\[1mm]
\sin\!\big(-\Delta\alpha^{(k)}+\Delta\beta^{(k)}-(j-1)\Delta\theta^{(k)}\big)\end{pmatrix},
\end{equation}
where the small parameters $\Delta\theta^{(k)},\Delta\alpha^{(k)},\Delta\beta^{(k)}$ measure the
distance from the band edge. In the two boundary regimes relevant below,
\begin{equation}\label{eq:dtheta}
\Delta\theta^{(k)}\approx\frac{k\pi}{n-1}\quad\text{(general right boundary)},\qquad
\Delta\theta^{(k)}\approx\frac{(2k-1)\pi}{2(n-1)}\quad\text{(special right boundary)},
\end{equation}
for $k\ll n$, with $\Delta\alpha^{(k)},\Delta\beta^{(k)}=\Theta(k/n)$. In particular
$\abs{\Delta\theta^{(k)}}=\Theta(k/n)$ for $k\ll n$, and the near-edge gaps obey
\begin{equation}\label{eq:gap}
(\omega^{(k)})^2-(\omega^{(1)})^2\approx\frac{k_1k_2(k^2-1)\pi^2}{2(k_2-k_1)(n-1)^2}
=\Theta\!\Big(\frac{k^2}{n^2}\Big),
\end{equation}
which is the quantitative statement that the optical band becomes a quasi-continuum of width
$\Theta(1/n^2)$ at its edge. The leading mode $u^{(1)}$ is a near-edge \emph{extended} state, not
an edge state; this is the object we continue into the nonlinear regime.

\smallskip\noindent\textbf{Norm and inner-product estimates.}
We use the following facts, established for the present boundary regimes in \cite{SongXuJPA}.
Here and below the eigenvectors $u^{(j)}$ are taken in the standard trigonometric form
\eqref{eq:eigform}, in which the leading coefficient equals $1$. For an in-band index
($(\omega^{(k)})^2$ in the acoustic or optical band) and any $\eps>0$,
\begin{equation}\label{eq:normest}
\big|\,\abs{u^{(k)}}^2-n\,\big|\lesssim n^{\eps},
\end{equation}
so that in this scaling the in-band eigenvectors have squared norm $\Theta(n)$. The
estimate \eqref{eq:normest} is the worst case over the whole band; for frequencies close to the
band edge, that is for $k\ll n$, the deviation is in fact smaller, and the error term in
\eqref{eq:normest} can be replaced by an $O(1)$ bound. Since the continuation is anchored at such
near-edge frequencies, we use this sharper form where it is needed. Where a unit-length vector is
required we divide by $\abs{u^{(k)}}$ explicitly. The Hadamard product is
$(\,x\Hadamard y\,)_i=x_iy_i$. With $u^{(j)}$ in this standard form, the cubic inner products of
in-band eigenvectors obey the following case distinction.

\begin{lemma}[Hadamard cubic inner products, {\cite[Lemma~4.5]{SongXuArxiv}}]\label{lem:hadamard}
Let $u^{(j)}$ ($1\le j\le 2n$) be the eigenvectors of $L$ associated with the eigenvalues
$-(\omega^{(j)})^2$ (these comprise the extended states of the form \eqref{eq:eigform} in both the
acoustic and optical bands, together with the at most two edge states), and let
$a,b,c,d\in\{1,\dots,2n\}$.
\begin{enumerate}[label=\textup{(\roman*)},leftmargin=1.6em]
\item If $(\omega^{(a)})^2,(\omega^{(b)})^2,(\omega^{(c)})^2,(\omega^{(d)})^2\in
(0,2k_1)\cup(2k_2,2k_1+2k_2)$, then
$\bigl|\ip{u^{(a)}u^{(b)}u^{(c)}}{u^{(d)}}\bigr|\le 2n$.
\item If at least one of the four squared frequencies lies outside $[0,2k_1]\cup[2k_2,2k_1+2k_2]$,
then $\bigl|\ip{u^{(a)}u^{(b)}u^{(c)}}{u^{(d)}}\bigr|\le\Theta(1)$.
\item If $k:=\max\{a,b,c,d\}\ll n$ and the four squared frequencies all lie in
$(2k_2,2k_1+2k_2)$, then for $k_{3,1}=\Theta(1)\notapprox2k_2$, $k_{3,2}=\Theta(1)\notapprox2k_2$, calling the
quadruple $(a,b,c,d)$ \emph{balanced} when some choice of signs gives $a\pm b\pm c\pm d=0$,
\[
\bigl|\ip{u^{(a)}u^{(b)}u^{(c)}}{u^{(d)}}\bigr|=
\begin{cases}
\Theta(1), & (a,b,c,d)\ \text{not balanced},\\[1mm]
\Theta(n), & (a,b,c,d)\ \text{balanced};
\end{cases}
\]
\item If the four squared frequencies lie in $(2k_2,2k_1+2k_2)$ and
$\max\{a,b,c\}\ll n^{1-\eps}<d<n_1-a-b-c$, then
$\ip{u^{(a)}u^{(b)}u^{(c)}}{u^{(d)}}\lesssim n^{2\eps}$.
\end{enumerate}
\end{lemma}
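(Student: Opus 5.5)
We reduce each of the four parts to explicit trigonometric sums, using the standard form \eqref{eq:eigform} for in-band modes and the exponential decay of edge states; this is the structure of \cite[Lemma~4.5]{SongXuArxiv}.

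\textbf{Parts (i) and (ii).} For (i), the standard form gives $\abs{u^{(j)}_i}\le1$ for every in-band eigenvector. Hence
\[
\bigl|\ip{u^{(a)}u^{(b)}u^{(c)}}{u^{(d)}}\bigr|\le\sum_{i=1}^{2n}\abs{u^{(a)}_iu^{(b)}_iu^{(c)}_iu^{(d)}_i}\le 2n.
\]
For (ii), suppose one index, say $d$, is isolated. Then $u^{(d)}$ is an edge state whose components decay like $\rho^{\,i}$ from one end, with $\rho<1$ fixed by the parameters. The other three factors are bounded, either by $1$ or by the edge-state normalisation. The sum is therefore dominated by a convergent geometric series and is $\Theta(1)$ uniformly in $n$.

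\textbf{Part (iii).} This is the substantive case, and we proceed in three steps.
\begin{enumerate}[label=\textup{(\alph*)}]
\item Insert \eqref{eq:eigform} for all four vectors. The alternating sign $(-1)^{j-1}$ appears four times and cancels.
\item Expand each product of four sines into a sum of cosines,
\[
\cos\!\Big(\phi_\pm-(j-1)\,(\Delta\theta^{(a)}\pm\Delta\theta^{(b)}\pm\Delta\theta^{(c)}\pm\Delta\theta^{(d)})\Big),
\]
where the phases $\phi_\pm$ combine the $\Delta\alpha$'s and $\Delta\beta$'s.
\item Sum over $j=1,\dots,n$ with the Dirichlet-kernel formula. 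Each cosine contributes
\[
\frac{\sin\bigl(n\Phi/2\bigr)}{\sin\bigl(\Phi/2\bigr)}\times\cos(\cdot).
\]
\end{enumerate}
By \eqref{eq:dtheta}, $\Phi=\Delta\theta^{(a)}\pm\dots\pm\Delta\theta^{(d)}$ equals $\pi(a\pm b\pm c\pm d)/(n-1)$ up to corrections of size $O(k/n^2)$. These corrections come from the boundary-dependent shifts. The assumption $k_{3,i}\notapprox 2k_2$ keeps these shifts uniformly of this order, so the expansion in \eqref{eq:dtheta} does not degenerate.

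If the quadruple is not balanced, every $\Phi$ has $\abs{\Phi}\ge c/n$ and stays away from multiples of $2\pi$. Each Dirichlet sum is then $O(1)$. The upper bound follows, and the lower bound $\Theta(1)$ comes from tracking the leading boundary term.

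If the quadruple is balanced, the matching sign choice gives $\Phi=O(k/n^2)$. That term sums to $n\cos(\phi)+O(1)$. Since $\Delta\alpha,\Delta\beta=O(k/n)$, the phase satisfies $\phi=O(k/n)$, so $\cos\phi\approx1$ and the contribution is $\Theta(n)$. We must also rule out cancellation when several sign choices are balanced at once. Because all coefficients have the same sign to leading order, the $\Theta(n)$ terms add rather than cancel.

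\textbf{Part (iv).} The same expansion applies. Since $d>n^{1-\eps}\gg a+b+c$, every combination has $\abs{a\pm b\pm c\pm d}\ge d-a-b-c\gtrsim n^{1-\eps}$. The condition $d<n_1-a-b-c$ keeps $\Phi/2$ away from $\pi$. Each Dirichlet sum is therefore bounded by $1/\abs{\sin(\Phi/2)}\lesssim n^{\eps}$.

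The extra factor of $n^{\eps}$ comes from $d$ not being near the edge. For such $d$ the expansion \eqref{eq:eigform} has amplitude corrections rather than pure sines, and the sharper $O(1)$ norm estimate of \eqref{eq:normest} is unavailable. Absorbing these corrections costs one more factor $n^{\eps}$, which gives the bound $n^{2\eps}$.

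\textbf{Main obstacle.} The hardest step is the precise asymptotics of $\Delta\theta,\Delta\alpha,\Delta\beta$ needed in (iii) and (iv). The lower bound in the non-balanced case and the control for mid-band $d$ both depend on them. We would first try to import these asymptotics directly from the secular equation of \cite{SongXuJPA}.
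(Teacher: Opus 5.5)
\textbf{Gap: the non-balanced estimate in (iii).} Your parts (i), (ii) and (iv) are essentially fine, but the core of (iii) rests on a false estimate. You observe that in the non-balanced case $|\Phi|\ge c/n$ and conclude that each Dirichlet sum is $O(1)$. A lower bound $|\Phi|\gtrsim 1/n$ only gives $|\sin(n\Phi/2)/\sin(\Phi/2)|\lesssim n$, and that bound is attained here. Write $(n-1)\Phi=m\pi+\widetilde\Phi$ with $m=a\pm b\pm c\pm d$ and $\widetilde\Phi=O(k/n)$. Then $\sin(n\Phi/2)=\sin(m\pi/2+O(k/n))$, so for odd $m$ the kernel has size $\approx 2n/(\pi|m|)$ (compare $\sum_j\sin(j\pi/(n-1))\approx 2n/\pi$).

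Odd $m$ cannot be avoided. Every $m$ has the parity of $a+b+c+d$, so any quadruple with $a+b+c+d$ odd, e.g.\ $(1,1,1,2)$, is non-balanced. For such a quadruple your argument bounds each of the eight cosine terms only by $O(n)$. The same defect breaks the balanced case. With only $O(n/|m|)$ control on the non-balanced sign choices, you cannot rule out that they cancel the $\Theta(n)$ contribution of the balanced ones.

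\textbf{The missing cancellation.} What you need comes from the near-commensurability $(n-1)\Delta\theta^{(a)}=a\pi+\widetilde\theta^{(a)}$ together with the small boundary phases. This is exactly the cosine factor you wrote down but never used. Each sublattice sum equals $\frac{\sin(n\Phi/2)}{\sin(\Phi/2)}\cos\bigl(\phi-\tfrac{m\pi}{2}-\tfrac{\widetilde\Phi}{2}\bigr)$, and you should split by the parity of $m$:
\begin{itemize}
\item For even $m\neq0$, the numerator $\sin(n\Phi/2)$ is $O(k/n)$.
\item For odd $m$, the kernel is $\Theta(n/|m|)$, but the cosine becomes $\pm\sin(\phi-\widetilde\Phi/2)=O(k/n)$.
\end{itemize}
Either way each term is $O(k/|m|)$, which is $O(1)$ for bounded indices. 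Uniformity in $k\ll n$ needs finer phase information. To leading order $\Delta\alpha^{(a)},\Delta\beta^{(a)},\widetilde\theta^{(a)}$ are proportional to $a$, so the residual phase is proportional to $m/n$ and cancels the factor $1/|m|$. These are the cancellation estimates that the paper does not reprove but imports from \cite[Lemma~4.5]{SongXuArxiv}.

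\textbf{Smaller points.}
\begin{itemize}
\item Your claim that simultaneously balanced sign choices add rather than cancel is true but needs justification. The term $\cos(x_1+\epsilon_2x_2+\epsilon_3x_3+\epsilon_4x_4)$ carries the coefficient $\epsilon_2\epsilon_3\epsilon_4$. For positive indices, two balanced sign vectors always have the same number of minus signs mod $2$, so their coefficients agree.
\item The non-balanced lower bound is not obtained by \emph{tracking the leading boundary term}. For the model vectors $\sin(a\pi j/N)$, $1\le j\le N-1$, the $(1,1,1,2)$ sum vanishes identically by the reflection $j\mapsto N-j$. Claim only the upper bound, which is all that is used later.
\item In (iv) your kernel bound already gives $O(n^{\eps})$. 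The extra factor you attribute to amplitude corrections and to \eqref{eq:normest} is spurious, since the inner product is unnormalized and the standard form has unit-amplitude sines on both sublattices. It is harmless, though.
\end{itemize}
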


The interpretation of (iii) is essential: the on-site cubic inner product is generically
$\Theta(1)$, except on the resonant set where the four indices satisfy a parity-balanced linear
relation, where it jumps to $\Theta(n)$. These resonances are exactly the small divisors that
control the nonlinear continuation.

\subsection{The shift operators and the strain trilinear form}
The FPUT force at a site depends on how much the two springs attached to that site are stretched,
that is, on the relative displacements of neighbouring particles. To write this compactly we use
the shift operators $\Sp,\Sm$ on $\R^{2n}$ defined by
\begin{equation}\label{eq:shift}
(\Sp v)_j=v_{j+1},\qquad (\Sm v)_j=v_{j-1},\qquad v_0=v_{2n+1}=0,
\end{equation}
so that $(\Sp v-v)_j=v_{j+1}-v_j$ and $(v-\Sm v)_j=v_j-v_{j-1}$ are the strains of the right and
left springs at site $j$. We write $x\Hadamard y$ for the Hadamard (componentwise) product,
$(x\Hadamard y)_i=x_iy_i$, and abbreviate the threefold Hadamard product as
$x^{\Hadamard3}=x\Hadamard x\Hadamard x$, i.e.\ $(x^{\Hadamard3})_i=x_i^3$. With this notation the
cubic FPUT force \eqref{eq:fput} reads, in vector form,
\begin{equation}\label{eq:Nform}
N(Q)=(Q-\Sm Q)^{\Hadamard3}-(\Sp Q-Q)^{\Hadamard3},
\end{equation}
the difference between the cubed strains of the left and right springs at each site.

In the on-site problem the corresponding cubic term was the Hadamard cube $u^{\Hadamard3}$, and
the key estimates concerned inner products of the form
$\ip{u^{(a_1)}\Hadamard u^{(a_2)}\Hadamard u^{(a_3)}}{u^{(a_4)}}$. For the FPUT force the cube is
applied to the \emph{strains} rather than to the displacements, so the relevant object is the
symmetric trilinear form
\begin{equation}\label{eq:Mdef}
\begin{aligned}
M\!\left(u^{(a_1)},u^{(a_2)},u^{(a_3)}\right)
&=\big(u^{(a_1)}-\Sm u^{(a_1)}\big)\Hadamard
   \big(u^{(a_2)}-\Sm u^{(a_2)}\big)\Hadamard
   \big(u^{(a_3)}-\Sm u^{(a_3)}\big)\\
&\quad-\big(\Sp u^{(a_1)}-u^{(a_1)}\big)\Hadamard
   \big(\Sp u^{(a_2)}-u^{(a_2)}\big)\Hadamard
   \big(\Sp u^{(a_3)}-u^{(a_3)}\big),
\end{aligned}
\end{equation}
of which \eqref{eq:Nform} is the special case with all three arguments equal: $M(u,u,u)=N(u)$.
The FPUT analysis thus requires estimates of
$\ip{M(u^{(a_1)},u^{(a_2)},u^{(a_3)})}{u^{(a_4)}}$ in place of the on-site Hadamard inner products.
Lemma~\ref{lem:M-vs-hadamard} below shows that these two quantities are closely related, which is
the structural fact that makes the FPUT problem tractable.

\begin{remark}[Why the strains do not degenerate]\label{rem:edge}
The trilinear form \eqref{eq:Mdef} is built entirely from the discrete strains
$u^{(k)}-\Sm u^{(k)}$ and $\Sp u^{(k)}-u^{(k)}$, so the whole FPUT analysis would collapse if
these strains were small: $M$ would then be negligible and the reduction below would say nothing.
The point of this remark is that for the modes we use this does not happen. Near a band edge the
two sublattices of the diatomic cell move out of phase, so neighbouring displacements differ at
leading order and the strains have $\Theta(1)$ leading components; hence \eqref{eq:Mdef} is
non-degenerate. The precise componentwise sizes are recorded in
\eqref{eq:strain-odd}--\eqref{eq:strain-even} and are the quantitative input for the estimates of
Appendix~\ref{app:reduction}. Only this non-degeneracy is needed, and it holds at every band edge
except the long-wavelength acoustic edge ($\omega^2\to0$), where neighbouring sites move nearly in
phase and the strains vanish; in particular it holds at the lower optical edge we continue from,
so the construction is not tied to that particular choice (cf.\ Section~\ref{sec:intro}).
\end{remark}

\section{Lyapunov--Schmidt reduction and the perturbation hierarchy}\label{sec:continuation}

We branch off the linear extended state $u^{(1)}$ at the lower optical edge. Because
$2\omega^{(1)}\approx2\sqrt{2k_2}>\sqrt{2k_1+2k_2}$, the non-resonance condition
$\omega^{(j)}/\omega^{(1)}\notin\Z$ holds for $2\le j\le2n$, and the implicit function theorem
applies.

\begin{lemma}[Existence of the periodic branch]\label{lem:existence}
Near $(\omega,Q)=(\omega^{(1)},0)$ the nonlinear system \eqref{eq:fput-rescaled} possesses a
one-parameter family of time-periodic solutions.
\end{lemma}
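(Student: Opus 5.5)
We will prove Lemma~\ref{lem:existence} by the standard Lyapunov--Schmidt reduction for a Lyapunov-centre type problem, set up on a space where the time variable is restricted to carry the symmetry of the linear mode, followed by the implicit function theorem; this parallels the on-site argument of \cite{SongXuArxiv}. Write \eqref{eq:fput-rescaled} as $F(Q,\omega):=\omega^2\ddot Q-LQ-N(Q)=0$. We work in the subspace of $\Hper$ of functions that are even in $\tau$ and odd under $\tau\mapsto\tau+\pi$, i.e.\ $Q(\tau)=\sum_{\ell\ \mathrm{odd}}\hat Q_\ell\cos(\ell\tau)$. The target space is the corresponding subspace of $L^2$. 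Since $N$ is cubic and odd, $F$ preserves these subspaces, and $F$ is real-analytic (indeed polynomial) in $(Q,\omega)$ because $H^2_{\mathrm{per}}$ is an algebra in one time dimension. Restricting to cosine modes removes the time-translation degeneracy, and restricting to odd harmonics makes the non-resonance condition easy to verify.

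\textbf{Step 1: the linearization.} At $(Q,\omega)=(0,\omega^{(1)})$ the derivative is $A:=D_QF=(\omega^{(1)})^2\partial_\tau^2-L$, because $DN(0)=0$. We expand in the eigenbasis $u^{(j)}$ of $L$ and in the Fourier modes $\cos(\ell\tau)$. This gives the diagonal entries
\[
(\omega^{(j)})^2-\ell^2(\omega^{(1)})^2 .
\]
The entry vanishes exactly when $\omega^{(j)}=\ell\,\omega^{(1)}$. For $\ell=1$ this happens only for $j=1$, since the spectrum of $L$ is simple: $L$ is a Jacobi matrix with nonzero off-diagonal entries. For odd $\ell\ge3$ we have $\ell^2(\omega^{(1)})^2>9\cdot2k_2>2k_1+2k_2$, using $k_1<k_2$. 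The isolated frequencies must also be handled. By \cite{SongXuJPA} they lie in the gap or just above the optical band at a $\Theta(1)$ distance bounded by a constant determined by $k_{3,i}$. For $n\ge n_0$ we therefore still have $(\omega^{(j)})^2<9(\omega^{(1)})^2$, possibly after assuming the boundary stiffnesses are not so large that this fails. Checking this last point is the only place where the parameters enter. It then follows that $\Ker A=\spanop\{u^{(1)}\cos\tau\}$.

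\textbf{Step 2: Fredholm structure.} Away from the kernel, the entries satisfy $\abs{(\omega^{(j)})^2-\ell^2(\omega^{(1)})^2}\ge c\,\ell^2$ for large $\ell$ and are bounded away from zero otherwise. Hence $A$ is Fredholm of index zero, and $\Coker A$ is spanned by the same vector. We let $P$ be the projection onto $\Ker A$ and write $Q=s\,u^{(1)}\cos\tau/\abs{u^{(1)}}+W$ with $W\in(\Ker A)^\perp$. On the complement, $(I-P)A$ is invertible with bounded inverse. The only uniformity issue in $n$ is that the distance from $(\omega^{(1)})^2$ to $(\omega^{(2)})^2$ is $\Theta(1/n^2)$ by \eqref{eq:gap}. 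For the existence statement at fixed $n$ this distance is simply a positive constant, so no uniformity is needed here.

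\textbf{Step 3: the reduced problems.} The range equation $(I-P)F(s\phi+W,\omega)=0$ is solved by the implicit function theorem, giving $W=W(s,\omega)=O(s^3)$. The bifurcation equation reads
\[
\bigl(F(s\phi+W,\omega),\phi\bigr)=0,\qquad \phi=u^{(1)}\cos\tau/\abs{u^{(1)}} .
\]
After dividing by $s$ it becomes
\[
g(s,\omega):=(\omega^2-(\omega^{(1)})^2)\,\pi^{-1}\abs{\phi}^2\text{-factor}+O(s^2)=0.
\]
Its derivative in $\omega$ at $(0,\omega^{(1)})$ equals $-2\omega^{(1)}\norm{\phi}^2\neq0$, with the sign fixed by the convention $A=\omega^2\partial_\tau^2-L$. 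A second application of the implicit function theorem yields $\omega=\omega(s)$, and $Q(s)=s\phi+W(s,\omega(s))$ gives the one-parameter family. The main obstacle is Step 1, namely the non-resonance check against the isolated edge frequencies. The remaining steps are routine once the kernel is one-dimensional. We would verify the edge-frequency bound from the explicit edge-state eigenvalue formulas in \cite{SongXuJPA}.
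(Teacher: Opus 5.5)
Your argument is correct. It reaches the same endpoint as the paper: a Lyapunov--Schmidt reduction to a scalar bifurcation equation with nonzero $\omega$-derivative at $(\omega^{(1)},0)$, followed by the implicit function theorem. The difference is in how you handle the time symmetries. The paper works on all of $\Hper$ and keeps the two-dimensional kernel $\spanop\{u^{(1)}e^{\pm i\tau}\}$. It then uses the translation and reflection equivariance of $v(\omega,A,\bar A)$ to force $g_3\equiv0$ and to make $g_2$ depend only on $\abs{A}$. You instead build these symmetries into the function space (cosine series with odd harmonics only), so the kernel is one-dimensional from the outset and no phase elimination is needed.

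Your route is shorter and needs a weaker non-resonance hypothesis. It only requires $(\omega^{(j)})^2\ne\ell^2(\omega^{(1)})^2$ for odd $\ell$, whereas the paper's condition $\omega^{(j)}/\omega^{(1)}\notin\Z$ also constrains even multiples. The paper's full-space version gives slightly more than the lemma states. By uniqueness of $v$ and equivariance, every small periodic solution near $(\omega^{(1)},0)$ is a time translate of the even branch. Your subspace argument does not show this, but it is not needed for the existence claim.

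The step you flag as the main obstacle is easier than you suggest and does not need the edge-state formulas of \cite{SongXuJPA}. Apply Gershgorin's theorem to $-L$, which has diagonal $k_{3,1}+k_1,\,k_1+k_2,\dots,k_1+k_2,\,k_1+k_{3,2}$ and off-diagonal entries $-k_1,-k_2$. This gives $(\omega^{(j)})^2\le 2k_1+\max\{2k_2,k_{3,1},k_{3,2}\}$ for every $j$. Under the standing conditions $k_{3,1}<2k_2$, $k_{3,2}\le 2k_2$, every squared frequency is therefore at most $2k_1+2k_2<4(\omega^{(1)})^2$. In your odd-harmonic setting it even suffices that $\max\{k_{3,1},k_{3,2}\}\le 18k_2-2k_1$. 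This same fact is tacitly behind the paper's one-line check $2\omega^{(1)}>\sqrt{2k_1+2k_2}$, which looks only at the band and not at the isolated frequencies, so your caution there is justified.

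There is also a small slip in your bifurcation function. It should read $g(s,\omega)=\big((\omega^{(1)})^2-\omega^2\big)\norm{\phi}^2+O(s^2)$. With the factor $\big(\omega^2-(\omega^{(1)})^2\big)$ as you wrote it, the $\omega$-derivative would have the opposite sign to the $-2\omega^{(1)}\norm{\phi}^2$ you state, though only its nonvanishing matters.
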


\begin{proof}[Proof sketch]
Define $g(\omega,Q(\tau))=\big(\omega^2\tfrac{\dd^2}{\dd\tau^2}-L\big)Q(\tau)-N(Q(\tau))$ on
$\Hper(\R/2\pi\Z\times\R^{2n})$, so that $g(\omega,0)=0$. The linearization at the trivial branch,
$(\omega^{(1)})^2\tfrac{\dd^2}{\dd\tau^2}-L$, has, by the non-resonance condition
$\omega^{(j)}/\omega^{(1)}\notin\Z$,
\begin{equation}\label{eq:kernel}
\Ker\!\Big((\omega^{(1)})^2\tfrac{\dd^2}{\dd\tau^2}-L\Big)
=\spanop\{u^{(1)}e^{i\tau},u^{(1)}e^{-i\tau}\}
=\Coker\!\Big((\omega^{(1)})^2\tfrac{\dd^2}{\dd\tau^2}-L\Big).
\end{equation}
A Lyapunov--Schmidt reduction with respect to this two-dimensional kernel, using the translation
($\tau\mapsto\tau+\eta$) and reflection ($\tau\mapsto-\tau$) symmetries of
\eqref{eq:fput-rescaled} to eliminate the phase, reduces the problem to a single scalar
bifurcation equation $g_4(\omega,\abs{A})=0$ for the amplitude $\abs A$, with
$g_4(\omega^{(1)},0)=0$ and $\partial_\omega g_4|_{(\omega^{(1)},0)}\ne0$. The implicit function
theorem then yields $\omega=\omega(\abs A)$ and a corresponding solution in a neighbourhood of
$(\omega^{(1)},0)$. The full reduction, including the explicit form of the projected nonlinearity,
is carried out in Appendix~\ref{sec:proofs}.
\end{proof}

By analyticity, the branch admits the convergent-near-zero expansions
\begin{equation}\label{eq:expansion}
\begin{aligned}
\omega&=\omega^{\ord{0}}+\eps\,\omega^{\ord{1}}+\eps^2\omega^{\ord{2}}+\cdots,\\
Q(\tau)&=\eps\,Q^{\ord{0}}(\tau)+\eps^2Q^{\ord{1}}(\tau)+\eps^3Q^{\ord{2}}(\tau)+\cdots,
\end{aligned}
\end{equation}
with $\omega^{\ord{0}}=\omega^{(1)}$ and
$Q^{\ord{0}}(\tau)=\frac{u^{(1)}}{\abs{u^{(1)}}}(e^{i\tau}+e^{-i\tau})$. The implicit
function theorem guarantees \eqref{eq:expansion} for some $\eps$ small but provides \emph{no}
radius of convergence; supplying that radius, uniformly in $n$, is the content of the main
theorem. We first record the parity structure of the hierarchy.

\begin{proposition}[Odd orders vanish]\label{prop:odd}
For all $m\ge0$, $\ \omega^{\ord{2m+1}}=0$ and $Q^{\ord{2m+1}}(\tau)=0$, and
\begin{equation}\label{eq:Qeven}
Q^{\ord{2m}}(\tau)=\sum_{j=2}^{2n}c_{2m,1,j}\frac{u^{(j)}}{\abs{u^{(j)}}}(e^{i\tau}+e^{-i\tau})
+\sum_{k=1}^{m}\sum_{j=1}^{2n}c_{2m,2k+1,j}\frac{u^{(j)}}{\abs{u^{(j)}}}\big(e^{(2k+1)i\tau}+e^{-(2k+1)i\tau}\big).
\end{equation}
\end{proposition}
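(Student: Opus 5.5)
Our approach is a symmetry argument combined with induction on the order of the hierarchy. We substitute the expansions \eqref{eq:expansion} into \eqref{eq:fput-rescaled} and collect powers of $\eps$. The key structural facts are three. First, $N$ is odd: $N(-Q)=-N(Q)$, immediate from \eqref{eq:Nform}, and more precisely purely cubic. Second, the system is invariant under $\tau\mapsto\tau+\pi$ combined with $Q\mapsto -Q$. Third, by the reflection symmetry used in Lemma~\ref{lem:existence}, we may take $Q$ even in $\tau$.

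First we run the uniqueness argument. Since $N$ is odd, if $(\omega,Q)$ solves \eqref{eq:fput-rescaled}, so does $(\omega,-Q)$. Moreover $-Q(\tau)=Q(\tau+\pi)$ at the leading order $\eps Q^{\ord{0}}$, because $e^{\pm i\tau}$ changes sign under $\tau\mapsto\tau+\pi$. The branch is parametrized by the amplitude $\eps$, defined as the projection coefficient onto $\frac{u^{(1)}}{\abs{u^{(1)}}}(e^{i\tau}+e^{-i\tau})$, with the phase normalization from the Lyapunov--Schmidt reduction. Under $\eps\mapsto-\eps$, uniqueness in the implicit function theorem then gives
\[
\omega(-\eps)=\omega(\eps),\qquad Q(-\eps;\tau)=-Q(\eps;\tau)=Q(\eps;\tau+\pi).
\]
Evenness of $\omega$ in $\eps$ kills $\omega^{\ord{2m+1}}$. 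Oddness of $Q$ in $\eps$ kills the coefficients of even powers $\eps^{2m+2}$, which are exactly the terms $Q^{\ord{2m+1}}$. A point to check carefully here is that the normalization $\langle Q,u^{(1)}\cos\tau\rangle=\eps\cdot\text{const}$ is respected by $Q\mapsto -Q$ when $\eps\mapsto-\eps$. It is, since the normalization is linear.

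Second, we pin down the form of $Q^{\ord{2m}}$. From $Q(\eps;\tau+\pi)=-Q(-\eps;\tau)$ and the expansion, $Q^{\ord{2m}}(\tau+\pi)=-Q^{\ord{2m}}(\tau)$. So only odd Fourier harmonics $e^{\pm(2k+1)i\tau}$ occur. Evenness in $\tau$ pairs each $e^{i\ell\tau}$ with $e^{-i\ell\tau}$ with equal coefficients. Expanding each harmonic coefficient vector in the basis $u^{(j)}/\abs{u^{(j)}}$ gives the sum in \eqref{eq:Qeven}.

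Third, we bound the harmonic range and handle the first harmonic. To bound the highest harmonic we induct on $m$. The order-$\eps^{2m+1}$ equation reads
\[
\big((\omega^{(1)})^2\partial_\tau^2-L\big)Q^{\ord{2m}}=\text{(terms from lower orders)},
\]
where the right side consists of $\omega^{\ord{2i}}$-corrections multiplying $\partial_\tau^2 Q^{\ord{2m-2i}}$ and of cubic terms $M(Q^{\ord{a}},Q^{\ord{b}},Q^{\ord{c}})$ with $a+b+c=2m-2$. By induction $Q^{\ord{2i}}$ contains harmonics up to $2i+1$, so a cubic product contains harmonics up to $(a+1)+(b+1)+(c+1)=2m+1$. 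Inverting the operator, which is invertible on $e^{\pm \ell i\tau}$ for $\ell\ge 3$ by the non-resonance condition, preserves the harmonic content. This gives the range $k\le m$.

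The absence of the $j=1$ first-harmonic component for $m\ge1$ is the normalization choice: the kernel direction is absorbed entirely into $\eps Q^{\ord{0}}$, with the complement taken in the Lyapunov--Schmidt splitting. We expect the main obstacle to be making the symmetry/uniqueness step rigorous, namely verifying that the reduced map $g_4$ and the normalization transform equivariantly, so that $\eps\mapsto-\eps$ really corresponds to $\tau\mapsto\tau+\pi$. If that is delicate, our fallback is a direct induction: at each order the odd-order equation has a right side with only even harmonics, or zero. Since $\omega^{\ord{1}}=0$ is forced by solvability against the kernel, the unique complement-normalized solution is zero.
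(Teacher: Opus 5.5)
Your proof is correct, but its main line differs from the paper's. The paper argues directly on the hierarchy. Since the FPUT term is odd (purely cubic), the order-by-order induction used for $Q^{\ord{2}}$ in Proposition~\ref{prop:Q2} shows that only even powers of $\eps$ and only odd harmonics are generated. Under the inductive hypothesis the equation at order $\eps^{2m+2}$ collapses to $\big((\omega^{\ord{0}})^2\partial_\tau^2-L\big)Q^{\ord{2m+1}}=-2\omega^{\ord{0}}\omega^{\ord{2m+1}}\partial_\tau^2Q^{\ord{0}}$, so solvability and the normalization give $\omega^{\ord{2m+1}}=0=Q^{\ord{2m+1}}$. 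At order $\eps^{2m+1}$ the forcing contains only odd harmonics up to $2m+1$. This is essentially your fallback.

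Your primary route instead gets the parity statements at once from the $\Z_2$-equivariance of the branch ($Q\mapsto-Q$, $\tau\mapsto\tau+\pi$) and uniqueness in the implicit function theorem, and uses induction only for the cutoff $k\le m$. It shows the structure is a consequence of symmetry without inspecting the recursion, and as a by-product that $v$ is odd in $A$. The cost is that it needs a signed-amplitude version of the reduction in Appendix~\ref{sec:proofs}, which the paper parametrizes by $\abs{A}$. That extension is routine: uniqueness gives $v(\omega,-A)=-v(\omega,A)$, so $g_4(\omega,\cdot)$ is even and $\omega(-A)=\omega(A)$, and the splitting is invariant under the half-period shift. The paper's direct induction needs only invertibility of the linearized operator on the complement, and it gives parity and cutoff in one pass.

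Two small slips. In your second step the relation should read $Q(\eps;\tau+\pi)=Q(-\eps;\tau)=-Q(\eps;\tau)$; with the sign you wrote you would get $Q^{\ord{2m}}(\tau+\pi)=+Q^{\ord{2m}}(\tau)$. In the fallback, the odd-order right-hand side is not a sum of even harmonics but exactly the kernel term above, and that is what forces the solution to vanish.
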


\begin{proof}
The cubic nonlinearity is odd, so the recursion couples only odd temporal harmonics and even
powers of $\eps$; the proof is the induction of Proposition~\ref{prop:Q2} below, applied at each
order, and is identical in structure to the on-site case \cite{SongXuArxiv}.
\end{proof}

We can now state the main result. Throughout we impose on the left boundary spring the
genericity condition
\begin{equation}\label{eq:leftcond}
\abs{k_{3,1}-k_2}\notapprox k_2,\qquad k_{3,1}<2k_2,
\end{equation}
which excludes the degenerate value at which the left end would itself carry an edge state. The
two regimes are then distinguished only by the choice of the \emph{right} boundary spring
$k_{3,2}$:
\begin{equation}\label{eq:regimes}
\begin{aligned}
\textbf{(R1)}\ \text{(general right spring):}&\quad \abs{k_{3,2}-k_2}\notapprox k_2,\quad k_{3,2}<2k_2;\\[1mm]
\textbf{(R2)}\ \text{(right spring tuned to the edge):}&\quad \abs{k_{3,2}-2k_2}\ll n^{-4},\quad k_{3,2}\le2k_2.
\end{aligned}
\end{equation}
In (R1) the right spring is generic, as in \eqref{eq:leftcond}; in (R2) it is tuned to the band
edge $2k_2$, which moves the index of the dominant second-order correction to the boundary and
produces boundary localization.

\begin{theorem}[Convergence radius and frequency tendency]\label{thm:main}
Assume the long-chain regime and either of the boundary regimes \textup{(R1)} or \textup{(R2)}
in \eqref{eq:regimes}. Then there is a constant $f=\Theta(1)$ such that the expansions
\eqref{eq:expansion} for $Q$ and $\omega$ converge for
\begin{equation}\label{eq:radius}
0<\eps^2<\frac{1}{fn}, \qquad\text{i.e.}\qquad \eps=\Theta(1/\sqrt n).
\end{equation}
Within this range the leading frequency shift is negative, $\omega^{\ord{2}}=\Theta(1/n)<0$, so
the frequency moves monotonically downward, towards the lower optical edge. At the upper end of
the convergence range the cumulative shift is of the same order as the distance from
$(\omega^{(1)})^2$ to the band edge, so to leading order the frequency reaches the edge precisely
as the expansion ceases to converge. Correspondingly the spatial profile concentrates as $\eps$
approaches the threshold, in regime \textup{(R1)} towards the middle of the chain and in regime
\textup{(R2)} towards the boundary.
\end{theorem}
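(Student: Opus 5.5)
The plan is to follow the on-site framework of \cite{SongXuArxiv} and reduce every FPUT estimate to an on-site one through the trilinear form $M$. The first step is the structural lemma announced as Lemma~\ref{lem:M-vs-hadamard}. Using the parametrization \eqref{eq:eigform}, the strains $u^{(k)}-\Sm u^{(k)}$ and $\Sp u^{(k)}-u^{(k)}$ should be computed componentwise. Near the lower optical edge the two sublattices move out of phase (Remark~\ref{rem:edge}), so each strain equals a fixed $\Theta(1)$ multiple of $\pm u^{(k)}_j$, up to errors of size $O(\Delta\theta^{(k)},\Delta\alpha^{(k)},\Delta\beta^{(k)})=O(k/n)$ per site, with sign and factor depending only on the parity of $j$. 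Inserting this into \eqref{eq:Mdef}, I expect
\[
\ip{M(u^{(a)},u^{(b)},u^{(c)})}{u^{(d)}}=\gamma\,\ip{u^{(a)}u^{(b)}u^{(c)}}{u^{(d)}}+R_{abcd},
\]
with $\gamma=\Theta(1)$ of fixed sign. The remainder $R_{abcd}$ should be small relative to the main term: it is $O(1)$ in the balanced case, where the main term is $\Theta(n)$, and it carries the same oscillatory cancellation as the main term otherwise. Lemma~\ref{lem:hadamard}(i)--(iv) then transfers essentially verbatim to $M$: the bound is $2n$ in band, $\Theta(1)$ when an isolated frequency is involved, $\Theta(1)$ versus $\Theta(n)$ according to balance for low indices, and $n^{2\eps}$ in the mixed-index regime.

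The second step is the perturbation hierarchy. Projecting \eqref{eq:fput-rescaled} order by order, using Proposition~\ref{prop:odd}, gives explicit recursions for the coefficients $c_{2m,2k+1,j}$ and for $\omega^{\ord{2m}}$. At order $\eps^2$ the frequency correction is
\[
\omega^{\ord{2}}\propto -\ip{M(u^{(1)},u^{(1)},u^{(1)})}{u^{(1)}}/\abs{u^{(1)}}^4 .
\]
The quadruple $(1,1,1,1)$ is balanced, so by the transferred Lemma~\ref{lem:hadamard}(iii) the inner product is $\Theta(n)$, while $\abs{u^{(1)}}^4=\Theta(n^2)$ by \eqref{eq:normest}. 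This gives $\abs{\omega^{\ord{2}}}=\Theta(1/n)$. The sign is fixed by the softening sign of the cubic in \eqref{eq:fput} together with $\gamma>0$, and the sign of $\gamma$ must be checked carefully from the strain computation. The correction $Q^{\ord{2}}$ has coefficients
\[
c_{2,1,j}\sim \ip{M(u^{(1)},u^{(1)},u^{(1)})}{u^{(j)}}\Big/\Big(\abs{u^{(1)}}^3\abs{u^{(j)}}\big((\omega^{(j)})^2-(\omega^{(1)})^2\big)\Big).
\]
The small divisors \eqref{eq:gap} are $\Theta(j^2/n^2)$, and the numerators are $\Theta(n)$ only for balanced $(1,1,1,j)$, that is $j=3$ in (R1). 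Thus the dominant index is $j=3$ (or the edge-adjacent index in (R2)), with $c=\Theta(1)$ times $\eps^2 n$. This identifies the scale $\eps^2 n$ and explains the localization pattern. In (R1), superposing $u^{(1)}$ and $u^{(3)}$ concentrates the profile at the middle; in (R2), the modified $\Delta\theta$ in \eqref{eq:dtheta} changes the balance condition and the phase, so the profile concentrates at the boundary.

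The third step, which I expect to be the main obstacle, is the uniform bound needed for convergence. I would prove by induction on $m$ a majorant of the form
\[
\abs{c_{2m,2k+1,j}}\le C\,(fn)^{m}\,\rho_j,\qquad \abs{\omega^{\ord{2m}}}\le C (fn)^{m}/n ,
\]
where $\rho_j$ is a summable profile in $j$ (decaying like $j^{-2}$ beyond the low-index resonant set). This mirrors the inductive estimate of \cite{SongXuArxiv}, with the on-site Hadamard inner products replaced by the $M$ inner products and controlled through step one. The induction step combines three contributions. Low indices use the balanced/unbalanced dichotomy against divisors $\Theta(j^2/n^2)$. Intermediate indices $n^{1-\eps}<j$ use part (iv) and the $n^{2\eps}$ bound, which is absorbed by the $\Theta(1)$ divisors away from the edge. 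Higher harmonics $e^{(2k+1)i\tau}$ have non-small divisors $(2k+1)^2(\omega^{(1)})^2-(\omega^{(j)})^2\gtrsim 1$, which hold because $2\omega^{(1)}$ exceeds the top of the spectrum. The delicate point is that the remainder $R_{abcd}$ must not destroy the cancellations in the unbalanced case, since otherwise the resulting losses would accumulate over the $\Theta(n)$ indices. I would try first to show that $R_{abcd}$ is itself a finite combination of inner products of the same trigonometric type, with frequencies shifted by $O(1/n)$, so that the cancellation lemma applies to it directly.

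Once the majorant is established, the series converge for $\eps^2 fn<1$, which proves \eqref{eq:radius}. The cumulative frequency shift at the threshold is $\eps^2\omega^{\ord{2}}\sim 1/n^2$. This matches the edge distance $(\omega^{(1)})^2-2k_2=\Theta(1/n^2)$, which gives the final assertion of the theorem.
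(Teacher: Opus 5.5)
Your overall architecture matches the paper's: reduce $\ip{M(\cdot,\cdot,\cdot)}{\cdot}$ to a fixed multiple of the on-site cube, compute $\omega^{\ord{2}}$ and the resonant second-order coefficient, then run an inductive majorant and a ratio test. The gap is in the third step. The induction hypothesis you write down cannot be closed, and closing it is where the paper's proof does its real work.

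\emph{First, a purely geometric majorant $C(fn)^m\rho_j$ does not absorb the combinatorics of the recursion.} At order $k+1$ the forcing is a sum over the $\binom{k+2}{2}$ order triples $(a_1,a_2,a_3)\in\mathbb{E}_1^0(k)$, and also over the harmonic indices $l_i\le a_i$. Each triple contributes about $C^3(fn)^k\cdot\Theta(n)$: a $\Theta(n)$ balanced inner product, normalized by $\abs{u}^4=\Theta(n^2)$ and divided by a $\Theta(1/n^2)$ near-edge divisor. Closing the step would therefore require $k^2C^2\lesssim f$ for every $k$, which is impossible for fixed $f$.

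The paper instead builds polynomial damping into the majorant, $R_m=C_1f^mn^m/(m^2+2)$. It closes the convolution with the weighted-sum inequalities of Proposition~\ref{prop:arith}, which give $\sum_{a_1+a_2+a_3=k}\prod_i(a_i^2+2)^{-1}\lesssim(k^2+2)^{-1}$. Since $R_{m+1}/R_m\to fn$, this costs nothing in the radius. For the same reason the paper carries the decay $(2l+1)^{-2}$ in the harmonic index (bounds (v)--(vii) of Lemma~\ref{lem:induction}, inherited from the divisors $(\omega^{(j)})^2-(2l+1)^2(\omega^{(1)})^2$). Your majorant is uniform in the harmonic index, so the sums over $l_1,l_2,l_3$ would add further polynomial growth in $m$. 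A generating-function majorant solving $B=b_0+xAB^3$ would be an alternative, but some such device is indispensable.

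\emph{Second, your frequency majorant $\abs{\omega^{\ord{2m}}}\le C(fn)^m/n$ is too weak by a factor $n$.} At $m=1$ it gives only $O(1)$, not the true $\Theta(1/n)$, and the loss propagates. The recursion for $c_{2k+2,1,j}$ contains $2\omega^{\ord{0}}\omega^{\ord{2s}}c_{2t,1,j}$ with $s+t=k+1$, divided by $(\omega^{(j)})^2-(\omega^{(1)})^2=\Theta(j^2/n^2)$. With your bounds this term is of size $C^2(fn)^{k+1}\rho_j\,n/j^2$, which overshoots the target $C(fn)^{k+1}\rho_j$ by a factor $n$ for low $j$. The paper's hypothesis $\abs{\omega^{\ord{2m}}}<C_2R_m/n^2$ matches the base case and makes these terms $O(R_{k+1})$.

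Relatedly, the point you single out as delicate, that the remainder might destroy cancellations, is not what the induction hinges on. Once Lemma~\ref{lem:M-vs-hadamard} is established, the induction uses only the size dichotomy of Lemma~\ref{lem:M-dichotomy}. The difficulty lies in choosing the normalizations in $m$ and $n$ so that the recursion closes.

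A smaller correction to your first step: the strains are not each a $\Theta(1)$ multiple of $u_j$. At each site the strain across the stiff $k_2$ spring is $\approx\pm2u_j$, while the strain across the $k_1$ spring is only $O(k/n)$ (see \eqref{eq:strain-odd}--\eqref{eq:strain-even}). This is what yields $\gamma=8>0$.
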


\begin{remark}
The theorem is an order-of-magnitude statement, not a proof that the frequency leaves the band or
that a genuine localized state is reached. The perturbation series controls the branch only up to
its radius of convergence, and the leading shift only shows that the frequency \emph{reaches} the
edge at that order; whether it detaches beyond the edge is outside the perturbative range. What
the result establishes is a definite downward tendency of the frequency and a growing
concentration of the profile, with quantitative thresholds, which is the perturbative signature of
the onset of localization rather than a global existence statement for an edge state.
\end{remark}

The constant $f$ differs between the two regimes; we compute it in
Sections~\ref{sec:middle}--\ref{sec:boundary}. The estimate $\eps=\Theta(1/\sqrt n)$ matches the
square root of the near-edge gap \eqref{eq:gap}: the series converges up to the amplitude at
which the leading frequency shift carries the frequency to the band edge and the smallest divisor
vanishes.

\section{The middle-localized family (regime R1)}\label{sec:middle}

\subsection{Reducing the FPUT inner products to the on-site ones}
The technical heart of the FPUT analysis is that, tested against in-band modes, the trilinear
form $M$ of \eqref{eq:Mdef} differs from $8$ times the on-site Hadamard cube only by a sharply
controlled error. This is what lets the delicate cancellation estimates of \cite{SongXuArxiv} be
carried over directly.

\begin{lemma}[FPUT form versus on-site cube]\label{lem:M-vs-hadamard}
Suppose $\abs{k_{3,1}-k_2}\notapprox k_2$, $\abs{k_{3,2}-k_2}\notapprox k_2$, and let
$a_1,a_2,a_3,a_4\in\N^{*}$ with $a_1,a_2,a_3,a_4\ll n$. Put $k=\max\{a_1,a_2,a_3,a_4\}$. Then
\begin{equation}\label{eq:M-vs-hadamard}
\Bigl|\ip{M(u^{(a_1)},u^{(a_2)},u^{(a_3)})}{u^{(a_4)}}
-8\,\ip{u^{(a_1)}\Hadamard u^{(a_2)}\Hadamard u^{(a_3)}}{u^{(a_4)}}\Bigr|
=O\!\Big(\frac{k^3}{n^2}\Big).
\end{equation}
\end{lemma}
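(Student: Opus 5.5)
Rather than analyse each component of $M$, we move the shifts onto the test vector $u^{(a_4)}$ by summation by parts, and then compare strains with displacements through the explicit near-edge form \eqref{eq:eigform}.

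\textbf{Step 1: summation by parts.} Write $D_-u=u-\Sm u$ and $D_+u=\Sp u-u$, so that $(D_+u)_j=(D_-u)_{j+1}$ for $1\le j\le 2n-1$. Then
\[
\ip{M}{w}=\sum_{j=1}^{2n+1}\Big(\prod_{i=1}^{3}(D_-u^{(a_i)})_j\Big)\,(D_-w)_j ,
\]
with the convention $w_{2n+1}=0$, and $w=u^{(a_4)}$. This is the discrete analogue of $\int (u')^3 w'$. It is exactly the fully symmetric quartic form coming from the potential energy \eqref{eq:energy}, so all four modes now enter through their strains. The two boundary terms $j=1$ and $j=2n+1$ are single products of $\Theta(1)$-sized quantities. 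I will check that they either cancel against the corresponding boundary terms of the on-site sum or are absorbed into the error.

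\textbf{Step 2: strain-versus-displacement identity.} By \eqref{eq:eigform}, within a cell the two sublattice components are $(-1)^{j-1}\sin(\Delta\beta-\Delta\alpha-(j-1)\Delta\theta\pm 2\Delta\alpha)$. Across cells there is the extra sign flip $(-1)$. Hence each strain equals $\pm$ the sum of two neighbouring displacements up to that sign. Concretely:
\begin{itemize}
\item $(D_-u)_{2j}=u_{2j}-u_{2j-1}=(-1)^{j-1}\big(\sin(\phi-\Delta\alpha)-\sin(\phi+\Delta\alpha)\big)$, where $\phi=\Delta\beta-(j-1)\Delta\theta$. This is the intracell strain.
\item $(D_-u)_{2j+1}=u_{2j+1}-u_{2j}=(-1)^{j}\big(\sin(\phi+\Delta\alpha-\Delta\theta)+\sin(\phi-\Delta\alpha)\big)$. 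This is the intercell strain, and it has the same form as $2u$ up to an $O(\Delta\theta,\Delta\alpha)$ shift of phase.
\end{itemize}
These are the relations recorded in \eqref{eq:strain-odd}--\eqref{eq:strain-even}. Expanding in the small angles, I expect to get, site by site, $(D_-u^{(a)})_{j}=\sigma_j\big(u^{(a)}_j+u^{(a)}_{j-1}\big)$ up to a sign pattern $\sigma_j$ that is independent of the mode. The sum $u_j+u_{j-1}$ then equals $2u_j$ plus a term involving $\Delta\theta^{(a)},\Delta\alpha^{(a)}=O(a/n)$ times a cosine. Because $\sigma_j^4=1$, the leading part of the quartic sum is $\sum_j 16\prod_{i=1}^4 u^{(a_i)}_j$ summed over half-shifted sites. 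After regrouping the two strains per site, this should reproduce $8\ip{u^{(a_1)}\Hadamard u^{(a_2)}\Hadamard u^{(a_3)}}{u^{(a_4)}}$. I must check the constant $8$ versus $16$ carefully. The normalisation in \eqref{eq:eigform}, where the optical edge has $u_{2j-1}\approx -u_{2j}$, should halve it.

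\textbf{Step 3: the error, which is the main obstacle.} A naive bound on the correction is $n$ sites times $O(k/n)$ per site, which is only $O(k)$. The claimed $O(k^3/n^2)$ requires cancellation. First, the first-order correction terms will be expressed as sums like $\sum_j \Delta\theta\,\cos(\cdot)\prod\sin(\cdot)$. Their products are trigonometric sums with frequencies $\pm\Delta\theta^{(a_1)}\pm\cdots\pm\Delta\theta^{(a_4)}$. Using the exact phase relations that define $\Delta\alpha,\Delta\beta$ through the boundary conditions, I will show that these first-order terms combine into a discrete total difference, i.e.\ a telescoping sum. Such a sum contributes only boundary terms of size $O(k/n)$ times boundary values. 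Those boundary values are themselves $O(k/n)$ small, because $\Delta\alpha,\Delta\beta=\Theta(k/n)$ in \eqref{eq:eigform}. Second, the second-order terms are $O(k^2/n^2)$ per site. Summed over $n$ sites they would give $O(k^2/n)$, so they also need the same oscillation argument: away from balanced quadruples the trigonometric sums are $O(1/|\text{frequency}|)=O(n/\cdot)$. I will handle this as in the proof of Lemma~\ref{lem:hadamard}(iii), which distinguishes balanced from non-balanced quadruples. The third-order remainder is bounded pointwise by $O(k^3/n^3)$ per site, and summing over $n$ sites gives the claimed $O(k^3/n^2)$. I expect the telescoping identity for the first- and second-order terms to be the delicate step. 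If exact telescoping fails, I would fall back on the geometric-sum bounds from \cite{SongXuArxiv}.
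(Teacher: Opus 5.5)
\textbf{The gap is in Step~3, at the second-order terms on balanced quadruples, and it cannot be closed: there the bound $O(k^3/n^2)$ is false.} Your Step~1 is correct, and carrying it one step further shows this. Pair the two endpoints of each intercell bond $(2j,2j+1)$, $1\le j\le n-1$, and put $m^{(i)}_j=\tfrac12(u^{(a_i)}_{2j+1}-u^{(a_i)}_{2j})$ and $d^{(i)}_j=\tfrac12(u^{(a_i)}_{2j+1}+u^{(a_i)}_{2j})$. The bond contributes $16\prod_i m^{(i)}_j$ to the FPUT sum. The two sites contribute $8\big[\prod_i(m^{(i)}_j+d^{(i)}_j)+\prod_i(m^{(i)}_j-d^{(i)}_j)\big]$ to the on-site sum. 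Hence, exactly,
\[
\ip{M(u^{(a_1)},u^{(a_2)},u^{(a_3)})}{u^{(a_4)}}-8\ip{u^{(a_1)}\Hadamard u^{(a_2)}\Hadamard u^{(a_3)}}{u^{(a_4)}}
=-16\sum_{j=1}^{n-1}\Big(\sum_{i<i'}d^{(i)}_jd^{(i')}_j\prod_{l\ne i,i'}m^{(l)}_j+\prod_{i=1}^{4}d^{(i)}_j\Big)+R,
\]
where $R$ collects the intracell bonds, the boundary springs and the sites $1,2n$, and is $O(k^4/n^3)$.

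Terms odd in $d$ cancel bond by bond, so the first-order telescoping you worried about is automatic. The whole discrepancy is the second-order sum. By \eqref{eq:eigform},
$m^{(i)}_j=(-1)^j\cos(\Delta\alpha^{(a_i)}-\tfrac12\Delta\theta^{(a_i)})\sin\chi^{(i)}_j$ and $d^{(i)}_j=(-1)^j\sin(\Delta\alpha^{(a_i)}-\tfrac12\Delta\theta^{(a_i)})\cos\chi^{(i)}_j$, with $\chi^{(i)}_j=\Delta\beta^{(a_i)}-(j-\tfrac12)\Delta\theta^{(a_i)}$.
The $\cos(\Delta\beta-(j-1)\Delta\theta)$-coefficient of the equation at site $2j$ gives $((\omega^{(a)})^2-2k_1-k_2)\sin\Delta\alpha^{(a)}=k_2\sin(\Delta\theta^{(a)}-\Delta\alpha^{(a)})$. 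Hence $2\Delta\alpha^{(a)}-\Delta\theta^{(a)}\approx\frac{k_1}{k_2-k_1}\Delta\theta^{(a)}=\Theta(a/n)$; indeed $\Delta\alpha=\tfrac12\Delta\theta$ would force $(\omega^{(a)})^2=2k_1+2k_2$. So $d^{(i)}_j$ is genuinely of order $k/n$, and each bond carries an $O(k^2/n^2)$ term. On a balanced quadruple the $j$-sum has a nonzero mean, so neither oscillation nor telescoping helps.

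\textbf{Counterexample.} Take $a_1=\dots=a_4=1$. The discrepancy is $-\sum_j(96m_j^2d_j^2+16d_j^4)+O(n^{-3})$, a sum of nonpositive terms, equal to
\[
-6\sin^2(2\Delta\alpha^{(1)}-\Delta\theta^{(1)})\sum_{j=1}^{n-1}\sin^2(2\chi_j)+O(n^{-3})\approx-\frac{3\pi^2k_1^2}{(k_2-k_1)^2\,n}.
\]
The statement and Corollary~\ref{cor:K-O1} assert $O(n^{-2})$ here. The resonant quadruple $(1,1,1,3)$ used in Proposition~\ref{prop:Q2} likewise gives $\Theta(1/n)$. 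What your method actually proves is the uniform bound $O(k^2/n)$, which is sharp on the balanced set. That is a relative error $O(k^2/n^2)$ against the $\Theta(n)$ main term, so the resonant $\Theta(n)$ values survive. At best, $O(k^3/n^2)$ holds off the balanced set.

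The paper's appendix has the same defect and does not rescue the statement. In the $(1,1,1,1)$ example, $K_1=12\sum_jD^{+}_j(u_{2j+1}^3+u_{2j}^3)=-144\sum_jm_j^2d_j^2+O(n^{-3})$ and $K_2=48\sum_jm_j^2d_j^2+O(n^{-3})$ are each of order $1/n$. This happens because the ``leading sines'' still contain $d_j$, and the square of the first-order part of $D^{+}$ is not small.

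\textbf{Smaller corrections to Step~2.} In \eqref{eq:eigform} the cell partners are in phase, $u_{2j-1}\approx u_{2j}$, not $u_{2j-1}\approx -u_{2j}$. It is the intracell strains that are $O(k/n)$ and drop out, and the factor $8$ is $16$ per intercell bond shared between its two endpoint sites. Your site-by-site identity $(D_-u)_j=\sigma_j(u_j+u_{j-1})$ fails as written. Only the destaggered envelope on intercell bonds behaves like a sum; on intracell bonds the strain is a difference.
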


The proof is given in Appendix~\ref{app:reduction}. Because the right-hand side of
\eqref{eq:M-vs-hadamard} is small, the FPUT inner product inherits the size estimates that
Lemma~\ref{lem:hadamard} gives for the on-site cube. In particular it inherits the same case
distinction: the inner product is generically of order $O(1)$ (or smaller), and becomes large,
of order $\Theta(n)$, only on the resonant set where the four indices satisfy a parity-balanced
linear relation. We record this explicitly.

\begin{lemma}[FPUT cubic inner products]\label{lem:M-dichotomy}
Under $\abs{k_{3,1}-k_2}\notapprox k_2$, $\abs{k_{3,2}-k_2}\notapprox k_2$
and $a_1,a_2,a_3\in\N^{*}$ with $a_1,a_2,a_3\ll n$, with $(a,b,c,d)$ balanced as in
Lemma~\ref{lem:hadamard}:
\begin{enumerate}[label=\textup{(\arabic*)},leftmargin=1.6em]
\item If $\fourfreq\in\bothbands$, then
$\bigl|\ip{M(u^{(a)},u^{(b)},u^{(c)})}{u^{(d)}}\bigr|\le 32n$.
\item If at least one of $\fourfreq$ lies outside the bands, then
$\ip{M(u^{(a)},u^{(b)},u^{(c)})}{u^{(d)}}=O(1)$.
\item If $\fourfreq\in\optband$, with $k=\max\{a,b,c,d\}$,
$1\le k\ll n$, $k_{3,1}=\Theta(1)\notapprox2k_2$ and $k_{3,2}=\Theta(1)\notapprox2k_2$, then
\[
\ip{M(u^{(a)},u^{(b)},u^{(c)})}{u^{(d)}}=
\begin{cases}
O\!\big(1+k^3/n^2\big), & (a,b,c,d)\ \text{not balanced},\\[1mm]
O(n), & (a,b,c,d)\ \text{balanced}.
\end{cases}
\]
\item If the four squared frequencies lie in $(2k_2,2k_1+2k_2)$ and
$\max\{a,b,c\}\ll n^{1-\eps}<d<n_1-n^{1-\eps}$, then
$\ip{M(u^{(a)},u^{(b)},u^{(c)})}{u^{(d)}}\lesssim n^{2\eps}$.
\end{enumerate}
\end{lemma}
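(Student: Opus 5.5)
Items (3) and (4) follow from the corresponding items of Lemma~\ref{lem:hadamard} together with Lemma~\ref{lem:M-vs-hadamard} and the triangle inequality. Items (1) and (2), where the indices need not satisfy $k\ll n$, are proved directly from the structure of $M$ and the bounds underlying Lemma~\ref{lem:hadamard}(i)--(ii).

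\textbf{Item (3).} Since $k\ll n$, Lemma~\ref{lem:M-vs-hadamard} applies under the stated boundary hypotheses. It gives
\[
\ip{M(u^{(a)},u^{(b)},u^{(c)})}{u^{(d)}}=8\,\ip{u^{(a)}\Hadamard u^{(b)}\Hadamard u^{(c)}}{u^{(d)}}+O(k^3/n^2).
\]
By Lemma~\ref{lem:hadamard}(iii), the Hadamard term is $\Theta(1)$ in the non-balanced case and $\Theta(n)$ in the balanced case. This yields $O(1+k^3/n^2)$ and $O(n+k^3/n^2)=O(n)$ respectively, the latter because $k\ll n$.

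\textbf{Items (1) and (2).} Here I expand $M$ componentwise. By \eqref{eq:Mdef},
\[
\ip{M(x,y,z)}{w}=\sum_{j}\Big[(x_j-x_{j-1})(y_j-y_{j-1})(z_j-z_{j-1})-(x_{j+1}-x_j)(y_{j+1}-y_j)(z_{j+1}-z_j)\Big]w_j.
\]
Reindexing the second sum by $j\mapsto j-1$ turns this into a summation by parts,
\[
\ip{M(x,y,z)}{w}=\sum_{j=1}^{2n+1}(x_j-x_{j-1})(y_j-y_{j-1})(z_j-z_{j-1})(w_j-w_{j-1}),
\]
with $x_0=x_{2n+1}=0$ and similarly for $y,z,w$. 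Thus $\ip{M}{w}$ is a quartic sum of strains, the discrete analogue of $\int (u')^4$.

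For (1), I use that the components of the eigenvectors in the standard form \eqref{eq:eigform} satisfy $|u_j|\le1$ for in-band modes, so every strain is bounded by $2$. The sum therefore has at most $2n+1$ terms, each at most $2^4=16$ in absolute value. This gives $\le 16(2n+1)$, which I absorb into $32n$ (or sharpen slightly if needed).

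For (2), the out-of-band factor is an edge state, which decays exponentially from one end. Its strains decay geometrically as well, while the other three strain factors are bounded. The sum is then dominated by a geometric series and is $O(1)$, exactly as in the proof of Lemma~\ref{lem:hadamard}(ii). One also needs the edge-state normalization to have $O(1)$ amplitude, which I take from \cite{SongXuJPA}.

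\textbf{Item (4).} Lemma~\ref{lem:M-vs-hadamard} cannot be used here because $d$ is not $\ll n$. I would instead write the strains of each mode as trigonometric expressions via \eqref{eq:eigform}. The strain of mode $u^{(m)}$ is again of the form $(-1)^j$ times a sinusoid in $j\Delta\theta^{(m)}$, with amplitude $\Theta(1)$ and phases shifted by $O(\Delta\alpha,\Delta\beta)$. Then I rerun the oscillatory-sum argument of Lemma~\ref{lem:hadamard}(iv): expand the product of four sines into exponentials in $j(\pm\Delta\theta^{(a)}\pm\Delta\theta^{(b)}\pm\Delta\theta^{(c)}\pm\Delta\theta^{(d)})$. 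Each resulting geometric sum is bounded by $1/|\text{frequency}|$. Since $d\gg\max\{a,b,c\}$ and $d$ stays away from $n_1$, each combined frequency is $\gtrsim n^{-\eps}$ away from $0$ and $2\pi$, which gives $n^{\eps}$. The factor $n^{2\eps}$ of Lemma~\ref{lem:hadamard}(iv) absorbs the remaining losses.

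\textbf{Main obstacle.} I expect item (4) to be the main obstacle. For $d$ deep in the band, the parametrization \eqref{eq:eigform} is needed with non-small $\Delta\theta^{(d)}$, and the strain amplitude of $u^{(d)}$ depends on $\Delta\theta^{(d)}$. I would handle this by writing the strain as a fixed linear combination of two phase-shifted sinusoids, so that the cancellation estimate of \cite{SongXuArxiv} applies termwise.
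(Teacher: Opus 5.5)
Your proposal is correct. For item (3) it is exactly the paper's argument, and for (2) your exponential-decay argument is also the paper's. For (1) and (4), however, you take a different route.

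The paper gets (1) and (4) by transferring Lemma~\ref{lem:hadamard}(i),(iv) through the reduction identity of Lemma~\ref{lem:M-vs-hadamard}. This gives $8\cdot 2n=16n\le 32n$ in (1), and $n^{2\eps}$ plus an $O(k^3/n^2)$ correction in (4). You instead use the summation-by-parts identity $\ip{M(x,y,z)}{w}=\sum_{s=1}^{2n+1}\delta x_s\,\delta y_s\,\delta z_s\,\delta w_s$, with $\delta x_s=x_s-x_{s-1}$, and estimate directly. This identity is the polarized form of the fact that $N$ is the gradient of the quartic part of \eqref{eq:energy}.

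What your route buys is robustness:
\begin{itemize}
\item Lemma~\ref{lem:M-vs-hadamard} is stated only for $a_1,\dots,a_4\ll n$. In (1), $d$ ranges over the whole band, and in (4), $d>n^{1-\eps}$.
\item In (4) the paper's correction $O(k^3/n^2)$ is not automatically $\lesssim n^{2\eps}$ once $\max\{a,b,c\}$ may be as large as $n^{1-\eps}$.
\end{itemize}

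In your version each strain is $(-1)^j$ times a single sinusoid of amplitude at most $2$. On the $k_1$-springs the amplitude is $2\sin\Delta\alpha^{(\ell)}$, and on the $k_2$-springs it is $2\cos(\Delta\alpha^{(\ell)}-\tfrac12\Delta\theta^{(\ell)})$. The four sign factors cancel, so only the combined frequencies $\pm\Delta\theta^{(a)}\pm\Delta\theta^{(b)}\pm\Delta\theta^{(c)}\pm\Delta\theta^{(d)}$ matter. These stay at distance $\gtrsim n^{-\eps}$ from $2\pi\Z$, so you actually get $\lesssim n^{\eps}$. This relies on the same spectral input about $\Delta\theta^{(d)}$ for mid-band $d$ that underlies Lemma~\ref{lem:hadamard}(iv).

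In exchange, the paper's route gives the asymptotically sharp constant $16n$ for small indices. It also shows directly that the FPUT resonance structure is the on-site one, which is what (3) needs.

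One detail in (1): $16(2n+1)$ exceeds $32n$. The fix is easy: the two boundary springs have strains $u_1$ and $-u_{2n}$ of modulus at most $1$. The sum is therefore at most $16(2n-1)+2<32n$.
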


\begin{proof}
For (1), (3) and (4) the on-site quantity is bounded by Lemma~\ref{lem:hadamard} and the FPUT
correction in \eqref{eq:M-vs-hadamard} is $O(k^3/n^2)$. On the balanced (resonant) set of (3) the
$\Theta(n)$ on-site value dominates the correction, and the factor $8$ turns the on-site bound $2n$
of Lemma~\ref{lem:hadamard}(i) into $16n\le32n$; off the resonant set in (3) the on-site value is
only $\Theta(1)$, so the correction is no longer dominated and the FPUT inner product is
$O(1+k^3/n^2)$. For (2), at least one out-of-band index
forces the on-site value to $O(1)$ by Lemma~\ref{lem:hadamard}(ii) while the trilinear form $M$ is built
from the same eigenvectors, whose out-of-band exponential decay yields the same $O(1)$ bound
directly.
\end{proof}

\subsection{Second-order correction and the frequency shift}
Collecting the $\Theta(\eps^3)$ terms in \eqref{eq:fput-rescaled} gives the equation for
$Q^{\ord{2}}$,
\begin{equation}\label{eq:order3}
\begin{aligned}
\Big((\omega^{\ord{0}})^2\frac{\dd^2}{\dd\tau^2}-L\Big)Q^{\ord{2}}(\tau)
&+2\omega^{\ord{0}}\omega^{\ord{2}}\frac{\dd^2}{\dd\tau^2}Q^{\ord{0}}(\tau)\\
&=\frac{3\,M(u^{(1)},u^{(1)},u^{(1)})}{\abs{u^{(1)}}^3}(e^{i\tau}+e^{-i\tau})
+\frac{M(u^{(1)},u^{(1)},u^{(1)})}{\abs{u^{(1)}}^3}(e^{3i\tau}+e^{-3i\tau}).
\end{aligned}
\end{equation}
Projecting onto $\spanop\{u^{(j)}(e^{i\tau}+e^{-i\tau})\}$ and
$\spanop\{u^{(j)}(e^{3i\tau}+e^{-3i\tau})\}$ yields the coefficients
\begin{equation}\label{eq:c213-c233}
\begin{aligned}
c_{2,1,j}&=\frac{1}{\abs{u^{(1)}}^3\abs{u^{(j)}}}\,
\frac{3\,\ip{M(u^{(1)},u^{(1)},u^{(1)})}{u^{(j)}}}{(\omega^{(j)})^2-(\omega^{(1)})^2},\\
c_{2,3,j}&=\frac{1}{\abs{u^{(1)}}^3\abs{u^{(j)}}}\,
\frac{\ip{M(u^{(1)},u^{(1)},u^{(1)})}{u^{(j)}}}{(\omega^{(j)})^2-3^2(\omega^{(1)})^2}.
\end{aligned}
\end{equation}

\begin{proposition}[Estimate of $Q^{\ord{2}}$, regime R1]\label{prop:Q2}
In regime \textup{(R1)},
\begin{equation}\label{eq:Q2form}
Q^{\ord{2}}(\tau)=\sum_{j=2}^{2n}c_{2,1,j}\frac{u^{(j)}}{\abs{u^{(j)}}}(e^{i\tau}+e^{-i\tau})
+\sum_{j=1}^{2n}c_{2,3,j}\frac{u^{(j)}}{\abs{u^{(j)}}}(e^{3i\tau}+e^{-3i\tau}),
\end{equation}
where the first-harmonic coefficients satisfy
\begin{equation}\label{eq:c21-estimates}
\Bigl|c_{2,1,3}+\tfrac{3(k_2-k_1)n}{2k_1k_2\pi^2}\Bigr|=O(1),
\end{equation}
and, for $j\ne3$,
\begin{equation}\label{eq:c21-estimates-rest}
|c_{2,1,j}|=
\begin{cases}
O\!\big(j^{-2}\big), & j\in[2,n^{1-\eps}]\setminus\{3\},\\[1mm]
O\!\big(n^{-(2-4\eps)}\big), & j\in(n^{1-\eps},\,n_1-n^{1-\eps}],\\[1mm]
O\!\big(n^{-1}\big), & j>n_1-n^{1-\eps},\ (\omega^{(j)})^2\in(0,2k_1)\cup(2k_2,2k_1+2k_2),\\[1mm]
O\!\big(n^{-3/2}\big), & (\omega^{(j)})^2\in(2k_1,2k_2)\cup(2k_1+2k_2,+\infty),
\end{cases}
\end{equation}
and the third-harmonic coefficients satisfy
\begin{equation}\label{eq:c23-estimates}
\Bigl|c_{2,3,1}+\tfrac{3}{8k_2 n}\Bigr|=O\!\big(n^{-2}\big),\qquad
\Bigl|c_{2,3,3}-\tfrac{1}{8k_2 n}\Bigr|=O\!\big(n^{-2}\big),
\end{equation}
together with, for the remaining indices,
\begin{equation}\label{eq:c23-estimates-rest}
|c_{2,3,j}|=
\begin{cases}
O\!\big(n^{-1}\big), & (\omega^{(j)})^2\in(0,2k_1)\cup(2k_2,2k_1+2k_2),\\[1mm]
O\!\big(n^{-3/2}\big), & (\omega^{(j)})^2\in(2k_1,2k_2)\cup(2k_1+2k_2,+\infty).
\end{cases}
\end{equation}
\end{proposition}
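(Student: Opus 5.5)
We substitute the expansion \eqref{eq:expansion} into \eqref{eq:fput-rescaled} and collect the $\eps^3$ terms. This gives \eqref{eq:order3}, since $Q^{\ord{1}}=0$ and $\omega^{\ord{1}}=0$ by Proposition~\ref{prop:odd}. We then expand $(e^{i\tau}+e^{-i\tau})^3=3(e^{i\tau}+e^{-i\tau})+(e^{3i\tau}+e^{-3i\tau})$, using the trilinearity of $M$.

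Next we take the inner product with $u^{(j)}e^{\pm i\tau}$ and with $u^{(j)}e^{\pm3i\tau}$. The $u^{(j)}$ are orthogonal and satisfy $Lu^{(j)}=-(\omega^{(j)})^2u^{(j)}$, so the operator acts diagonally. The solvability condition at $j=1$ in the first harmonic fixes $\omega^{\ord{2}}$; we normalize so that $c_{2,1,1}=0$. For $j\ge2$ the condition gives $c_{2,1,j}$ in \eqref{eq:c213-c233}. The same computation on the third harmonic gives $c_{2,3,j}$, whose denominator $(\omega^{(j)})^2-9(\omega^{(1)})^2$ is bounded away from zero. This establishes the form \eqref{eq:Q2form}, and what remains is to estimate numerators and denominators case by case.

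\textbf{Dominant term $j=3$.} The quadruple $(1,1,1,3)$ is balanced, since $1+1+1-3=0$. Lemma~\ref{lem:M-vs-hadamard} therefore gives $\ip{M(u^{(1)},u^{(1)},u^{(1)})}{u^{(3)}}=8\ip{u^{(1)}\Hadamard u^{(1)}\Hadamard u^{(1)}}{u^{(3)}}+O(n^{-2})$. We evaluate the Hadamard inner product explicitly from \eqref{eq:eigform} with $\Delta\theta^{(k)}\approx k\pi/(n-1)$. Using $\sin^3x=\tfrac34\sin x-\tfrac14\sin3x$, the resonant frequency-$3\Delta\theta^{(1)}$ part pairs with $u^{(3)}$ and contributes $-\tfrac14\cdot n\cdot$(sublattice factor). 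The small phases $\Delta\alpha,\Delta\beta$ only affect $O(1)$ terms, so the sum is $-n/4+O(1)$. Combining this with $|u|^2=n+O(1)$ near the edge and the gap \eqref{eq:gap}, $(\omega^{(3)})^2-(\omega^{(1)})^2\approx 8k_1k_2\pi^2/(2(k_2-k_1)n^2)$, yields
\[
c_{2,1,3}\approx\frac{3\cdot8\cdot(-n/4)}{n^2}\cdot\frac{2(k_2-k_1)n^2}{8k_1k_2\pi^2}=-\frac{3(k_2-k_1)n}{2k_1k_2\pi^2},
\]
with an $O(1)$ error. This is \eqref{eq:c21-estimates}. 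The third-harmonic values $c_{2,3,1}$ and $c_{2,3,3}$ follow in the same way. The numerators are $8\cdot\tfrac34 n$ and $-8\cdot\tfrac14 n$, each divided by $n^2$, and the denominator is $\approx 2k_2-18k_2=-16k_2$, which produces $\mp3/(8k_2n)$ and $1/(8k_2n)$. Here the $O(n^{-2})$ error uses the sharper $O(1)$ norm estimate near the edge.

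\textbf{Remaining indices.} For $j\in[2,n^{1-\eps}]\setminus\{3\}$ the quadruple $(1,1,1,j)$ is not balanced. Lemma~\ref{lem:M-dichotomy}(3) then bounds the numerator by $O(1+j^3/n^2)$, while the prefactor is $n^{-2}$ and the gap is $\Theta(j^2/n^2)$. This gives $O(j^{-2}(1+j^3/n^2))$. The trouble is that the $j^3/n^2$ correction is $O(j^{-2})$ only when $j\lesssim n^{2/5}$. So in this range we would either refine Lemma~\ref{lem:M-vs-hadamard} by exploiting the explicit trigonometric structure, or accept the weaker bound for large $j$. We expect this to be the main obstacle.

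For $j\in(n^{1-\eps},n_1-n^{1-\eps}]$ we use Lemma~\ref{lem:M-dichotomy}(4): the numerator is $\lesssim n^{2\eps}$ and the gap is $\gtrsim n^{-2\eps}$, so the coefficient is $O(n^{-2+4\eps})$. Near the top of the optical band, and for acoustic modes, the gap is $\Theta(1)$ and part (1) bounds the numerator by $32n$, giving $O(n\cdot n^{-2})=O(n^{-1})$. For out-of-band $j$, part (2) gives an $O(1)$ numerator and $|u^{(j)}|=\Theta(1)$, so the coefficient is $O(n^{-3/2})$. The bounds \eqref{eq:c23-estimates-rest} follow identically, since their denominators are $\Theta(1)$.
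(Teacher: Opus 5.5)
\textbf{Gap: the bound $O(j^{-2})$ for large non-resonant $j$ is left open.} Your route is the paper's. You derive \eqref{eq:order3}, project onto $u^{(j)}e^{\pm i\tau}$ and $u^{(j)}e^{\pm3i\tau}$, and feed Lemma~\ref{lem:M-dichotomy}, the norm estimate and the gap \eqref{eq:gap} into \eqref{eq:c213-c233} case by case. Your explicit resonant sums, $8\cdot(-n/4)=-2n$ against $u^{(3)}$ and $8\cdot\tfrac34 n=6n$ against $u^{(1)}$, reproduce exactly the paper's inputs and the constants in \eqref{eq:c21-estimates} and \eqref{eq:c23-estimates}. The other ranges and the third harmonic are handled as in the paper. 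For the out-of-band indices you should also state that the divisor is bounded below by a constant, which the paper takes from the genericity of $k_{3,1},k_{3,2}$.

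The unproved claim is $|c_{2,1,j}|=O(j^{-2})$ on all of $[2,n^{1-\eps}]\setminus\{3\}$. Lemma~\ref{lem:M-dichotomy}(3) with $k=j$ only gives $O(j^{-2}+j/n^2)$. The extra term is harmless only for $j\lesssim n^{2/3}$, not $n^{2/5}$ as you wrote: you must compare $j^3/n^2$ with $1$, not with $j^{-2}$. For $\eps<1/3$ and $j$ near $n^{1-\eps}$ one has $j/n^2=n^{-1-\eps}\gg n^{-2+2\eps}=j^{-2}$. So ``accepting the weaker bound'' does not prove the proposition as stated.

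\textbf{How to close it.} The error $O(k^3/n^2)$ in Lemma~\ref{lem:M-vs-hadamard}, with $k$ the largest of all four indices, is far too crude when only the test index is large. Take $a_1=a_2=a_3=1$. By \eqref{eq:strain-odd}--\eqref{eq:strain-even}, at each interior site one strain of $u^{(1)}$ is $O(1/n)$. The other equals $\pm2u^{(1)}_i+O(1/n)$, because the two sine arguments differ by $2\Delta\alpha^{(1)}-\Delta\theta^{(1)}=O(1/n)$. Hence every interior entry of $M(u^{(1)},u^{(1)},u^{(1)})-8\,u^{(1)}\Hadamard u^{(1)}\Hadamard u^{(1)}$ is $O(1/n)$, and the two endpoint entries are $O(1)$ (in fact $O(n^{-3})$). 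Pairing this vector with $u^{(j)}$, whose entries are sines bounded by $1$, gives an $O(1)$ error uniformly in $j$. Combined with Lemma~\ref{lem:hadamard}(iii), the numerator is then $O(1)$ for every non-balanced $j\le n^{1-\eps}$, and $c_{2,1,j}=O(j^{-2})$ follows.

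The paper's own proof simply asserts the normalized bound $\lesssim n^{-2}$ for all $1<j\ll n$ without addressing this point. Your suspicion is therefore well founded, but you need this argument, or an equivalent one, to close the case.
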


\begin{proof}
The form \eqref{eq:Q2form} and \eqref{eq:c213-c233} follow from \eqref{eq:order3}. To estimate the
coefficients we substitute the inner-product estimates of Lemma~\ref{lem:M-dichotomy}, the norm
estimate \eqref{eq:normest}, and the gap estimate \eqref{eq:gap} into the coefficient formulas
\eqref{eq:c213-c233}. The relevant facts are:
\begin{itemize}[leftmargin=1.6em]
\item $\dfrac{\ip{M(u^{(1)},u^{(1)},u^{(1)})}{u^{(3)}}}{\abs{u^{(1)}}^3\abs{u^{(3)}}}
\approx\dfrac{-2n}{n^2}=-\dfrac2n$, and
$\dfrac{\ip{M(u^{(1)},u^{(1)},u^{(1)})}{u^{(1)}}}{\abs{u^{(1)}}^4}\approx\dfrac{6n}{n^2}=\dfrac6n$
(the index $j=3$ realizes the large $\Theta(n)$ case of Lemma~\ref{lem:M-dichotomy}(3): it is the
unique $j$ for which the mode numbers balance, $1+1+1=j$, so that the inner product attains its
maximal size).
\item For isolated gap frequencies, $\abs{(\omega^{(j)})^2-\tfrac{k^2}{1}(\omega^{(1)})^2}
\ge\tfrac{k^2}{2}(\omega^{(1)})^2=\Omega(k^2)$ for $k>1$.
\item For $1<j\ll n$, $(\omega^{(j)})^2\approx2k_2+\tfrac{k_2}{k_2-k_1}\big(\tfrac{j\pi}{n-1}\big)^2$,
so $(\omega^{(j)})^2-(\omega^{(1)})^2\approx\tfrac{k_1k_2(j^2-1)\pi^2}{2(k_2-k_1)(n-1)^2}$ and
$\ip{M(u^{(1)},u^{(1)},u^{(1)})}{u^{(j)}}/(\abs{u^{(1)}}^3\abs{u^{(j)}})\lesssim n^{-2}$, giving
$|c_{2,1,j}|=O(j^{-2})$.
\item For $n^{1-\eps}<j\le n_1-4$, the numerator is $\lesssim n^{2\eps}/n^2$ while the divisor is
$>k_1k_2 n^{2-2\eps}\pi^2/(4(k_2-k_1)n^2)$, giving $|c_{2,1,j}|=O(n^{-(2-4\eps)})$.
\item For in-band $j\ge n_1-3$ the numerator is $<(1+\eps_1)8n/n^2$ and the divisor exceeds
$\min\{2k_2-2k_1,(1-\eps_2)k_1\}$, giving $|c_{2,1,j}|=O(n^{-1})$.
\item For out-of-band $j$ the numerator is $<\tilde C/n^{3/2}$ and the divisor exceeds $\delta_2$,
giving $|c_{2,1,j}|=O(n^{-3/2})$.
\end{itemize}
The third-harmonic estimates follow identically, using the divisor
$(\omega^{(j)})^2-3^2(\omega^{(1)})^2$, which satisfies
$|(\omega^{(j)})^2-3^2(\omega^{(1)})^2|>\tfrac32(\omega^{(1)})^2=\Omega(1)$ for the near-edge
indices (the isolated frequencies lie in the gap, not above the band); together with the
numerator estimates this gives the $\Theta(1/n)$ size of $c_{2,3,1},c_{2,3,3}$. The relations
\eqref{eq:c21-estimates}--\eqref{eq:c23-estimates-rest} follow.
\end{proof}

Projecting \eqref{eq:order3} onto $\spanop\{u^{(1)}(e^{i\tau}+e^{-i\tau})\}$ gives the leading
frequency shift
\begin{equation}\label{eq:omega2-middle}
\omega^{\ord{2}}=-\frac{3\,\ip{M(u^{(1)},u^{(1)},u^{(1)})}{u^{(1)}}}
{2\abs{u^{(1)}}^4\omega^{\ord{0}}}
=-\frac{9}{\sqrt{2k_2}\,n}+\Theta\!\big(n^{-2}\big)=\Theta\!\big(n^{-1}\big)<0.
\end{equation}
Treating $\eps^2$ as the continuation variable, the frequency obeys
\begin{equation}\label{eq:omega-approx-middle}
\omega\approx\omega^{\ord{0}}+\eps^2\omega^{\ord{2}}
\approx\sqrt{2k_2}+\frac{k_1k_2}{4\sqrt{2k_2}(k_2-k_1)}\Big(\frac{\pi}{n-1}\Big)^2
-\frac{9}{\sqrt{2k_2}\,n}\,\eps^2 .
\end{equation}
Hence the frequency reaches the lower optical edge at
$\eps^2\approx\tfrac{k_1k_2}{36(k_2-k_1)}\tfrac{\pi^2}{n}=\Theta(1/n)$. Since
$|c_{2,1,3}|=\Theta(n)$ by \eqref{eq:c21-estimates}, the expansion \eqref{eq:expansion} ceases to
hold beyond this amplitude: the smallest divisor vanishes as the continued frequency meets the
band edge, exactly the threshold identified in Section~\ref{sec:intro}.

\begin{figure}[t]
\centering
\begin{subfigure}{0.46\textwidth}\centering
\includegraphics[width=\linewidth]{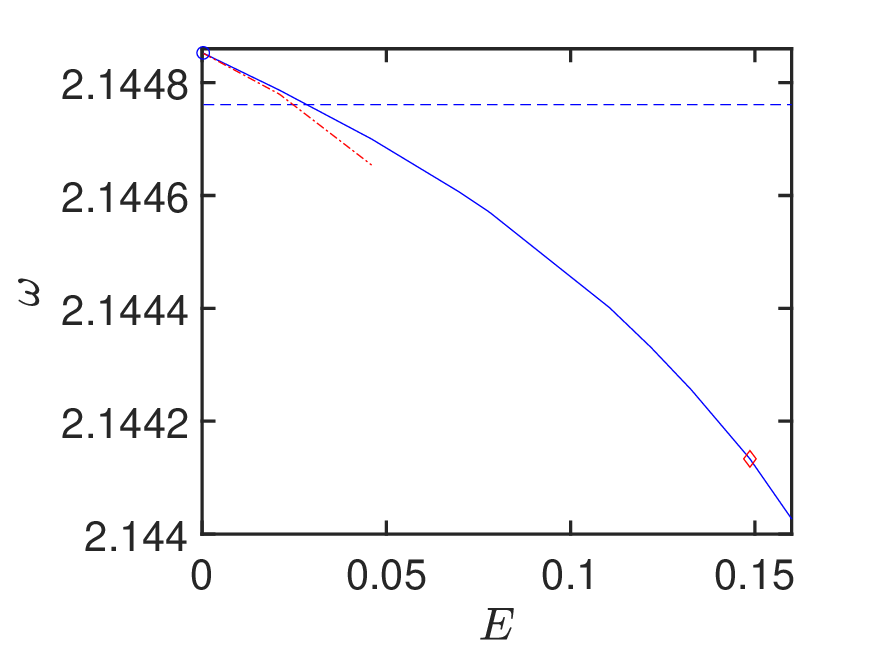}
\caption{}
\end{subfigure}\hfill
\begin{subfigure}{0.46\textwidth}\centering
\includegraphics[width=\linewidth]{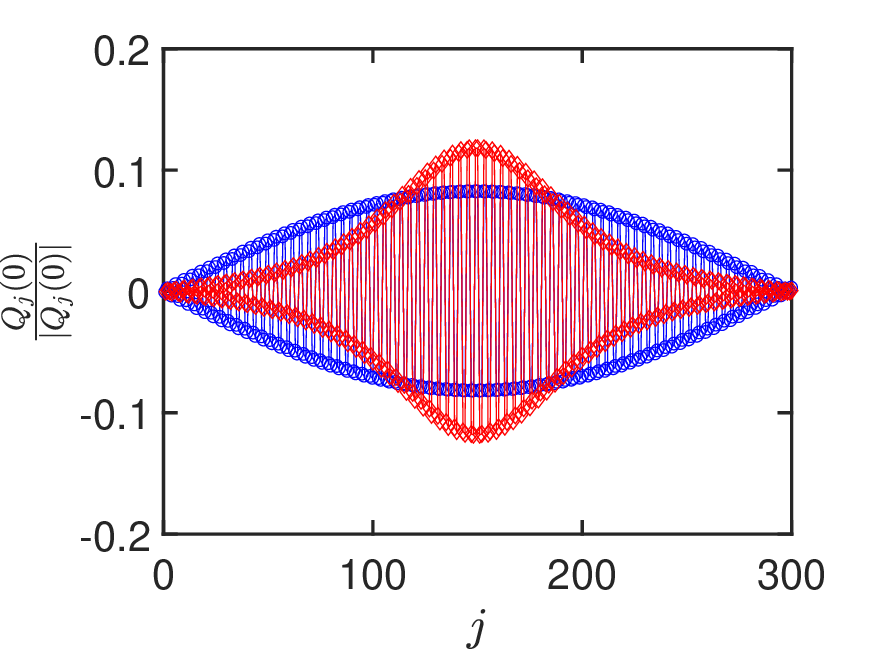}
\caption{}
\end{subfigure}
\caption{Regime \textup{(R1)} (middle-localized), $2n=200$, $k_1=1$, $k_2=2.3$,
$k_{3,1}=1.3$, $k_{3,2}=3.5$. (a) The branch frequency $\omega$ (solid) and its leading
approximation \eqref{eq:omega-approx-middle} (dash--dotted) as functions of the energy $E$, drift
downward from the lower optical edge $\sqrt{2k_2}$ (dashed horizontal). (b) The normalized
periodic orbit $Q(0)$ at small (blue) and large (red) amplitude: the uniformly extended near-edge
mode evolves into a state concentrated in the middle of the chain.}
\label{fig:middle}
\end{figure}

\subsection{Localization measure and the middle-localized profile}
To quantify localization we use, for a vector $f$ and an index window $[\ell_1,\ell_2]$,
\begin{equation}\label{eq:pdef}
p_{\ell_1,\ell_2}[f]=\frac{\bigl|\ip{I_{\ell_1,\ell_2}}{f^2}\bigr|}{\norm{f}^2},
\end{equation}
where $I_{\ell_1,\ell_2}$ has entries $1$ on $[\ell_1,\ell_2]$ and $0$ elsewhere; for a central
window $p$ increases as the orbit concentrates in the middle. Up to small $\eps$, $Q(0)$ is
governed by the in-phase superposition of $u^{(1)}$ and the resonant correction $u^{(3)}$, whose
coefficient $c_{2,1,3}=\Theta(n)$ drives the central concentration. The on-site definition of
$p_{m_1,2n+1-m_1}[\widetilde Q(0)]$ transfers verbatim, with the single change that the FPUT
$c_{2,1,3}$ equals $8$ times its on-site value (compare \eqref{eq:M-vs-hadamard}).

\begin{remark}[Closed-form approximation, regime R1]\label{rem:p-middle}
For odd $m_1\approx n/6$,
\begin{equation}\label{eq:p-middle}
p_{m_1,2n+1-m_1}[\widetilde Q(0)]\approx
\frac{(\tfrac56+\tfrac1{2\pi})+8n\eps^2\frac{3(k_2-k_1)(2-\sqrt3)}{32k_1k_2\pi^3}
+64n^2\eps^4(\tfrac56+\tfrac1{3\pi})\big[\frac{3(k_2-k_1)}{16k_1k_2\pi^2}\big]^2}
{1+64n^2\eps^4\big[\frac{3(k_2-k_1)}{16k_1k_2\pi^2}\big]^2}.
\end{equation}
This increases with $\eps$ for small $\eps$ ($n\eps^2\ll1$), confirming the growth of central
localization before the frequency leaves the band (Figure~\ref{fig:middle}).
\end{remark}

\section{The edge-localized family (regime R2)}\label{sec:boundary}

We now treat the second admissible boundary configuration
$\abs{k_{3,1}-k_2}\notapprox k_2$, $\abs{k_{3,2}-2k_2}\ll n^{-4}$, $k_{3,1}<2k_2$,
$k_{3,2}\le2k_2$. This regime tunes the right boundary spring to the optical edge, which changes
the near-edge spacing to the odd-multiple values $\Delta\theta^{(k)}\approx
(2k-1)\pi/(2(n-1))$ of \eqref{eq:dtheta} and moves the index of the dominant second-order
correction from the central index $j=3$ to the boundary index $j=2$. The analysis parallels
Section~\ref{sec:middle}; we record the modifications.

The reduction of Lemma~\ref{lem:M-vs-hadamard} persists, the only change being the index balance: in place of
$\sum_{i}(-1)^{j_i}a_i$ one tracks $2a_1-1+\sum_{i=2}^{4}(-1)^{j_i}(2a_i-1)$ and its vanishing on
$2\Z$. There is no essential difference.

\begin{lemma}[Bond map versus Hadamard cube, regime R2]\label{lem:M-vs-hadamard-R2}
Under $\abs{k_{3,1}-k_2}\notapprox k_2$ and $\abs{k_{3,2}-2k_2}\ll n^{-4}$, let
$1\le a_1,a_2,a_3\ll n$, $k=\max\{a_1,a_2,a_3\}$ (and $k=\max\{a_1,a_2,a_3,a_4\}$ if $a_4\ll n$).
Then for $a_4\ne n_1+1$,
\begin{equation}\label{eq:M-vs-hadamard-R2}
\Bigl|\ip{M(u^{(a_1)},u^{(a_2)},u^{(a_3)})}{u^{(a_4)}}
-8\,\ip{u^{(a_1)}\Hadamard u^{(a_2)}\Hadamard u^{(a_3)}}{u^{(a_4)}}\Bigr|
=O\!\Big(\frac{k^3}{n^2}\Big),\qquad a_4\ll n .
\end{equation}
Consequently Lemma~\ref{lem:M-dichotomy} holds verbatim in regime \textup{(R2)}. In particular,
if $a_1,a_2,a_3=\Theta(1)$ and $a_1+a_2+a_3<a_4\ll n$ (so that no choice of signs balances the
indices), then $\bigl|\ip{M(u^{(a)},u^{(b)},u^{(c)})}{u^{(d)}}\bigr|=O(1)$.
\end{lemma}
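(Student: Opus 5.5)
The strategy is to repeat the computation behind Lemma~\ref{lem:M-vs-hadamard}, now for the (R2) parametrization \eqref{eq:eigform} with $\Delta\theta^{(k)}\approx(2k-1)\pi/(2(n-1))$. We need only keep track of where the tuned right boundary enters.

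First write $\ip{M(u^{(a_1)},u^{(a_2)},u^{(a_3)})}{u^{(a_4)}}$ as a sum over bonds. Discrete summation by parts, using $v_0=v_{2n+1}=0$, gives
\[
\ip{M(x,y,z)}{w}=\sum_{\text{bonds }(i,i+1)}(\delta x)_i(\delta y)_i(\delta z)_i(\delta w)_i ,
\]
including the two wall bonds, where $\delta v$ denotes the strain. Next, insert \eqref{eq:eigform}. For a near-edge optical mode, two neighbouring sites sit in antiphase at leading order. Within a cell the strain is therefore $2u_{2j-1}$ plus a correction of size $O(\Delta\alpha^{(a)}+\Delta\beta^{(a)})$. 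Across a $k_2$ bond, the $(-1)^{j-1}$ sign flip gives the strain $-2u_{2j}$ plus a correction of size $O(\Delta\theta^{(a)})$; these are the sizes recorded in \eqref{eq:strain-odd}--\eqref{eq:strain-even}. Each of the four strain factors therefore contributes a factor $2$ times a site value. Summing over both bond types, each site is counted once at the bond to its left and once at the bond to its right. This reproduces $8\sum_i u^{(a_1)}_iu^{(a_2)}_iu^{(a_3)}_iu^{(a_4)}_i$ up to two kinds of error: the sublattice mismatch, and the $O(k/n)$ phase corrections.

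To control these errors, expand each product of sines into a sum of cosines of the form $\cos(\phi_0+(j-1)\Theta)$. Here $\Theta=\sum_i\pm\Delta\theta^{(a_i)}$, which in (R2) equals $\tfrac{\pi}{2(n-1)}\big(2a_1-1+\sum_{i\ge2}\pm(2a_i-1)\big)$ up to higher-order corrections. The first-order corrections carry a factor $O(k/n)$ in their amplitude. The key step is that these first-order terms cancel pairwise. The left-bond and right-bond contributions at the same site enter with opposite signs once we use the symmetric relation between $\Delta\alpha$ and $\Delta\beta$, exactly as in the (R1) proof. What remains is a second-order remainder of amplitude $O(k^2/n^2)$. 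This remainder multiplies a trigonometric sum $\sum_j\cos(\phi_0+(j-1)\Theta)$. When $\Theta$ is nonresonant, this sum is bounded by $O(n/k)$ in the worst balanced-adjacent case and by $O(1)$ otherwise; when $\Theta$ is resonant, it has size $\Theta(n)$. We must check whether the resonant case spoils the target error, since it would give a contribution of order $(k^2/n^2)\cdot n$. My plan is to show that on the resonant set the second-order coefficient also cancels between the odd and even bonds, so that only an $O(k^3/n^3)\cdot n=O(k^3/n^2)$ term survives. Off the resonant set, the contribution is $(k^2/n^2)\cdot O(n/k)=O(k/n)\le O(k^3/n^2)$ whenever $k\gtrsim\sqrt n$; for smaller $k$, we can use the sharper bound $O(1/\sin(\Theta/2))$.

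The boundary terms are the main obstacle. In (R2) the condition $|k_{3,2}-2k_2|\ll n^{-4}$ means that at the right end the mode does not decay as in the generic case, so the wall bond at site $2n$ carries an $O(1)$ strain. Here I would use the explicit form of $u^{(k)}$ at $j=n$ together with the (R2) phase relation. The aim is to show that the right-wall term $u_{2n}^4$ product and the neighbouring interior bond combine to $8\prod u_{2n}^{(a_i)}$ plus $O(k^2/n^2)$, which is admissible. The index $a_4=n_1+1$ must be excluded because in (R2) that mode is the right edge state (it sits at or just beyond the band edge), and \eqref{eq:eigform} does not apply to it. With the bound \eqref{eq:M-vs-hadamard-R2} established, the consequence stated in the lemma follows. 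Lemma~\ref{lem:hadamard}(iii) in its (R2) form gives the $\Theta(1)$ versus $\Theta(n)$ dichotomy through the balance $2a_1-1+\sum\pm(2a_i-1)\in 2\Z$ vanishing. If $a_1+a_2+a_3<a_4$ with bounded $a_1,a_2,a_3$, no balance is possible, so the on-site value is $O(1)$ and the error term is $O(k^3/n^2)=O(1)$ once $a_4\ll n^{2/3}$. For larger $a_4$, I would instead appeal to the averaged bound of Lemma~\ref{lem:M-dichotomy}(4)-type arguments.
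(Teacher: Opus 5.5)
Your summation-by-parts identity $\ip{M(x,y,z)}{w}=\sum_{\text{bonds}}\delta x\,\delta y\,\delta z\,\delta w$ (wall bonds included) is correct. It is a cleaner starting point than the site-by-site $D^{\pm}$ bookkeeping ($K_1$--$K_5$) of Appendix~\ref{app:reduction}, which the paper simply declares to carry over to (R2). However, your leading-order picture is wrong, and both steps you defer fail.

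At the lower optical edge each cell moves almost rigidly. The $k_1$-strain $u_{2j}-u_{2j-1}=(-1)^j2\sin\Delta\alpha\cos(\cdot)$ is $O(k/n)$, and only the $k_2$ bonds carry $\Theta(1)$ strain. The factor $8$ is $2^4$ per $k_2$ bond shared by two sites; it is not a left/right double count, which would give $16$. To make this exact, put
\[
\gamma_i=\Delta\alpha^{(a_i)}-\tfrac12\Delta\theta^{(a_i)},\quad \phi^{(i)}_j=\Delta\beta^{(a_i)}-(j-\tfrac12)\Delta\theta^{(a_i)},\quad \sigma_i=\sin\phi^{(i)}_j\cos\gamma_i,\quad \rho_i=\cos\phi^{(i)}_j\sin\gamma_i .
\]
Then $u^{(a_i)}_{2j}=-(-1)^j(\sigma_i-\rho_i)$ and $u^{(a_i)}_{2j+1}=(-1)^j(\sigma_i+\rho_i)$, and each cell gives
\[
\prod_{i=1}^{4}\bigl(u^{(a_i)}_{2j+1}-u^{(a_i)}_{2j}\bigr)-8\Bigl(\prod_{i=1}^{4}u^{(a_i)}_{2j}+\prod_{i=1}^{4}u^{(a_i)}_{2j+1}\Bigr)
=-16\sum_{i<l}\rho_i\rho_l\,\sigma_p\sigma_q-16\prod_{i=1}^{4}\rho_i ,
\]
where $\{p,q\}$ is the pair complementary to $\{i,l\}$. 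The first-order terms do cancel, as you anticipated, but the second-order ones do not. The equation of motion at an even site gives
\[
\sin\gamma_i=2k_1(\omega^{(a_i)})^{-2}\sin\Delta\alpha^{(a_i)}\cos\bigl(\tfrac12\Delta\theta^{(a_i)}\bigr),
\]
so $\gamma_i=\Theta(a_i/n)\neq0$. For $a_1=\dots=a_4=1$ the cell sum is therefore $\approx-96\gamma_1^2\sum_j\sin^2\phi_j\cos^2\phi_j\approx-12n\gamma_1^2=\Theta(1/n)$, not $O(n^{-2})$. So the resonant second-order cancellation you plan to prove does not exist. Off resonance, $1/\sin(\cdot/2)$ is no sharper than $O(n)$ for the smallest nonzero combined index. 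The honest interior bound is $O(k^2/n)$.

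The decisive failure is at the right wall, which you rightly single out. The wall bond has strain $-u_{2n}$ (factor $1$, not $2$). The adjacent $k_1$ bond is $O(k^4/n^4)$, and the $k_2$ bond $(2n-2,2n-1)$ is already used to match sites $2n-2,2n-1$. So $-7\prod_iu^{(a_i)}_{2n}$ is left unmatched; this is exactly the ``$7$'' term of \eqref{eq:M2n}.

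In (R2) the tuned spring forces $\cos\phi^{(i)}_n\approx0$ (exactly $0$ when $k_{3,2}=2k_2$, i.e.\ $u_{2n+1}=-u_{2n}$ for the fictitious site), so $|u^{(a_i)}_{2n}|\approx\cos\gamma_i\approx1$. Hence in (R2) the quantity inside the absolute value in \eqref{eq:M-vs-hadamard-R2} equals $-7\prod_iu^{(a_i)}_{2n}+O(k^2/n)$. This is about $-7$ already for $a_1=\dots=a_4=1$. The displayed $O(k^3/n^2)$ is therefore false in (R2), and no argument can close this step.

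The paper's one-line transfer has the same blind spot. Its bound $|M(\cdots)_{2n}|=O(k^3/n^3)$ uses the (R1) relation $(n-1)\Delta\theta^{(a_i)}=a_i\pi+\widetilde\theta^{(a_i)}$, whereas in (R2) $(n-1)\Delta\theta^{(a_i)}\approx a_i\pi-\tfrac{\pi}{2}$.

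What can be proved is an $O(1)$ error, which is all that Lemma~\ref{lem:M-dichotomy} and Proposition~\ref{prop:Q2-R2} actually use. For the ``in particular'' with $a_1,a_2,a_3$ bounded, the cell terms containing $\gamma_4=O(a_4/n)$ multiply oscillatory sums of size $O(n/a_4)$. The interior is then $O(1/n)$ and the total correction is $O(1)$. This replaces your appeal to part~(4), which gives only $n^{2\eps}$. The on-site (R2) dichotomy must still be imported separately, since Lemma~\ref{lem:hadamard}(iii) is stated only for $k_{3,2}\notapprox2k_2$.
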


\begin{proposition}[Estimate of $Q^{\ord{2}}$, regime R2]\label{prop:Q2-R2}
In regime \textup{(R2)},
\begin{equation}\label{eq:Q2form-R2}
Q^{\ord{2}}(\tau)=\sum_{j=2}^{2n}c_{2,1,j}\frac{u^{(j)}}{\abs{u^{(j)}}}(e^{i\tau}+e^{-i\tau})
+\sum_{j=1}^{2n}c_{2,3,j}\frac{u^{(j)}}{\abs{u^{(j)}}}(e^{3i\tau}+e^{-3i\tau}),
\end{equation}
where now the resonant first-harmonic coefficient is at $j=2$:
\begin{equation}\label{eq:c21-R2}
\Bigl|c_{2,1,2}+\tfrac{6(k_2-k_1)n}{k_1k_2\pi^2}\Bigr|=O(1),
\end{equation}
and, for $j\ne2$,
\begin{equation}\label{eq:c21-R2-rest}
|c_{2,1,j}|=
\begin{cases}
O\!\big(j^{-2}\big), & j\in[2,n^{1-\eps}]\setminus\{2\},\\[1mm]
O\!\big(n^{-(2-4\eps)}\big), & j\in(n^{1-\eps},\,n_1-n^{1-\eps}],\\[1mm]
O\!\big(n^{-1}\big), & j>n_1-n^{1-\eps},\ (\omega^{(j)})^2\in(0,2k_1)\cup(2k_2,2k_1+2k_2),\\[1mm]
O\!\big(n^{-3/2}\big), & (\omega^{(j)})^2\in(2k_1,2k_2)\cup(2k_1+2k_2,+\infty),
\end{cases}
\end{equation}
and the third-harmonic coefficients satisfy
\begin{equation}\label{eq:c23-R2}
\Bigl|c_{2,3,1}+\tfrac{3}{8k_2 n}\Bigr|=O\!\big(n^{-2}\big),\qquad
\Bigl|c_{2,3,2}-\tfrac{1}{8k_2 n}\Bigr|=O\!\big(n^{-2}\big),
\end{equation}
together with, for the remaining indices,
\begin{equation}\label{eq:c23-R2-rest}
|c_{2,3,j}|=
\begin{cases}
O\!\big(n^{-1}\big), & (\omega^{(j)})^2\in(0,2k_1)\cup(2k_2,2k_1+2k_2),\\[1mm]
O\!\big(n^{-3/2}\big), & (\omega^{(j)})^2\in(2k_1,2k_2)\cup(2k_1+2k_2,+\infty).
\end{cases}
\end{equation}
\end{proposition}

\begin{proof}
Identical to Proposition~\ref{prop:Q2}, replacing the spacing \eqref{eq:dtheta} by its odd-multiple
version and the resonant index $3$ by $2$. The key inputs are
$\ip{M(u^{(1)},u^{(1)},u^{(1)})}{u^{(2)}}/(\abs{u^{(1)}}^3\abs{u^{(2)}})\approx-2n/n^2=-2/n$,
$\abs{u^{(1)}}^4\approx6n/n^2=6/n$, the near-edge dispersion
$(\omega^{(j)})^2\approx2k_2+\tfrac{k_2}{k_2-k_1}\big(\tfrac{(2j-1)\pi}{2(n-1)}\big)^2$, the gap
$(\omega^{(j)})^2-(\omega^{(1)})^2\approx\tfrac{k_1k_2((2j-1)^2-1)\pi^2}{2(k_2-k_1)4(n-1)^2}$ for
$2<j\ll n$, and the out-of-band decay bounds; substituting into \eqref{eq:c213-c233} yields
\eqref{eq:c21-R2}--\eqref{eq:c23-R2-rest}.
\end{proof}

Projecting onto $\spanop\{u^{(1)}(e^{i\tau}+e^{-i\tau})\}$ gives the same leading frequency shift
\begin{equation}\label{eq:omega2-R2}
\omega^{\ord{2}}=-\frac{3\,\ip{M(u^{(1)},u^{(1)},u^{(1)})}{u^{(1)}}}
{2\abs{u^{(1)}}^4\omega^{\ord{0}}}=-\frac{9}{\sqrt{2k_2}\,n}+\Theta\!\big(n^{-2}\big)
=\Theta\!\big(n^{-1}\big),
\end{equation}
so that
\begin{equation}\label{eq:omega-approx-R2}
\omega\approx\sqrt{2k_2}+\frac{k_1k_2}{16\sqrt{2k_2}(k_2-k_1)}\Big(\frac{\pi}{n-1}\Big)^2
-\frac{9}{\sqrt{2k_2}\,n}\,\eps^2 ,
\end{equation}
and the frequency crosses the lower optical edge at
$\eps^2\approx\tfrac{k_1k_2}{144(k_2-k_1)}\tfrac{\pi^2}{n}=\Theta(1/n)$. As before
$|c_{2,1,2}|=\Theta(n)$ once $\eps^2\gg1/n$, so the expansion fails beyond the threshold.

\begin{remark}[Closed-form approximation, regime R2]\label{rem:p-boundary}
With the boundary localization measure $p_{m_1,2n}[\widetilde Q(0)]$ defined as in
\eqref{eq:pdef} (right-end window), for even $m_1\approx n/3$,
\begin{equation}\label{eq:p-boundary}
p_{m_1,2n}[\widetilde Q(0)]\approx 1-
\frac{(\tfrac16-\tfrac1{2\pi})-8n\eps^2\frac{3(k_2-k_1)(2-\sqrt3)}{8k_1k_2\pi^3}
+64n^2\eps^4(\tfrac16-\tfrac1{3\pi})\big[\frac{3(k_2-k_1)}{4k_1k_2\pi^2}\big]^2}
{1+64n^2\eps^4\big[\frac{3(k_2-k_1)}{4k_1k_2\pi^2}\big]^2},
\end{equation}
which increases with $\eps$, confirming the growth of boundary localization before band crossing
(Figure~\ref{fig:boundary}).
\end{remark}

\begin{figure}[t]
\centering
\begin{subfigure}{0.46\textwidth}\centering
\includegraphics[width=\linewidth]{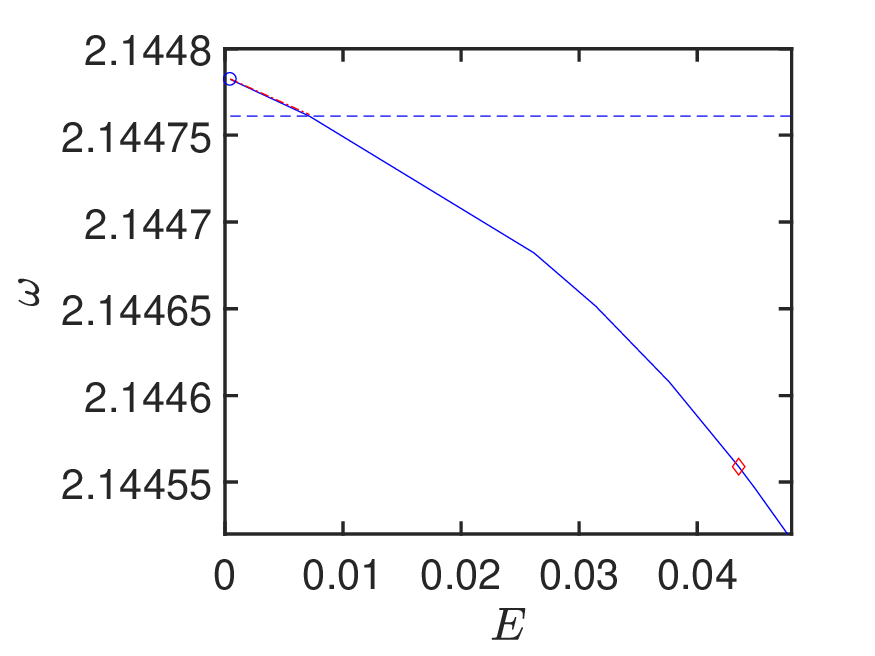}
\caption{}
\end{subfigure}\hfill
\begin{subfigure}{0.46\textwidth}\centering
\includegraphics[width=\linewidth]{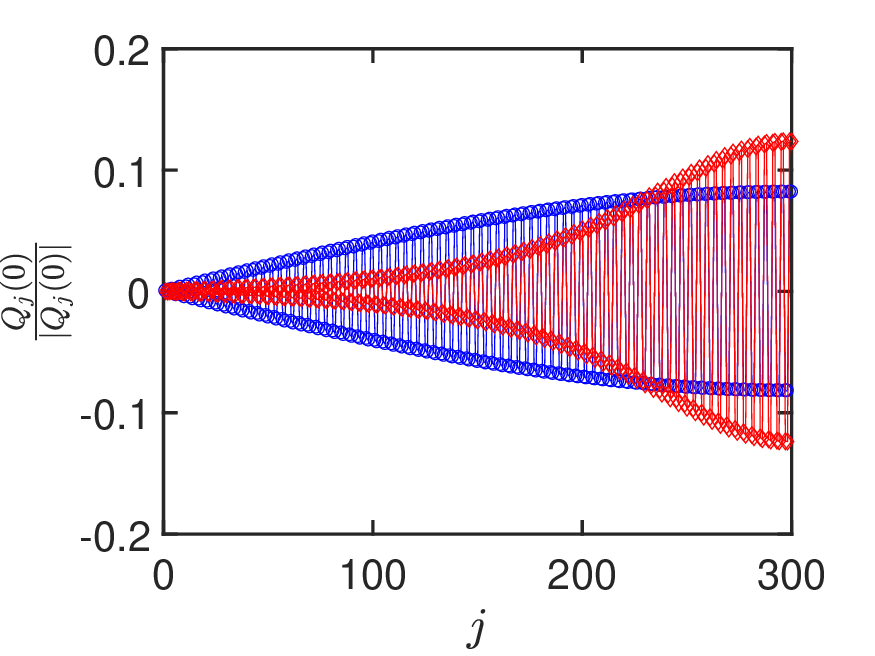}
\caption{}
\end{subfigure}
\caption{Regime \textup{(R2)} (edge localization), $2n=200$, $k_1=1$, $k_2=2.3$,
$k_{3,1}=1.3$, $k_{3,2}=4.6=2k_2$. (a) The branch frequency $\omega$ (solid) and its leading
approximation \eqref{eq:omega-approx-R2} (dash--dotted) drift down from the lower optical edge
(dashed horizontal). (b) The normalized periodic orbit $Q(0)$ at small (blue) and large (red)
amplitude: the uniformly extended mode evolves into a state concentrated at the right boundary.}
\label{fig:boundary}
\end{figure}

\section{Conclusion}\label{sec:conclusion}

We have shown that the finite-size framework introduced for on-site nonlinearity in
\cite{SongXuArxiv} carries over to the FPUT interaction nonlinearity, and that the transfer rests
on the reduction identity of Lemma~\ref{lem:M-vs-hadamard}, which equates the FPUT inner products
with the on-site Hadamard ones up to an $O(k^3/n^2)$ error. The main conclusions are the
following. First, the radius of convergence is a spectral quantity: it equals, up to constants,
the square root of the distance from the continued near-edge frequency to the band edge,
$\eps_{\mathrm{crit}}=\Theta(\sqrt{\mathrm{dist}((\omega^{(1)})^2,\partial\,\mathrm{band})})
=\Theta(1/\sqrt n)$, because the dominant perturbation coefficient at each order is amplified by
the same near-edge divisor of size $\Theta(1/n^2)$. This is meaningful only for a finite lattice,
where that distance is positive, and it is independent of which band edge is used as the anchor.
Second, a single interaction nonlinearity produces two distinct localized families, middle- and
edge-localized, depending only on the right boundary spring: the boundary fixes the near-edge
index spacing, which in turn selects, through the index-balance condition of
Lemma~\ref{lem:M-dichotomy}(3), whether the dominant second-order correction sits at the central
index ($j=3$, regime R1) or the boundary index ($j=2$, regime R2). This selection is set by the
discrete spectrum and is invisible to continuum or envelope reductions.

The diatomic chain is only a test case. The framework refers to no specific nonlinearity or band,
and the same strategy should apply to other odd short-range nonlinearities and, with a band edge
replaced by a curve or surface, to higher dimensions, with a convergence radius set by the same
$\Theta(1/n)$ shift and $\Theta(1/n^2)$ gap. A natural further question is the semi-infinite
chain. There the half-infinite operator has continuous spectrum and no isolated near-edge
frequency, so one would instead continue a genuine boundary-induced edge state, whose gap distance
is $\Theta(1)$ rather than $\Theta(1/n^2)$; a half-line Lyapunov--Schmidt or a spatial-dynamics
(centre-manifold) analysis would then test whether the same edge-detachment-and-localization
mechanism persists, now with an $O(1)$ convergence radius. We regard the principle that the
convergence radius of a finite-lattice perturbation series is a spectral quantity, to be estimated
at the same order as the gaps controlling it, as the main contribution of this work.

\section*{Acknowledgements}
The authors thank colleagues at Huazhong University of Science and Technology for helpful
discussions.

\appendix

\medskip\noindent
The appendices collect the proofs that underlie Theorem~\ref{thm:main}: the existence and
reduction of the periodic branch (Appendix~\ref{sec:proofs}), the reduction identity
(Appendix~\ref{app:reduction}), and the inductive convergence estimates
(Appendix~\ref{app:convergence}), the last organized around a unified weighted-sum machinery used
in both boundary regimes.

\section{Existence and reduction of the periodic branch}\label{sec:proofs}
Define, on $\Hper(\R/2\pi\Z\times\R^{2n})$,
\begin{equation}\label{eq:g-of-omega-Q}
g(\omega,Q(\tau))=\Big(\omega^2\frac{\dd^2}{\dd\tau^2}-L\Big)Q(\tau)-N(Q(\tau)),
\end{equation}
so that $g(\omega,0)=0$ and the linearization at the trivial branch is
$(\omega^{(1)})^2\tfrac{\dd^2}{\dd\tau^2}-L$, with kernel and cokernel \eqref{eq:kernel}. Let $P$ be
the projection of $\Hper$ onto $\spanop\{u^{(1)}e^{i\tau},u^{(1)}e^{-i\tau}\}$, $I$ the identity,
and split $\Hper=\spanop\{u^{(1)}e^{i\tau},u^{(1)}e^{-i\tau}\}\oplus Y$. Writing
\begin{equation}\label{eq:Q-split}
Q(\tau)=Ae^{i\tau}u^{(1)}+\bar A e^{-i\tau}u^{(1)}+v,\qquad v\in Y,
\end{equation}
equation \eqref{eq:fput-rescaled} is equivalent to the pair
\begin{align}
P\Big(\omega^2\tfrac{\dd^2}{\dd\tau^2}-L\Big)Q(\tau)&=PN(Q(\tau)),\label{eq:LSP}\\
(I-P)\Big(\omega^2\tfrac{\dd^2}{\dd\tau^2}-L\Big)Q(\tau)&=(I-P)N(Q(\tau)).\label{eq:LSQ}
\end{align}
For \eqref{eq:LSQ}, set
$g_1(\omega,A,\bar A,v)=\big(\omega^2\tfrac{\dd^2}{\dd\tau^2}-L\big)v-(I-P)N(Q(\tau))$. Then
$g_1(\omega^{(1)},0,0,0)=0$ and $\delta g_1/\delta v|_{(\omega^{(1)},0,0,0)}
=(\omega^{(1)})^2\tfrac{\dd^2}{\dd\tau^2}-L$ is a bijection onto $Y$; the implicit function
theorem yields $v=v(\omega,A,\bar A)$ with $v=\Theta(\abs{A}^2)$ near the origin. The invariances
of \eqref{eq:fput-rescaled} under $\tau\mapsto\tau+\eta$ and $\tau\mapsto-\tau$ give
\begin{equation}\label{eq:v-invariance}
v(\omega,A,\bar A)(\tau+\eta)=v(\omega,Ae^{i\eta},\bar A e^{-i\eta})(\tau),\qquad
v(\omega,A,\bar A)(-\tau)=v(\omega,\bar A,A)(\tau).
\end{equation}
The projected equation \eqref{eq:LSP} is equivalent to $g_2=g_3=0$, where
\begin{align}
g_2(\omega,A,\bar A,v)&=\tfrac12\!\int_0^{2\pi}\!\!\big(Ae^{i\tau}u^{(1)}+\bar A e^{-i\tau}u^{(1)}\big)\cdot
\Big[\big(\omega^2\tfrac{\dd^2}{\dd\tau^2}-L\big)Q-N(Q)\Big]\dd\tau,\label{eq:g2def}\\
g_3(\omega,A,\bar A,v)&=\tfrac1{2i}\!\int_0^{2\pi}\!\!\big(Ae^{i\tau}u^{(1)}-\bar A e^{-i\tau}u^{(1)}\big)\cdot
\Big[\big(\omega^2\tfrac{\dd^2}{\dd\tau^2}-L\big)Q-N(Q)\Big]\dd\tau.\label{eq:g3def}
\end{align}
The invariances \eqref{eq:v-invariance} imply
$g_2(\omega,Ae^{i\eta},\bar A e^{-i\eta},v)=g_2(\omega,A,\bar A,v)=g_2(\omega,\bar A,A,v)$, with the
analogous relations for $g_3$ and a sign change under reflection; taking $\eta=-2\arg A$ forces
$g_3\equiv0$. Restricting to time-even solutions and taking $\eta=-\arg A$ gives
$g_2(\omega,\abs{A},\abs{A},v)=:g_2(\omega,\abs A)$ with
$Q(\abs{A},v,\tau)=2\abs{A}u^{(1)}\cos\tau+v(\abs{A},\omega)$. Writing the strains of this $Q$ at
each site and cubing them, the FPUT nonlinearity \eqref{eq:Nform} expands into the explicit
trigonometric form
\begin{equation}\label{eq:N-expanded}
\begin{aligned}
N\big(Q(\abs{A},\omega,\tau)\big)
=\ &8\abs{A}^3\Big[(u^{(1)}-\Sm u^{(1)})^{\Hadamard3}-(\Sp u^{(1)}-u^{(1)})^{\Hadamard3}\Big]\cos^3\tau\\
&+12\abs{A}^2\Big[(u^{(1)}-\Sm u^{(1)})^{\Hadamard2}\Hadamard(v-\Sm v)
-(\Sp u^{(1)}-u^{(1)})^{\Hadamard2}\Hadamard(\Sp v-v)\Big]\cos^2\tau\\
&+6\abs{A}\Big[(u^{(1)}-\Sm u^{(1)})\Hadamard(v-\Sm v)^{\Hadamard2}
-(\Sp u^{(1)}-u^{(1)})\Hadamard(\Sp v-v)^{\Hadamard2}\Big]\cos\tau\\
&+\Big[(v-\Sm v)^{\Hadamard3}-(\Sp v-v)^{\Hadamard3}\Big],
\end{aligned}
\end{equation}
in which the first term, through $\cos^3\tau=\tfrac34\cos\tau+\tfrac14\cos3\tau$, supplies the
resonant first-harmonic forcing that drives the reduction. Using
$g_4(\omega,\abs{A})=g_2(\omega,\abs{A})/\abs{A}^2$ and $v=\Theta(\abs{A}^2)$, only the
$8\abs A^3\cos^3\tau$ term in \eqref{eq:N-expanded} contributes at the bifurcation point, so
$g_4(\omega^{(1)},0)=0$ and
\begin{equation}\label{eq:g4-deriv}
\frac{\partial g_4}{\partial\omega}\Big|_{(\omega,\abs A)=(\omega^{(1)},0)}
=-4\omega^{(1)}\abs{u^{(1)}}^2\!\int_0^{2\pi}\!\cos^2\tau\,\dd\tau
=-4\pi\,\omega^{(1)}\abs{u^{(1)}}^2\ne0 .
\end{equation}
The implicit function theorem then yields $\omega=\omega(\abs A)$ and $v=v(\omega(\abs A),\abs A)$
in a neighbourhood of $(\omega^{(1)},0)$, completing the proof. \qed

\section{The reduction identity}\label{app:reduction}
\subsection{Componentwise structure of the trilinear form $M$}
By the parametrization \eqref{eq:eigform}, for an eigenmode $u^{(\ell)}$ with $\ell\ll n$ the two
spring strains are
\begin{align}
(\Sp u^{(\ell)}-u^{(\ell)})_{2j-1}=(u^{(\ell)}-\Sm u^{(\ell)})_{2j}
&=u^{(\ell)}_{2j}-u^{(\ell)}_{2j-1}\nonumber\\
&=(-1)^{j}2\sin\Delta\alpha^{(\ell)}\cos\!\big(\Delta\beta^{(\ell)}-(j-1)\Delta\theta^{(\ell)}\big),
\label{eq:strain-odd}\\
(\Sp u^{(\ell)}-u^{(\ell)})_{2j}=(u^{(\ell)}-\Sm u^{(\ell)})_{2j+1}
&=u^{(\ell)}_{2j+1}-u^{(\ell)}_{2j}\nonumber\\
&=(-1)^{j}\big[\sin(\Delta\alpha^{(\ell)}+\Delta\beta^{(\ell)}-j\Delta\theta^{(\ell)})\nonumber\\
&\qquad\qquad+\sin(-\Delta\alpha^{(\ell)}+\Delta\beta^{(\ell)}-(j-1)\Delta\theta^{(\ell)})\big].\label{eq:strain-even}
\end{align}
In particular $|u^{(\ell)}_{2j}-u^{(\ell)}_{2j-1}|=O(\ell/n)$, while the even-site strains are
$\Theta(1)$ for the near-edge modes used here, the quantitative form of Remark~\ref{rem:edge}.
Define the strain remainder $D^{\pm}_{\ell,j}$ by
\begin{equation}\label{eq:Ddef}
D^{\pm}_{\ell,j}=(u^{(\ell)}_{2j+1}-u^{(\ell)}_{2j})
-(-1)^{j}2\sin\!\big(\pm\Delta\alpha^{(\ell)}+\Delta\beta^{(\ell)}-(j-\tfrac12\pm\tfrac12)\Delta\theta^{(\ell)}\big),
\end{equation}
so that, expanding the trigonometric factors,
\begin{equation}\label{eq:Dform}
\begin{aligned}
D^{\pm}_{\ell,j}=(-1)^j\Big[&\sin(\Delta\beta^{(\ell)}-j\Delta\theta^{(\ell)})
\big(\pm(\cos(\Delta\theta^{(\ell)}-\Delta\alpha^{(\ell)})-\cos\Delta\alpha^{(\ell)})\big)\\
&+\cos(\Delta\beta^{(\ell)}-j\Delta\theta^{(\ell)})
\big(\pm(\sin(\Delta\theta^{(\ell)}-\Delta\alpha^{(\ell)})-\sin\Delta\alpha^{(\ell)})\big)\Big],
\end{aligned}
\end{equation}
and one checks directly that $D^{+}_{\ell,j}=-D^{-}_{\ell,j}$.

\subsection{Proof of Lemma~\ref{lem:M-vs-hadamard}}
Write $G(a_1,a_2,a_3)$ for the set of cyclic permutations of $(a_1,a_2,a_3)$,
\[
G(a_1,a_2,a_3)=\{(a_1,a_2,a_3),\,(a_2,a_3,a_1),\,(a_3,a_1,a_2)\}.
\]
The odd components of $M(u^{(a_1)},u^{(a_2)},u^{(a_3)})$ are, for $1\le j\le n-1$,
\begin{equation}\label{eq:Modd}
\begin{aligned}
M(u^{(a_1)},u^{(a_2)},u^{(a_3)})_{2j+1}
&=8(-1)^{j}\prod_{i=1}^{3}\sin(\Delta\alpha^{(a_i)}+\Delta\beta^{(a_i)}-j\Delta\theta^{(a_i)})\\
&\quad+4\!\!\sum_{(a,b,c)\in G}\!\!D^{+}_{a,j}\,\sin(\Delta\alpha^{(b)}+\Delta\beta^{(b)}-j\Delta\theta^{(b)})
\sin(\Delta\alpha^{(c)}+\Delta\beta^{(c)}-j\Delta\theta^{(c)})\\
&\quad+2(-1)^{j}\!\!\sum_{(a,b,c)\in G}\!\!D^{+}_{a,j}D^{+}_{b,j}\,
\sin(\Delta\alpha^{(c)}+\Delta\beta^{(c)}-j\Delta\theta^{(c)})\\
&\quad+D^{+}_{a_1,j}D^{+}_{a_2,j}D^{+}_{a_3,j}
+8(-1)^{j}\prod_{i=1}^{3}\sin\Delta\alpha^{(a_i)}\cos(\Delta\beta^{(a_i)}-j\Delta\theta^{(a_i)}),
\end{aligned}
\end{equation}
with the analogous expression for the even components $M(\cdots)_{2j}$ obtained from
\eqref{eq:strain-odd}--\eqref{eq:Dform}, and the boundary components
\begin{align}
M(\cdots)_{1}&=8\prod_{i=1}^{3}\sin(\Delta\alpha^{(a_i)}+\Delta\beta^{(a_i)})
-8\prod_{i=1}^{3}\sin\Delta\alpha^{(a_i)}\cos\Delta\beta^{(a_i)}
-7\prod_{i=1}^{3}\sin(\Delta\alpha^{(a_i)}+\Delta\beta^{(a_i)}),\label{eq:M1}\\
M(\cdots)_{2n}&=(-1)^{n-1}8\prod_{i=1}^{3}\sin(\Delta\beta^{(a_i)}-\Delta\alpha^{(a_i)}-(n-1)\Delta\theta^{(a_i)})\nonumber\\
&\quad+(-1)^{n}8\prod_{i=1}^{3}\sin\Delta\alpha^{(a_i)}\cos(\Delta\beta^{(a_i)}-(n-1)\Delta\theta^{(a_i)})\nonumber\\
&\quad+(-1)^{n}7\prod_{i=1}^{3}\sin(\Delta\beta^{(a_i)}-\Delta\alpha^{(a_i)}-(n-1)\Delta\theta^{(a_i)}).\label{eq:M2n}
\end{align}
Using $(n-1)\Delta\theta^{(a_i)}=a_i\pi+\widetilde\theta^{(a_i)}$ from \eqref{eq:dtheta}, the
boundary terms obey $\prod_i\sin(\Delta\beta^{(a_i)}-\Delta\alpha^{(a_i)}-(n-1)\Delta\theta^{(a_i)})
=O(k^3/n^3)$ and $\prod_i\sin\Delta\alpha^{(a_i)}=O(k^3/n^3)$, hence
$|M(\cdots)_{2n}|=O(k^3/n^3)$.

Now decompose the difference
\begin{equation}\label{eq:Kdecomp}
\ip{M(u^{(a_1)},u^{(a_2)},u^{(a_3)})}{u^{(a_4)}}
-8\,\ip{u^{(a_1)}\Hadamard u^{(a_2)}\Hadamard u^{(a_3)}}{u^{(a_4)}}=K_1+K_2+K_3+K_4+K_5,
\end{equation}
where $K_1$ collects the products of one strain remainder $D$ with two leading sines, $K_2$ the
products of two remainders with one sine, $K_3$ the triple-remainder term
$\sum_j\big(u^{(a_4)}_{2j+1}+u^{(a_4)}_{2j}\big)D^{+}_{a_1,j}D^{+}_{a_2,j}D^{+}_{a_3,j}$, $K_4$ the
boundary-strain $\sin\Delta\alpha$ corrections, and $K_5$ the endpoint terms
\eqref{eq:M1}--\eqref{eq:M2n}. Since
$|u^{(a_4)}_1|,|u^{(a_4)}_{2n}|\le1$ and $\Delta\alpha^{(a_i)}=\Theta(a_i/n)=\Delta\beta^{(a_i)}$,
the endpoint term satisfies
\begin{equation}\label{eq:K5}
|K_5|=O\!\big(k^4/n^4\big)\ (a_4\ll n),\qquad |K_5|=O\!\big(k^3/n^3\big)\ (a_4\in(n^{1-\eps},n_1]).
\end{equation}
For $K_1$, write $\ell_4=a_4$ and group the summands over $G(a_1,a_2,a_3)$. Telescoping the sine
products through the product-to-sum identity and using
\[
\prod_{i=2}^{4}\sin\!\big(\Delta\alpha^{(\ell_i)}-\tfrac12\Delta\theta^{(\ell_i)}\big)
\big[\cos(\Delta\theta^{(\ell_1)}-\Delta\alpha^{(\ell_1)})-\cos\Delta\alpha^{(\ell_1)}\big]=O(k^4/n^4)
\]
together with the bound $\big|\sum_{j=1}^{n-1}\sin(\Delta\beta^{(\ell_1)}-j\Delta\theta^{(\ell_1)})
\prod_{i=2}^4\cos(\cdots)\big|\le n-1$ gives, after using the symmetry in $a_1,a_2,a_3$,
$K_1=O(k^3/n^2)$. For $K_2$ one splits $K_2=K_{21}+K_{22}$; using
$(\sin(\Delta\theta-\Delta\alpha)-\sin\Delta\alpha)(\cos(\Delta\theta-\Delta\alpha)-\cos\Delta\alpha)
=O(k^3/n^3)$ and $\prod_{i=3}^4[\,\cdots]=O(k/n)$ gives $K_{21}=O(k^4/n^3)$, while
$\prod_{i=1}^2[\cos(\Delta\theta-\Delta\alpha)-\cos\Delta\alpha]=O(k^4/n^4)$ gives
$K_{22}=O(k^5/n^4)$, whence $K_2=O(k^4/n^3)$. For $K_3$, substituting \eqref{eq:Dform} and
simplifying to $K_3=\sum_{j}(u^{(a_4)}_{2j+1}+u^{(a_4)}_{2j})D^{+}_{a_1,j}D^{+}_{a_2,j}D^{+}_{a_3,j}
=\sum_{i=1}^{4}K_{3i}$ with
\[
|K_{31}|=O(k^6/n^5),\quad |K_{32}|=O(k^3/n^2),\quad |K_{33}|=O(k^5/n^4),\quad |K_{34}|=O(k^4/n^3),
\]
using $\sin(\Delta\theta^{(\ell_i)}-\Delta\alpha^{(\ell_i)})-\sin\Delta\alpha^{(\ell_i)}=O(k/n)$ and
$\cos(\Delta\theta^{(\ell_i)}-\Delta\alpha^{(\ell_i)})-\cos\Delta\alpha^{(\ell_i)}=O(k^2/n^2)$, so
$K_3=O(k^3/n^2)$. For $K_4$, $\prod_{i=1}^3\sin\Delta\alpha^{(a_i)}=O(k^3/n^3)$ yields
$K_4=O(k^3/n^2)$. Adding the five estimates establishes \eqref{eq:M-vs-hadamard}. \qed

\begin{corollary}\label{cor:K-O1}
If $a_i=\Theta(1)$ for $i=1,2,3,4$, then $K_1=O(n^{-2})$, $K_2=O(n^{-3})$, $K_3=O(n^{-2})$,
$K_4=O(n^{-2})$, $K_5=O(n^{-4})$.
\end{corollary}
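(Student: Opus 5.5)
The corollary is the specialization $k=\max\{a_1,a_2,a_3,a_4\}=\Theta(1)$ of the five term-by-term bounds obtained in the proof of Lemma~\ref{lem:M-vs-hadamard}. The plan is therefore not to redo any summation, but to re-read each $K_i$ estimate with its full $k$-dependence and substitute $k=O(1)$. The proof of the lemma controlled each term by (a number of summands, at most $n-1$) times (a product of small trigonometric factors whose sizes are powers of $k/n$). When $k=\Theta(1)$, every factor $\Delta\alpha^{(a_i)},\Delta\beta^{(a_i)},\Delta\theta^{(a_i)}$ is $\Theta(1/n)$ with constants independent of $n$, so each bound $O(k^p/n^q)$ becomes $O(n^{-q})$. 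The one point to watch is that the constants hidden in the lemma's $O(\cdot)$ do not depend on $k$ in a way that could blow up; they depend only on $k_1,k_2,k_{3,1},k_{3,2}$ through \eqref{eq:eigform}--\eqref{eq:dtheta}.

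In order, the substitutions go as follows.
\begin{enumerate}[label=\textup{(\roman*)},leftmargin=1.6em]
\item For $K_1$: the telescoped product is $O(k^4/n^4)$ per summand. Summing the bounded oscillatory factor over $j\le n-1$, and using the symmetry over $G$, gave $O(k^3/n^2)$, hence $O(n^{-2})$.
\item For $K_2$: the bounds $K_{21}=O(k^4/n^3)$ and $K_{22}=O(k^5/n^4)$ give $O(n^{-3})$.
\item For $K_3$: the four pieces give $O(n^{-5}),O(n^{-2}),O(n^{-4}),O(n^{-3})$, and the $K_{32}$ piece dominates, giving $O(n^{-2})$.
\item For $K_4$: $\prod_i\sin\Delta\alpha^{(a_i)}=O(n^{-3})$, summed against $O(1)$ factors over $n$ sites, gives $O(n^{-2})$.
\item For $K_5$: the case $a_4\ll n$ of \eqref{eq:K5} gives $O(k^4/n^4)=O(n^{-4})$.
\end{enumerate}

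The only place that needs care is $K_2$, where the lemma's summary $O(k^4/n^3)$ is sharper than the $O(k^3/n^2)$ the lemma itself needed. I would verify that the factor $(\sin(\Delta\theta-\Delta\alpha)-\sin\Delta\alpha)(\cos(\Delta\theta-\Delta\alpha)-\cos\Delta\alpha)=O(n^{-3})$, times the bracket product $O(1/n)$, times $n$ summands, indeed yields $n^{-3}$ and not merely $n^{-2}$. This uses the fact that both remainders $D^{+}_{a,j}D^{+}_{b,j}$ carry the small cosine-difference factor. I expect this to be the main obstacle, in the bookkeeping sense: confirming that no cross term in $D^+D^+$ of the form (sine-difference)$^2$, which is only $O(n^{-2})$, survives unpaired with an additional $O(1/n)$ factor. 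If such a term appeared, I would check that the sum over $j$ of the accompanying $\cos(\Delta\beta-j\Delta\theta)$ products is oscillatory, and hence $O(1)$ rather than $O(n)$, because the indices are non-balanced in that product. This restores the extra factor $1/n$.
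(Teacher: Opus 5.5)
Your route is the paper's: the corollary has no separate proof and is simply the five bounds from the proof of Lemma~\ref{lem:M-vs-hadamard} read at $k=\Theta(1)$. However, the weak spot you flag is real, and your repair does not work. In fact the bounds you import for $K_1$ and $K_2$ fail on balanced quadruples.

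Put $\delta_\ell=\Delta\alpha^{(\ell)}-\tfrac12\Delta\theta^{(\ell)}$ and $Z^{(\ell)}_j=\Delta\beta^{(\ell)}-(j-\tfrac12)\Delta\theta^{(\ell)}$. By \eqref{eq:eigform}, $u^{(\ell)}_{2j}=(-1)^{j-1}\sin(Z^{(\ell)}_j-\delta_\ell)$ and $u^{(\ell)}_{2j+1}=(-1)^{j}\sin(Z^{(\ell)}_j+\delta_\ell)$. Hence \eqref{eq:Dform} collapses to $D^{+}_{\ell,j}=-(u^{(\ell)}_{2j}+u^{(\ell)}_{2j+1})=(-1)^{j+1}2\sin\delta_\ell\cos Z^{(\ell)}_j$. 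The bulk eigen-equation at an even site gives $k_2\sin2\delta_\ell=k_1\sin2\Delta\alpha^{(\ell)}$ (consistent with \eqref{eq:gap}), so $\delta_\ell=\Theta(\ell/n)$. The remainder is therefore a genuine $\Theta(1/n)$ quantity, and it is exactly your ``sine-difference'' piece.

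The $k_2$-spring strain satisfies $\sigma_j=u_{2j+1}-u_{2j}=2u_{2j+1}+D^+_j=-2u_{2j}-D^+_j$. It enters $M_{2j+1}$ as $+\prod_i\sigma^{(a_i)}_j$ and $M_{2j}$ as $-\prod_i\sigma^{(a_i)}_j$. So both sites carry the \emph{same} sign, exactly as in the paper's $K_3$ with weight $u^{(a_4)}_{2j+1}+u^{(a_4)}_{2j}$:
\[
K_2=2\sum_{j=1}^{n-1}\sum_{(a,b,c)\in G(a_1,a_2,a_3)}D^+_{a,j}D^+_{b,j}\big(u^{(a_4)}_{2j+1}u^{(c)}_{2j+1}+u^{(a_4)}_{2j}u^{(c)}_{2j}\big).
\]
Here the bracket equals $2\sin Z^{(a_4)}_j\sin Z^{(c)}_j+O(n^{-2})$, which is $\Theta(1)$. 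The ``bracket product $O(1/n)$'' in your accounting is therefore not there, and
$K_2=16\sum_{G}\sin\delta_a\sin\delta_b\sum_j\cos Z^{(a)}_j\cos Z^{(b)}_j\sin Z^{(c)}_j\sin Z^{(a_4)}_j+O(n^{-3})$.

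Your fallback, oscillation of the $j$-sum, needs the quadruple to be unbalanced. The hypothesis $a_i=\Theta(1)$ does not give this, and the quadruples the paper actually uses, $(1,1,1,1)$ and $(1,1,1,3)$, are balanced. For $a_1=\dots=a_4=1$ one gets $K_2\approx6\delta_1^2n$, and the same bookkeeping gives $K_1\approx-18\delta_1^2n$; both are $\Theta(1/n)$. Even where oscillation does apply, it yields only $K_2=O(n^{-2})$, not $O(n^{-3})$.

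There is also a decomposition-free check. $\ip{M(u,u,u)}{w}$ equals the sum over all springs of the cubed strain of $u$ times the strain of $w$. On the $j$-th $k_2$-spring, the power-mean inequality $(x+y)^4\le8(x^4+y^4)$ with $x=\sin(Z_j-\delta_1)$, $y=\sin(Z_j+\delta_1)$ gives $\ip{M(u^{(1)},u^{(1)},u^{(1)})}{u^{(1)}}-8\ip{u^{(1)}\Hadamard u^{(1)}\Hadamard u^{(1)}}{u^{(1)}}=-12\delta_1^2n\,(1+o(1))$. The five claimed bounds would instead sum to $O(n^{-2})$.

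So the statement cannot be proved as written. What does hold is $K_1,K_2=O(1/n)$ on the balanced set and $O(n^{-2})$ off it. That is still negligible against the $\Theta(n)$ resonant value, but it is not the stated result.
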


\section{Convergence estimates}\label{app:convergence}
\subsection{Unified weighted-sum machinery}
We now set up the inductive estimates that yield the convergence radius. Define, for $m\ge0$,
the index set $S_m=\{(j,l):1\le j\le2n,\,0\le l\le m\}$, and the sub-windows
\begin{equation}\label{eq:Ddomains}
D_1=\{2\le j\le n_1\},\quad D_2=\{2n-n_2+1\le j\le2n\},\quad
D_3=\begin{cases}\{n_1+1,\,2n-n_2\}&\text{(R1)},\\ \{n_1+1\}&\text{(R2)},\end{cases}
\end{equation}
together with the partition $S_m=\bigcup_{s=0}^{3}F_{s,m}$,
\[
\begin{gathered}
F_{0,m}=\{(j,l):j=1,\,1\le l\le m\},\qquad
F_{1,m}=\{(j,l):j\in D_1,\,0\le l\le m\}\cup\{(1,0)\},\\
F_{s,m}=\{(j,l):j\in D_s,\,0\le l\le m\},\quad s=2,3.
\end{gathered}
\]
We also use the following index families, as in \cite{SongXuArxiv}:
\begin{equation}\label{eq:Esets}
\begin{aligned}
\mathbb{E}_1^0(k)&=\{(x,y,z)\in\N^3:x+y+z=k\},\\
\mathbb{E}_2^0(k)&=\{(x,y,z)\in\N^3:x+y+z=k+1\},\\
\mathbb{E}_2^2(k)&=\mathbb{E}_2^0(k)\setminus\{(0,0,k+1)\},\\
\mathbb{E}_1^{1*}(k)&=\{(x,y,z)\in(\N^{*})^3:x+y+z=k,\ x\le y\le z\},\\
\mathbb{E}_2^{1*}(k)&=\{(x,y,z)\in(\N^{*})^3:x+y+z=k+1,\ x\le y\},\\
\mathbb{E}_3^{0*}(k)&=\{(x,y)\in(\N^{*})^2:x+y=k+1\},\\
\mathbb{E}_3^{1*}(k)&=\{(x,y)\in(\N^{*})^2:x+y=k+1,\ x\le y\}.
\end{aligned}
\end{equation}
The following arithmetic inequalities are the combinatorial backbone.

\begin{proposition}[Weighted-sum inequalities, {\cite{SongXuArxiv}}]\label{prop:arith}
The following hold:
\begin{align}
\sum_{(a,b)\in \mathbb{E}_3^{1*}(k)}\frac{1}{(a^2+2)(b^2+2)}&<\frac{2\pi^2-8}{3}\cdot\frac{2}{(k+1)^2+2}
<\frac{4}{(k+1)^2+2},\label{eq:arith1}\\
\sum_{(a,b,c)\in \mathbb{E}_2^{1*}(k)}\frac{1}{(a^2+2)(b^2+2)(c^2+2)}
&<\Big(\frac{2\pi^2-8}{3}\Big)\Big(\frac{4\pi^2-4}{3}\Big)\frac{1}{(k+1)^2+2}<\frac{48}{(k+1)^2+2},\label{eq:arith2}\\
\sum_{(a,b,c)\in \mathbb{E}_1^{1*}(k)}\frac{1}{(a^2+2)(b^2+2)(c^2+2)}
&<\Big(\frac{2\pi^2-2}{3}\Big)^2\frac{1}{k^2+2}<\frac{36}{k^2+2}.\label{eq:arith3}
\end{align}
\end{proposition}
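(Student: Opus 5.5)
The three inequalities will all be reduced to one elementary convolution bound: a sum of reciprocal quadratic weights over a splitting $a+b=K$ is controlled by the weight at $K$. The plan is to prove this two-fold bound first and then iterate it for the three-fold sums. Throughout, write $w(x)=1/(x^2+2)$.

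\emph{Step 1: the two-fold bound.} The claim is that for $a,b\ge1$ with $a+b=K$,
\[
(a^2+2)(b^2+2)\;\ge\;\tfrac{1}{2}\big((K^2+2)\big)\min(a,b)^2+\cdots .
\]
More usefully, $\max(a,b)\ge K/2$ gives $\max(a,b)^2+2\ge \tfrac14(K^2+2)\cdot c$ for an absolute $c$. Hence
\[
w(a)w(b)\le C\,w(K)\,w(\min(a,b)).
\]
Summing over $a\le b$ then bounds \eqref{eq:arith1} by $C\,w(K)\sum_{a\ge1}w(a)$. To reach the paper's constant $\tfrac{2\pi^2-8}{3}\cdot 2$, I would use the sharper observation that $b\ge K/2$ gives $\tfrac{K^2+2}{b^2+2}\le 4$. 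Combined with $\sum_{a\ge1}\frac{1}{a^2+2}<\sum_{a\ge1}\frac{1}{a^2}-\dots$, a direct computation against $\pi^2/6$-type series should give the stated constant. I expect some case checking for small $K$ here, but the stated looser bound $4/((k+1)^2+2)$ only needs $\sum_{a\ge1}w(a)<1$. This holds, since $\tfrac13+\tfrac16+\tfrac1{11}+\dots<1$, and combined with the factor $4$ it gives the second inequality. With $K=k+1$, this proves \eqref{eq:arith1}.

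\emph{Step 2: three-fold sums by iteration.} For \eqref{eq:arith2}, the constraint is $a+b+c=k+1$ with $a\le b$. I would sum first over the pair $(a,b)$ with fixed $s=a+b$, using Step 1 to get $\sum w(a)w(b)\le 4w(s)$. This reduces the sum to $4\sum_{s+c=k+1}w(s)w(c)$, now without ordering. Applying Step 1 again without the ordering costs a factor $2$, giving a bound of $4\cdot 8\,w(k+1)$ up to the precise series constants. Tracking the constants $\tfrac{2\pi^2-8}{3}$ and $\tfrac{4\pi^2-4}{3}$ should reproduce the product stated. Their numerical product is about $3.9\times 12.2\approx 47.7<48$, which gives the final inequality.

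For \eqref{eq:arith3}, the sum runs over $x\le y\le z$ with $x+y+z=k$. Here I would use $z\ge k/3$, which gives $w(z)\le 9\,w(k)$ roughly. Then $\sum_{x,y}w(x)w(y)\le(\sum w)^2$, whose square constant must match $\big(\tfrac{2\pi^2-2}{3}\big)^2$; the final $36$ follows from $\big(\tfrac{2\pi^2-2}{3}\big)^2\approx 35.7$.

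\emph{Main obstacle.} The obstacle is not the structure but matching the exact constants, in particular the factors like $(K^2+2)/(\max^2+2)$ at small $K$, where the $+2$ matters. I would verify the small cases $K\le 4$ by hand and use the asymptotic argument for the rest. Since this proposition is quoted from \cite{SongXuArxiv}, one may alternatively just invoke it verbatim; only the index families \eqref{eq:Esets} need to be checked to coincide with those used there.
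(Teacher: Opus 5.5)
\textbf{The gap.} You claim to prove \eqref{eq:arith1} in full, but its second link is false for every $k$. We have $2\cdot\frac{2\pi^2-8}{3}\approx 7.83>4$; the link would require $\pi^2<7$. So calling $4/((k+1)^2+2)$ ``the stated looser bound'' is backwards, and ``this proves \eqref{eq:arith1}'' asserts something no argument can give. Neither a computation ``against $\pi^2/6$-type series'' nor citing \cite{SongXuArxiv} verbatim rescues that link as written here.

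What your Step~1 actually proves, with $w(x)=1/(x^2+2)$ and $\sigma=\sum_{a\ge1}w(a)$, is
\[
\sum_{(a,b)\in\mathbb{E}_3^{1*}(k)}w(a)w(b)<4\sigma\,w(k+1)<\frac{2\pi^2-8}{3}\cdot\frac{1}{(k+1)^2+2}<\frac{4}{(k+1)^2+2}.
\]
This uses $\sigma<\frac13+\sum_{a\ge2}a^{-2}=\frac{\pi^2-4}{6}$ and $4\cdot\frac{\pi^2-4}{6}=\frac{2\pi^2-8}{3}\approx3.91$. It yields the first inequality and the outer bound of \eqref{eq:arith1}. The extra factor $2$ in the stated middle term is what your argument produces for the unordered set $\mathbb{E}_3^{0*}(k)$, where the bound is $8\sigma\,w(k+1)$. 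You should prove this corrected chain and flag the stated middle link as an error in the statement, not claim it.

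\textbf{The rest of your argument.} Your mechanism is otherwise correct, and it is in substance what the quoted constants encode; the paper itself gives no proof, only the citation. Indeed $\frac{2\pi^2-8}{3}$ and $\frac{4\pi^2-4}{3}$ are exactly $4$ and $8$ times the crude bounds $\frac13+\sum_{a\ge2}a^{-2}$ for $\sum_{a\ge1}w$ and $\frac12+\frac13+\sum_{a\ge2}a^{-2}$ for $\sum_{a\ge0}w$. These are the factors your largest-part bound gives for an ordered and an unordered pair.

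However, you misjudge the obstacle. Nothing has to be ``matched'' and no small-$K$ case checking is needed, because $K^2/4+2>(K^2+2)/4$ and $k^2/9+2>(k^2+2)/9$ hold for all $K,k$. Your Step~2 therefore bounds the left side of \eqref{eq:arith2} by $32\sigma^2w(k+1)<31\,w(k+1)$, and the left side of \eqref{eq:arith3} by $9\sigma^2w(k)<9\,w(k)$. These are already below $\frac{2\pi^2-8}{3}\cdot\frac{4\pi^2-4}{3}\approx46.3$ and $(\frac{2\pi^2-2}{3})^2\approx34.96$. Your values $47.7$ and $35.7$ are off, though the comparisons with $48$ and $36$ survive. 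So \eqref{eq:arith2} and \eqref{eq:arith3} follow in full.

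Two smaller fixes: replace ``$\frac13+\frac16+\frac1{11}+\dots<1$'' by the one-line bound on $\sigma$ above, and delete the incomplete first display of Step~1.
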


These inequalities are purely arithmetic, not involving the lattice or the nonlinearity, and
were established in our on-site work; we therefore quote them and refer to
\cite{SongXuArxiv} for the proof. (The constants come from
$\sum_{a\ge0}1/(a^2+2)<(2\pi^2-2)/3$ and $\sum_{a\ge1}1/(a^2+2)<(2\pi^2-8)/3$, applied to the
index sets $\mathbb{E}_3^{1*}$, $\mathbb{E}_2^{1*}$, $\mathbb{E}_1^{1*}$ defined above.)

For multi-indices we define the product symbol
\begin{equation}\label{eq:Pdef}
P(a_1,a_2,a_3,l_1,l_2,l_3,j_1,j_2,j_3)
=M\!\big(u^{(j_1)},u^{(j_2)},u^{(j_3)}\big)\prod_{i=1}^{3}\frac{c_{2a_i,2l_i+1,j_i}}{\abs{u^{(j_i)}}},
\end{equation}
and the weighted sum
\begin{equation}\label{eq:Wdef}
\begin{aligned}
\mathcal{W}(a_1,a_2,a_3,s_1,s_2,s_3,j)
=\!\!\sum_{(j_1,l_1)\in F_{s_1,a_1}}\sum_{(j_2,l_2)\in F_{s_2,a_2}}\sum_{(j_3,l_3)\in F_{s_3,a_3}}
&\frac{\bigl|\ip{u^{(j_1)}u^{(j_2)}u^{(j_3)}}{u^{(j)}}\bigr|}
{\abs{u^{(j_1)}}\abs{u^{(j_2)}}\abs{u^{(j_3)}}\abs{u^{(j)}}}\\
&\times\prod_{i=1}^{3}\abs{c_{2a_i,2l_i+1,j_i}} .
\end{aligned}
\end{equation}

\begin{proposition}\label{prop:W}
Set $R_m=C_1 f^{m}n^{m}/(m^2+2)$ with $f=640(k_2-k_1)/(k_1k_2)$ and $C_1>0$, and assume the
inductive bounds \textup{(i)--(vii)} of Lemma~\ref{lem:induction} hold up to order $k$. Then:
\begin{enumerate}[label=\textup{(\arabic*)},leftmargin=1.6em]
\item For $j\notin D_3$ and all three $s_i=1$, the weighted sum obeys, according to how many of
$a_1,a_2,a_3$ vanish,
\begin{equation}\label{eq:Wcases}
\mathcal{W}(a_1,a_2,a_3,1,1,1,j)<(1+\eps)\,\frac{1}{n}\times
\begin{cases}
32\,C_3\,R_{a_3}, & a_1=a_2=0,\ a_3\ne0,\\[1mm]
32\,C_3^2\,R_{a_2}R_{a_3}, & a_1=0,\ a_2a_3\ne0,\\[1mm]
32\,C_3^3\,R_{a_1}R_{a_2}R_{a_3}, & a_1a_2a_3\ne0.
\end{cases}
\end{equation}
In all remaining cases (some $s_i\ne1$, or $j\in D_3$) the same sum is $\ll R_{a_1}R_{a_2}R_{a_3}/n$,
with the convention $R_0=1$.
\item Consequently, using Proposition~\ref{prop:arith},
\begin{equation}\label{eq:Wmaster}
\begin{aligned}
\sum_{(a_1,a_2,a_3)\in \mathbb{E}_1^0(k)}\ \sum_{(j_i,l_i)\in S_{a_i}}
&\frac{\bigl|\ip{P(a_1,a_2,a_3,l_1,l_2,l_3,j_1,j_2,j_3)}{u^{(j)}}\bigr|}{\abs{u^{(j)}}}\\
&\qquad<\frac{R_{k+1}}{n^2}\cdot\frac{12\cdot16\,C_3}{f}\,(1+8C_1C_3+32C_1^2C_3^2)(1+\eps).
\end{aligned}
\end{equation}
\end{enumerate}
\end{proposition}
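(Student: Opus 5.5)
The plan is to reduce each weighted sum to the on-site machinery of \cite{SongXuArxiv}. Lemma~\ref{lem:M-dichotomy} (and Lemma~\ref{lem:M-vs-hadamard-R2} in regime (R2)) lets the FPUT inner products entering $P$ in \eqref{eq:Pdef} be replaced by $8$ times the Hadamard quantities appearing in \eqref{eq:Wdef}, up to an $O(1+k^3/n^2)$ error. That error is absorbed into the $(1+\eps)$ factors.

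For part (1), I would fix $j\notin D_3$ and all $s_i=1$, so every $j_i$ lies in the optical window $D_1$ or is $j_i=1$ at order $l_i=0$. I would split each triple sum over $(j_i,l_i)\in F_{1,a_i}$ according to whether the index quadruple $(j_1,j_2,j_3,j)$ is balanced. On balanced quadruples the inner product is at most $2n$ by Lemma~\ref{lem:hadamard}(i)/(iii). Since each $\abs{u^{(j_i)}}^2=n+O(1)$ by \eqref{eq:normest}, the normalized factor is at most $(1+\eps)2/n$. The constant $32$ is meant to come from $2$ times the factor $16$ that collects the harmonic multiplicities and the parity sign choices.

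On unbalanced quadruples the inner product is $O(1)$, giving a normalized factor $O(n^{-2})$. The summed coefficients there are controlled by the inductive bounds of Lemma~\ref{lem:induction}: the $O(j^{-2})$ decay in the near-edge range and the tail bounds of type \eqref{eq:c21-estimates-rest}. So these contributions are lower order.

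On the balanced set the key step is this: for fixed $j$ and fixed $j_1,j_2$, at most a bounded number of $j_3$ satisfy a signed balance relation. Hence the balanced triple sum factorizes into a product of single sums $\sum_{j_i}\abs{c_{2a_i,2l_i+1,j_i}}$. Each such single sum is bounded by $C_3R_{a_i}$ via the inductive hypothesis. When $a_i=0$ only $(1,0)$ contributes, with coefficient $1$, so $R_0=1$. This produces the three cases of \eqref{eq:Wcases}.

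For the remaining configurations (some $s_i\in\{2,3\}$, or $j\in D_3$), I would use Lemma~\ref{lem:hadamard}(ii) and Lemma~\ref{lem:M-dichotomy}(2). At least one index lies in the acoustic band or is isolated, so either there is no balance or the inner product is $O(1)$. Combining this with the extra $n^{-1/2}$ or $n^{-1}$ smallness of those coefficients in the inductive bounds gives the strict $\ll$.

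For part (2), I would sum \eqref{eq:Wcases} over $(a_1,a_2,a_3)\in\mathbb{E}_1^0(k)$, grouped by the number of vanishing $a_i$ and with the permutation count contributing the factor $12$. Substituting $R_m=C_1f^mn^m/(m^2+2)$, the products $R_{a_1}R_{a_2}R_{a_3}$ carry $f^kn^k$ times $\prod 1/(a_i^2+2)$. The sums of these reciprocal products are bounded by \eqref{eq:arith1}--\eqref{eq:arith3}. I would then compare with $R_{k+1}/n^2=C_1f^{k+1}n^{k-1}/((k+1)^2+2)$, which extracts the factor $1/f$ and the bracket $1+8C_1C_3+32C_1^2C_3^2$. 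The one-nonzero, two-nonzero and three-nonzero cases give the three terms of the bracket respectively.

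The main obstacle is the balanced-set factorization with sharp constants. Near-edge indices can have several sign patterns, and $j_3$ may fall outside $D_1$. To handle this I would first bound the number of solutions uniformly by a constant absorbed into $16$. I would then treat the boundary overflow $j_3>n_1$ using the $O(n^{-1})$ tail coefficients.
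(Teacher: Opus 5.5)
Your summation in part (2) follows the paper, but the core of your part (1) does not go through as written. The balanced/unbalanced dichotomy (Lemma~\ref{lem:hadamard}(iii), Lemma~\ref{lem:M-dichotomy}(3)) and the replacement of $M$ by $8$ times the Hadamard cube (Lemma~\ref{lem:M-vs-hadamard}) are only available when all four indices are $\ll n$. But $\mathcal{W}$ sums over all of $F_{1,a_i}$, i.e.\ over $j_i$ up to $n_1\approx n$. For such indices the reduction error $O(k^3/n^2)$ is of the same order as the main term, so it cannot be absorbed into $(1+\eps)$. Nor is an unbalanced quadruple known to give $O(1)$: Lemma~\ref{lem:hadamard}(iv) yields only $n^{2\eps}$ with one large index, and with two or more large indices nothing beyond the uniform bound is available. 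The per-index decay you invoke ($O(j^{-2})$ and the tails in \eqref{eq:c21-estimates-rest}) is a second-order fact; the inductive hypotheses (ii)--(vii) control only \emph{sums} of coefficients. Finally, counting the $j_3$ that satisfy a balance relation is not what makes the triple sum factor, since such a count produces $\sup_{j_3}$ rather than $\sum_{j_3}$. Factorization follows from nonnegativity alone: a restricted sum is dominated by the full product of single sums.

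The missing point is that no resonance analysis is needed here. The paper uses the uniform in-band bound $\abs{\ip{M(u^{(j_1)},u^{(j_2)},u^{(j_3)})}{u^{(j)}}}\le 32n$ of Lemma~\ref{lem:M-dichotomy}(1), the FPUT counterpart of Lemma~\ref{lem:hadamard}(i). Combined with $\abs{u^{(j_i)}}^2\ge n-O(n^{\eps})$ from \eqref{eq:normest}, this gives a normalized factor at most $(1+\eps)\,32/n$, uniformly in $(j_1,j_2,j_3)$. Hence $\mathcal{W}\le(1+\eps)\tfrac{32}{n}\prod_{i}\sum_{F_{1,a_i}}\abs{c_{2a_i,2l_i+1,j_i}}$, and each factor is below $C_3R_{a_i}(1+O(1/n))$ by (ii) and (v), or equals $1=R_0$ when $a_i=0$. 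This is also the real source of the constant $32$, so $\mathcal W$ must be read with $M$ in place of the Hadamard product, as its use in part (2) requires. The $32$ is the FPUT factor $8$ times the on-site bound $2n$, doubled to absorb the $O(k^3/n^2)$ reduction error. It is not a factor $16$ from harmonic multiplicities and sign patterns. The time-harmonic multiplicities enter through the factor $4$ in \eqref{eq:omega-proj} and \eqref{eq:ii-proj}, so building them into $\mathcal W$ would double count them. Conversely, keeping the on-site $2n$ and then reading off $32$ silently drops the factor $8$ when you pass from $\mathcal W$ back to $\ip{P}{u^{(j)}}$ in part (2). Relatedly, for $s_i=2$ the indices lie in the acoustic \emph{band}, so Lemma~\ref{lem:M-dichotomy}(2) does not apply. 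The smallness there comes only from the extra $1/n$ in (iii) and (vi), which suffices once the uniform $32n$ bound is used.
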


\begin{proof}
The bounds in (1) follow from Lemma~\ref{lem:M-dichotomy}: the generic ($\notin D_3$, all
$s_i=1$) case sits in the resonant branch (3) with value $\Theta(n)$, dividing by the norms
$\abs{u}=\Theta(\sqrt n)$ and the inductive coefficient bounds $\sum|c|<C_3R$; the off-diagonal
cases land in the $O(1)$ or $O(k^3/n^2)$ branches and are $\ll$. Summing over
$\mathbb{E}_1^0(k)$ and applying the arithmetic inequalities \eqref{eq:arith1}--\eqref{eq:arith3} (with
the weight $R_aR_bR_c\sim f^{a+b+c}n^{a+b+c}/\prod(a_i^2+2)$ and $a+b+c=k$) collapses the triple
sum to a single $R_{k+1}/n^2$ term with the stated constant. This is \eqref{eq:Wmaster}.
\end{proof}

\subsection{The inductive convergence estimate}
We now state the master induction, whose hypotheses close the recursion and whose conclusion
(i) furnishes the frequency bound and (ii)--(vii) the coefficient bounds.

\begin{lemma}[Master induction]\label{lem:induction}
In either regime, with $R_m=C_1 f^{m}n^{m}/(m^2+2)$, $f=640(k_2-k_1)/(k_1k_2)$, $C_1>0$, there
exist $C_2,C_3,C_4>0$ with $C_1C_3=\tfrac1{128}$ and $C_1C_2=\tfrac{k_1k_2}{20\sqrt{2k_2}(k_2-k_1)}$
such that for all $m\ge1$ and all admissible harmonic indices $l$:
\begin{align}
&\textup{(i)}\ \ |\omega^{\ord{2m}}|<\frac{C_2R_m}{n^2};
&&\textup{(ii)}\ \ \sum_{j\in D_1}|c_{2m,1,j}|<C_3R_m; \label{eq:ind-i-ii}\\
&\textup{(iii)}\ \ \sum_{j\in D_2}|c_{2m,1,j}|<\frac{C_4R_m}{n};
&&\textup{(iv)}\ \ \sum_{j\in D_3}|c_{2m,1,j}|<\frac{C_4R_m}{n^{5/2}}; \label{eq:ind-iii-iv}\\
&\textup{(v)}\ \ \sum_{j=1}^{n_1}|c_{2m,2l+1,j}|<\frac{C_4R_m}{n(2l+1)^2};
&&\textup{(vi)}\ \ \sum_{j\in D_2}|c_{2m,2l+1,j}|<\frac{C_4R_m}{n(2l+1)^2}; \label{eq:ind-v-vi}\\
&\textup{(vii)}\ \ \sum_{j\in D_3}|c_{2m,2l+1,j}|<\frac{C_4R_m}{n^{5/2}(2l+1)^2}. \label{eq:ind-vii}
\end{align}
\end{lemma}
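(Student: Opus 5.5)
We argue by strong induction on $m$, following the on-site scheme of \cite{SongXuArxiv}. The only structural change is that every Hadamard inner product is replaced by $\ip{M(\cdot,\cdot,\cdot)}{\cdot}$, controlled through Lemma~\ref{lem:M-dichotomy} (and Lemma~\ref{lem:M-vs-hadamard-R2} in regime R2).

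\emph{Base case.} For $m=1$ the bounds (i)--(vii) are read off Proposition~\ref{prop:Q2} (resp.\ Proposition~\ref{prop:Q2-R2}) and \eqref{eq:omega2-middle}, \eqref{eq:omega2-R2}. We have $|\omega^{\ord2}|=\Theta(1/n)$, and $C_2R_1/n^2=C_2C_1f/(3n)$. The normalization of $C_1C_2$ is chosen exactly so that $9/\sqrt{2k_2}<C_1C_2f/3$, which can be checked directly. The sum $\sum_{D_1}|c_{2,1,j}|$ is dominated by the resonant coefficient $\tfrac{3(k_2-k_1)n}{2k_1k_2\pi^2}$ plus $\sum_j O(j^{-2})$. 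This fixes the admissible range of $C_1C_3=1/128$ against $f=640(k_2-k_1)/(k_1k_2)$. The $D_2$, $D_3$ and higher-harmonic bounds come from the $O(n^{-1})$, $O(n^{-3/2})$ and third-harmonic estimates, after multiplying by the number of indices and choosing $C_4$ large.

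\emph{Inductive step.} Assume (i)--(vii) up to order $k$. Collect the $\eps^{2k+3}$ terms of \eqref{eq:fput-rescaled} using Proposition~\ref{prop:odd}. The equation for $Q^{\ord{2k+2}}$ has three ingredients on its right-hand side:
\begin{itemize}
\item frequency cross terms $\sum \omega^{\ord{2p}}\omega^{\ord{2q}}\ddot Q^{\ord{2r}}$ with $p+q+r=k+1$;
\item the cubic term $\sum_{(a_1,a_2,a_3)\in\mathbb E_1^0(k)} M(Q^{\ord{2a_1}},Q^{\ord{2a_2}},Q^{\ord{2a_3}})$;
\item harmonic products, which reduce through the symbol $P$ of \eqref{eq:Pdef}.
\end{itemize}
Projecting on $u^{(1)}(e^{i\tau}+e^{-i\tau})$ gives $2\omega^{\ord0}\omega^{\ord{2k+2}}$ times $|u^{(1)}|$-normalized terms. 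By \eqref{eq:Wmaster} the cubic part is $<R_{k+1}n^{-2}\cdot\tfrac{192C_3}{f}(1+8C_1C_3+32C_1^2C_3^2)$. The frequency cross terms are bounded by products $C_2^2R_pR_q/n^4$ times lower-order coefficients. These are summed with \eqref{eq:arith1}--\eqref{eq:arith3} and are smaller by a factor $1/n$, which gives (i). Projecting on $u^{(j)}(e^{(2l+1)i\tau}+\dots)$ and dividing by $(\omega^{(j)})^2-(2l+1)^2(\omega^{(1)})^2$ gives (ii)--(vii).
\begin{itemize}
\item For $j\in D_1$ with $l=0$, the divisor is $\Theta(j^2/n^2)$. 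Combined with the $R_{k+1}/n^2$ bound, the coefficient is $O(R_{k+1}/j^2)$, and summing over $j$ (via $\sum 1/j^2$ and the explicit constant $f$) gives $C_3R_{k+1}$.
\item For $D_2$, $D_3$ and $l\ge1$, the divisors are $\Theta(1)$, and indeed $\gtrsim(2l+1)^2$. Together with the extra smallness of the out-of-band inner products (Lemma~\ref{lem:M-dichotomy}(2)), this yields the $1/n$, $n^{-5/2}$ and $(2l+1)^{-2}$ factors.
\end{itemize}

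\emph{Main obstacle.} The delicate point is (ii). The triple sum over $D_1$ in Proposition~\ref{prop:W}(1) must be paid for with a divisor as small as $\Theta(1/n^2)$, so every $O(n)$ inner product must be tracked with its exact constant. That constant is $32n$ from Lemma~\ref{lem:M-dichotomy}(1), i.e.\ the on-site $2n$ times $8$ plus slack. The $O(k^3/n^2)$ errors from Lemma~\ref{lem:M-vs-hadamard} must also be shown not to accumulate when $k$ ranges up to $n^{1-\eps}$. For these errors I would first try splitting the $j$-range at $n^{1-\eps}$: for $j\le n^{1-\eps}$ the error is $\le n^{1-3\eps}\ll n$, and for larger $j$ use Lemma~\ref{lem:M-dichotomy}(4) with its $n^{2\eps}$ bound against the divisor $n^{-2\eps}$. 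One then checks that $\tfrac{192C_3}{f}(1+8C_1C_3+32C_1^2C_3^2)$ times the small-divisor factor $\tfrac{2(k_2-k_1)n^2}{k_1k_2\pi^2}$ stays below $C_3$. This is where $f=640(k_2-k_1)/(k_1k_2)$ and $C_1C_3=1/128$ are forced.

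Since $R_m\le C_1(fn)^m$, the estimates give convergence of the series for $\eps^2<1/(fn)$, which is Theorem~\ref{thm:main}.
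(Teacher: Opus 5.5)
Your outline follows the paper's architecture: the base case from Propositions~\ref{prop:Q2}/\ref{prop:Q2-R2}, induction on \eqref{eq:order2k3}, projection on $u^{(1)}$ for (i) and on $u^{(j)}$ for (ii)--(vii), and summation via Proposition~\ref{prop:W} and \eqref{eq:arith1}--\eqref{eq:arith3}. The gap is in how you close (ii).

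You handle the frequency cross terms $\omega^{\ord{2p}}\omega^{\ord{2q}}\ddot Q^{\ord{2r}}$ only in (i), where you treat them as $C_2^2R_pR_q/n^4$-small. Your constant check for (ii) then uses the cubic contribution alone.

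In the projection on $u^{(1)}$ this is legitimate. There $Q^{\ord{2r}}$ with $r\ge1$ has no $u^{(1)}$ first-harmonic component, so apart from the unknown $2\omega^{\ord0}\omega^{\ord{2k+2}}$ only terms with $r=0$ and $p,q\ge1$ survive.

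In the projection on $u^{(j)}$ with $j\in D_1$, the terms with $p=0$ also survive. They are $2\omega^{\ord0}\omega^{\ord{2q}}c_{2r,1,j}$ with $q,r\ge1$ and $q+r=k+1$. Since $\omega^{\ord0}\approx\sqrt{2k_2}$ is of order one, these carry only a single factor $C_2R_q/n^2$. You then divide by divisors as small as $\Theta(1/n^2)$ and apply the inductive (ii) and \eqref{eq:arith1}. The result is a contribution of order $C_1C_2C_3R_{k+1}$, the same order as the target $C_3R_{k+1}$.

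This is the second term of \eqref{eq:ii-proj}. The paper bounds it by $\frac{32\sqrt{2k_2}C_1C_2C_3(k_2-k_1)}{k_1k_2\pi^2}R_{k+1}=\frac{8}{5\pi^2}C_3R_{k+1}$. Then \eqref{eq:ii-close} needs the \emph{sum} of this term and the cubic contribution to stay below $C_3R_{k+1}$. Checking the cubic constant alone, as you propose, does not prove (ii).

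The same omission hides the real role of $C_1C_2$. It is not just a base-case normalization. It is bounded from below by the base case and by \eqref{eq:i-close}, where $C_2$ must dominate a cubic contribution of size $C_3/f$. It is bounded from above by (ii), through the term just described. The value $\frac{k_1k_2}{20\sqrt{2k_2}(k_2-k_1)}$ is chosen to lie in this window. The induction closes only because both constraints hold together with $C_1C_3=1/128$ and the stated $f$.

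A smaller point concerns your plan to control the $O(k^3/n^2)$ error of Lemma~\ref{lem:M-vs-hadamard} by splitting the projection index $j$ at $n^{1-\eps}$. This does not achieve what you want. In $\mathcal W$ the summation indices $j_1,j_2,j_3$ also range over all of $D_1$. Lemma~\ref{lem:M-vs-hadamard} needs all four indices $\ll n$, and Lemma~\ref{lem:M-dichotomy}(4) needs $\max\{j_1,j_2,j_3\}\ll n^{1-\eps}$. The paper does not track this error at all. It uses the uniform bound $32n$ of Lemma~\ref{lem:M-dichotomy}(1), whose slack over $16n$ absorbs the error, and you should do the same rather than rely on the split.
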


\begin{proof}
\emph{Base case $m=1$.} From \eqref{eq:omega2-middle} (resp.\ \eqref{eq:omega2-R2}),
$|\omega^{\ord{2}}|\approx9/(\sqrt{2k_2}n)<C_2R_1/n^2$ since $R_1=C_1fn/3$ and
$C_1C_2=k_1k_2/(20\sqrt{2k_2}(k_2-k_1))$. From Proposition~\ref{prop:Q2} (resp.\
\ref{prop:Q2-R2}), $\sum_{j\in D_1}|c_{2,1,j}|\approx 3(k_2-k_1)n/(2k_1k_2\pi^2)<C_3R_1$, and the
out-of-window and higher-harmonic bounds (iii)--(vii) hold for $C_4$ chosen large (still
$\Theta(1)$). This establishes the base case.

\emph{Induction step.} Assume (i)--(vii) for $m=1,\dots,k$. Collecting the $\Theta(\eps^{2k+3})$
terms in \eqref{eq:fput-rescaled},
\begin{equation}\label{eq:order2k3}
\begin{aligned}
\Big((\omega^{\ord{0}})^2\frac{\dd^2}{\dd\tau^2}-L\Big)Q^{\ord{2k+2}}(\tau)
&+\!\!\sum_{(r,s,t)\in \mathbb{E}_2^2(k)}\!\!\omega^{\ord{2r}}\omega^{\ord{2s}}
\frac{\dd^2}{\dd\tau^2}Q^{\ord{2t}}(\tau)\\
&=\!\!\sum_{(a_1,a_2,a_3)\in \mathbb{E}_1^0(k)}\!\!M\!\big(Q^{\ord{2a_1}},Q^{\ord{2a_2}},Q^{\ord{2a_3}}\big).
\end{aligned}
\end{equation}

\emph{(i): frequency bound.} Projecting onto $\spanop\{u^{(1)}(e^{i\tau}+e^{-i\tau})\}$,
\begin{equation}\label{eq:omega-proj}
\begin{aligned}
2|\omega^{\ord{0}}\omega^{\ord{2k+2}}|
\le\!\!\sum_{(a_1,a_2,a_3)\in \mathbb{E}_1^0(k)}\sum_{(j_i,l_i)\in S_{a_i}}
&\frac{4\bigl|\ip{P(a_1,a_2,a_3,l_1,l_2,l_3,j_1,j_2,j_3)}{u^{(1)}}\bigr|}{\abs{u^{(1)}}}\\
&+\!\!\sum_{(r,s)\in \mathbb{E}_3^{0*}(k)}\!\!|\omega^{\ord{2r}}\omega^{\ord{2s}}|.
\end{aligned}
\end{equation}
The first sum is bounded by Proposition~\ref{prop:W}\,(2). The second is controlled by the
inductive (i):
\begin{equation}\label{eq:omega-omega}
\sum_{(r,s)\in \mathbb{E}_3^{0*}(k)}|\omega^{\ord{2r}}\omega^{\ord{2s}}|
\le\sum_{(r,s)\in \mathbb{E}_3^{1*}(k)}\frac{2C_1^2C_2^2 f^{k+1}n^{k-3}}{(r^2+2)(s^2+2)}
<\frac{32C_1C_2^2 R_{k+1}}{n^4}\ll\frac{R_{k+1}}{n^2}.
\end{equation}
Combining, and using $\omega^{\ord{0}}=\sqrt{2k_2}+O(1/n^2)$, the relations $C_1C_3=1/128$
and $C_1C_2=k_1k_2/(20\sqrt{2k_2}(k_2-k_1))$ give
\begin{equation}\label{eq:i-close}
|\omega^{\ord{2k+2}}|
<\frac{C_2R_{k+1}}{n^2}\cdot\frac{24\cdot16\,C_1C_3}{(C_1C_2)\,f\sqrt{2k_2}}\,\frac{545}{512}(1+\eps)
<\frac{C_2R_{k+1}}{n^2},
\end{equation}
which closes (i) at order $k+1$ (the numerical constant $\tfrac{3\cdot545}{32\cdot512}<1$).

\emph{(ii): first-harmonic sum on $D_1$.} Projecting \eqref{eq:order2k3} onto
$\spanop\{u^{(j)}(e^{i\tau}+e^{-i\tau})\}$ and summing over $j\in D_1$,
\begin{equation}\label{eq:ii-proj}
\begin{aligned}
\sum_{j\in D_1}|c_{2k+2,1,j}|
&<4\!\!\sum_{(a_1,a_2,a_3)\in \mathbb{E}_1^0(k)}\sum_{s_1,s_2,s_3}\sum_{j\in D_1}
\frac{\mathcal{W}(a_1,a_2,a_3,s_1,s_2,s_3,j)}{|(\omega^{(j)})^2-(\omega^{(1)})^2|}\\
&\quad+\!\!\sum_{(r,s,t)\in \mathbb{E}_2^2(k)}\!\!\frac{|\omega^{\ord{2r}}\omega^{\ord{2s}}|
\sum_{j\in D_1}|c_{2t,1,j}|}{|(\omega^{(j)})^2-(\omega^{(1)})^2|}.
\end{aligned}
\end{equation}
For the first term, the divisor estimates
$(\omega^{(j)})^2-(\omega^{(1)})^2>k_1k_2\pi^2/(4(k_2-k_1)n^{2\eps})$ for $2\le j\le n^{1-\eps}$ and
$(\omega^{(j)})^2-(\omega^{(1)})^2\approx k_1k_2(j^2-1)\pi^2/(2(k_2-k_1)(n-1)^2)$ for
$n^{1-\eps}<j<n_1$ combine with Proposition~\ref{prop:W} to give a bound
$<\tfrac{\pi^2}{6}\tfrac{2(k_2-k_1)}{k_1k_2\pi^2}R_{k+1}\tfrac{12\cdot16C_3}{f}(1+8C_1C_3+32C_1^2C_3^2)(1+\eps)$.
The second term is bounded, via (i) and the inductive (ii), by
$<R_{k+1}\,\tfrac{32\sqrt{2k_2}C_1C_2C_3(k_2-k_1)}{k_1k_2\pi^2}(1+\eps)$. Adding,
\begin{equation}\label{eq:ii-close}
\sum_{j\in D_1}|c_{2k+2,1,j}|
<C_3 R_{k+1}\Big(\tfrac{8}{5\pi^2}+\tfrac{2\cdot545}{5\cdot512}\Big)(1+\eps)<C_3R_{k+1},
\end{equation}
closing (ii).

\emph{(iii): first-harmonic sum on $D_2$.} Projecting and summing over $j\in D_2$, the divisor is
now $\Theta(n)$ (the right-boundary block), so
\begin{equation}\label{eq:iii-proj}
\sum_{j\in D_2}|c_{2k+2,1,j}|
<4\!\!\sum_{\mathbb{E}_1^0(k)}\sum_{s_i}\sum_{j\in D_2}\frac{\mathcal{W}}{|(\omega^{(j)})^2-(\omega^{(1)})^2|}
+\!\!\sum_{\mathbb{E}_2^2(k)}\frac{|\omega^{\ord{2r}}\omega^{\ord{2s}}|\sum_{j\in D_2}|c_{2t,1,j}|}
{|(\omega^{(j)})^2-(\omega^{(1)})^2|}.
\end{equation}
Using $|(\omega^{(j)})^2-(\omega^{(1)})^2|\gtrsim k_2-k_1$ on $D_2$ together with
Proposition~\ref{prop:W} and inductive (iii), both terms are $\ll R_{k+1}/n$, so for $C_4$ large
$\sum_{j\in D_2}|c_{2k+2,1,j}|<C_4R_{k+1}/n$, closing (iii).

\emph{(iv)--(vii).} These are estimated exactly as (iii): the windows $D_3$ and the higher
harmonics $l\ge1$ carry the additional small factors $n^{-3/2}$ and $(2l+1)^{-2}$ respectively,
which are inherited from the divisor $(\omega^{(j)})^2-(2l+1)^2(\omega^{(1)})^2=\Theta((2l+1)^2)$
and the out-of-window decay; choosing $C_4$ sufficiently large (independent of $n$) closes them.
This completes the induction.
\end{proof}

\subsection{Proof of Theorem~\ref{thm:main}}
By Lemma~\ref{lem:induction}, the norm of the even-order correction is controlled by the
coefficient sums. Writing $S_m=\{(j,l):j,l\ge0\}$ for the active index set at order $m$,
\begin{equation}\label{eq:Qnorm}
\begin{aligned}
\norm{Q^{\ord{2m}}(\tau)}^2=\frac1\pi\sum_{(j,l)\in S_m}|c_{2m,2l+1,j}|^2
<\frac1\pi\Big(&\sum_{\lambda=1}^{3}\sum_{j\in D_\lambda}|c_{2m,1,j}|
+\sum_{l=1}^{m}\sum_{j=1}^{n_1}|c_{2m,2l+1,j}|\\
&+\sum_{\lambda=2}^{3}\sum_{l=1}^{m}\sum_{j\in D_\lambda}|c_{2m,2l+1,j}|\Big)^2.
\end{aligned}
\end{equation}
Inserting bounds (ii)--(vii) and summing the convergent series $\sum_l(2l+1)^{-2}$ yields
\begin{equation}\label{eq:Qnorm-bound}
\norm{Q^{\ord{2m}}(\tau)}^2<C_3^2R_m^2 .
\end{equation}
Since $R_m=C_1 f^m n^m/(m^2+2)$,
\begin{equation}\label{eq:ratio}
\lim_{m\to\infty}\frac{R_{m+1}}{R_m}=fn,
\end{equation}
so the series \eqref{eq:expansion} for $Q$ converges for $\eps<1/\sqrt{fn}$. The frequency bound
(i) gives the same radius for the $\omega$-series. This proves convergence on $0<\eps^2<1/(fn)$,
i.e.\ $\eps=\Theta(1/\sqrt n)$, with $f=640(k_2-k_1)/(k_1k_2)$ in both regimes (the
regime-dependent prefactors enter only the second-order coefficients of
Sections~\ref{sec:middle}--\ref{sec:boundary}). The constant $f$ is deliberately conservative: it
comes from the uniform bound $\abs{\ip{M}{\cdot}}\le32n$ used in Lemma~\ref{lem:M-dichotomy}(1) in
place of the asymptotically tight $16n=8\cdot2n$, so that the lower-order $O(k^3/n^2)$ correction in
\eqref{eq:M-vs-hadamard} is absorbed uniformly without being tracked; retaining that correction
would sharpen $f$ to $320(k_2-k_1)/(k_1k_2)$ and enlarge the radius by a factor $\sqrt2$, which does
not affect the order $\eps=\Theta(1/\sqrt n)$. The monotone downward frequency drift is given by
\eqref{eq:omega2-middle} for regime R1 and by \eqref{eq:omega2-R2} for regime R2, and the
localization growth before band crossing is described in Remarks~\ref{rem:p-middle}
and~\ref{rem:p-boundary}. \qed



\begin{thebibliography}{99}\small



\bibitem{MacKayAubry} R.~S.~MacKay, S.~Aubry, \emph{Proof of existence of breathers for
time-reversible or Hamiltonian networks of weakly coupled oscillators}, Nonlinearity
\textbf{7} (1994), 1623--1643.




\bibitem{FlachGorbach} S.~Flach, A.~V.~Gorbach, \emph{Discrete breathers: advances in theory and
applications}, Phys.\ Rep.\ \textbf{467} (2008), 1--116.



\bibitem{JamesNoble2004} G.~James, P.~Noble, \emph{Breathers on diatomic Fermi--Pasta--Ulam
lattices}, Physica D \textbf{196} (2004), 124--171.





\bibitem{Vainchtein2022} A.~Vainchtein, \emph{Solitary waves in FPU-type lattices}, Physica D
\textbf{434} (2022), 133252.










\bibitem{HasanKane} M.~Z.~Hasan, C.~L.~Kane, \emph{Colloquium: Topological insulators}, Rev.\
Mod.\ Phys.\ \textbf{82} (2010), 3045--3067.

\bibitem{SSH} W.~P.~Su, J.~R.~Schrieffer, A.~J.~Heeger, \emph{Solitons in polyacetylene}, Phys.\
Rev.\ Lett.\ \textbf{42} (1979), 1698--1701.

\bibitem{WangBertoldi} P.~Wang, L.~Lu, K.~Bertoldi, \emph{Topological phononic crystal with
one-way elastic edge waves}, Phys.\ Rev.\ Lett.\ \textbf{115} (2015), 104302.

\bibitem{SmirnovaNonlinear} D.~Smirnova, D.~Leykam, Y.~Chong, Y.~Kivshar, \emph{Nonlinear
topological photonics}, Appl.\ Phys.\ Rev.\ \textbf{7} (2020), 021306.

\bibitem{ChaunsaliStrong} R.~Chaunsali, H.~Xu, J.~Yang, P.~G.~Kevrekidis, G.~Theocharis,
\emph{Stability of topological edge states under strong nonlinear effects}, Phys.\ Rev.\ B
\textbf{103} (2021), 024106.




\bibitem{LeykamChong} D.~Leykam, Y.~D.~Chong, \emph{Edge solitons in nonlinear-photonic
topological insulators}, Phys.\ Rev.\ Lett.\ \textbf{117} (2016), 143901.















\bibitem{HadadKhanikaev} Y.~Hadad, A.~B.~Khanikaev, A.~Al\`u, \emph{Self-induced topological
transitions and edge states supported by nonlinear staggered potentials}, Phys.\ Rev.\ B
\textbf{93} (2016), 155112.


\bibitem{SongXuJPA} H.~Song, H.~Xu, \emph{Analytical estimations of edge states and extended
states in large finite-size lattices}, J.\ Phys.\ A: Math.\ Theor.\ \textbf{59} (2026), 095701.




\bibitem{SongXuArxiv} H.~Song, H.~Xu, \emph{Emergence of purely nonlinear localized states with
frequencies exited from spectral bands}, arXiv:2303.05247.




\bibitem{Kielhofer} H.~Kielh\"ofer, \emph{Bifurcation Theory: An Introduction with Applications to
Partial Differential Equations}, 2nd ed., Springer, New York, 2012.














\bibitem{Kuznetsov} Y.~A.~Kuznetsov, \emph{Elements of Applied Bifurcation Theory}, 4th ed.,
Springer, Cham, 2023.




\bibitem{FPUT} E.~Fermi, J.~Pasta, S.~Ulam, \emph{Studies of nonlinear problems. I.}, in
\emph{The Many-Body Problem} (D.~C.~Mattis, ed.), World Scientific, Singapore, 1993.




\bibitem{HornJohnson} R.~A.~Horn, C.~R.~Johnson, \emph{Matrix Analysis}, 2nd ed., Cambridge
University Press, Cambridge, 2012.




\bibitem{Parlett} B.~N.~Parlett, \emph{The Symmetric Eigenvalue Problem}, SIAM, Philadelphia,
1998.








\end{thebibliography}
\end{document}